\def\final{0}
\definecolor{DarkGreen}{rgb}{0.1,0.5,0.1}
\definecolor{DarkRed}{rgb}{0.5,0.1,0.1}
\definecolor{DarkBlue}{rgb}{0.1,0.1,0.5}
\algnewcommand\algorithmicinput{\textbf{Input:}}
 \algnewcommand\INPUT{\item[\algorithmicinput]}
 \algnewcommand\algorithmicoutput{\textbf{Output:}}
 \algnewcommand\OUTPUT{\item[\algorithmicoutput]}
\newcommand{\Prob}{\mathbb{P}}
\newcommand{\R}{\mathbb{R}}
\newcommand{\G}{\mathcal{G}}
\newcommand{\T}{\mathcal{T}}
\newcommand{\Oh}{\mathcal{O}}
\newcommand{\pttc}{\texttt{PTTC}\xspace}
\newcommand{\rselect}{\texttt{R-SELECT}\xspace}
\newcommand{\cX}{\mathcal{X}}
\newcommand{\cR}{\mathcal{R}}
\newcommand{\cM}{\mathcal{M}}
\newcommand{\bbx}{\mathbf{x}}
\newtheorem{theorem}{Theorem}[section]
\newtheorem{lemma}[theorem]{Lemma}
\newtheorem{claim}[theorem]{Claim}
\newtheorem{corollary}[theorem]{Corollary}
\theoremstyle{definition}
\newtheorem{definition}[theorem]{Definition}
\begin{document}
\ifnum\final=1
\markboth{Sampath Kannan, Jamie Morgenstern, Ryan Rogers, and Aaron
  Roth.}{Private Pareto Optimal Exchange}
\fi

\title{Private Pareto Optimal Exchange}

\ifnum\final=0
\title{Private Pareto Optimal Exchange\thanks{Kannan was partially
    supported by NSF grant NRI-1317788. email: kannan@cis.upenn.edu.
    Morgenstern was partially supported by NSF grants CCF-1116892 and
    CCF-1101215, as well as a Simons Award for Graduate Students in
    Theoretical Computer Science.  Contact information:
    J. Morgenstern, Computer Science Department, Carnegie Mellon
    University, \texttt{jamiemmt@cs.cmu.edu}. Roth was partially
    supported by an NSF CAREER award, NSF Grants CCF-1101389 and
    CNS-1065060, and a Google Focused Research Award. Email:
    \texttt{aaroth@cis.upenn.edu}.}}

\author[1]{Sampath Kannan}
\author[2]{Jamie Morgenstern}
\author[1]{Ryan Rogers}
\author[1]{Aaron Roth}

\affil[1]{Computer Science Department\\
The University of Pennsylvania}
\affil[2]{Computer Science Department\\
Carnegie Mellon University
}
\maketitle
\else
\author{SAMPATH KANNAN
\affil{University of Pennsylvania}
JAMIE MORGENSTERN
\affil{Carnegie Mellon University}
RYAN ROGERS
\affil{University of Pennsylvania}
AARON ROTH
\affil{University of Pennsylvania}}
\fi

\begin{abstract}
  We consider the problem of implementing an individually rational,
  asymptotically Pareto optimal allocation in a barter-exchange
  economy where agents are endowed with goods and preferences over the
  goods of others, but may not use money as a medium of
  exchange. Because one of the most important instantiations of such
  economies is kidney exchange -- where the ``input'' to the problem
  consists of sensitive patient medical records -- we ask to what
  extent such exchanges can be carried out while providing formal
  privacy guarantees to the participants. We show that individually
  rational allocations cannot achieve any non-trivial approximation to
  Pareto optimality if carried out under the constraint of
  differential privacy -- or even the relaxation of
  \emph{joint}-differential privacy, under which it is known that
  asymptotically optimal allocations can be computed in two sided
  markets [Hsu et al. STOC 2014]. We therefore consider a further
  relaxation that we call \emph{marginal}-differential privacy --
  which promises, informally, that the privacy of every agent $i$ is
  protected from every other agent $j \neq i$ so long as $j$ does not
  collude or share allocation information with other agents.  We show
  that under marginal differential privacy, it is possible to compute
  an individually rational and asymptotically Pareto optimal
  allocation in such exchange economies.
\end{abstract}


%
%


\ifnum\final=0
\vfill
\newpage
\tableofcontents
\vfill
\newpage
\else
\maketitle
\fi


\section{Introduction}
Consider the following exchange problem: for each $i\in [n]$, agent
$i$ arrives at a market endowed with a good of type $g_i$ from among a
finite collection of types of goods $\mathcal{G}$, as well as a
\emph{preference over types of goods}, represented by a total ordering
$\succ_i$ over $\mathcal{G}$. In settings where money can be used as
the medium of transaction (and for which people have cardinal
preferences over goods), this would represent an exchange economy for
which we could compute market clearing prices, resulting in a Pareto
optimal allocation. However, in some settings -- most notably markets
for kidney exchanges\footnote{Kidney exchange forms one of the most
  notable ``barter'' markets, but it is not the only example. A number
  of startups such as ``TradeYa'' and ``BarterQuest'' act as market
  makers for barter exchange of consumer goods.} \cite{RSU05} -- the
use of money is not permitted, and agents are limited to participating
in an exchange -- essentially a permutation of the goods amongst the
agents, with no additional payments. In such settings, we may still
require that:

\begin{enumerate}
\item The exchange be \emph{individually rational} -- i.e. every agent
  weakly prefers the good that they receive to the good that they were
  endowed with, and
\item the exchange be (approximately) \emph{Pareto optimal} -- that
  there should not be any other permutation of the goods that some
  agent (or, in the approximate case, many agents) strictly prefers,
  unless there exists another agent who strictly prefers the original
  allocation.
\end{enumerate}
Because one of the key applications of efficient barter-exchange
involves computation on extremely sensitive medical data, this paper
investigates to what extent it can be accomplished while guaranteeing
a formal notion of \emph{privacy} to the agents participating in the
market, without compromising the above two desiderata of individual
rationality and (approximate) Pareto optimality. We wish to give
algorithms that protect the privacy of agents' initial endowment as
well as their preferences over goods from the other agents in the
market, using the tools of differential privacy.

In this respect, our paper continues a recent line of work exploring
the power of algorithms satisfying various privacy definitions
(relaxations of \emph{differential privacy}) in different kinds of
exchange problems. To understand this question, note that the
\emph{input} to an algorithm clearing an exchange economy is
partitioned amongst the $n$ agents (each agent reports her initial
endowment $g_i$ and her preference ordering $\succ_i$), as is the
output (the mechanism reports to each agent the type of good she will
receive, $g_i'$.) This allows us to parameterize privacy guarantees in
terms of adversaries who can see differing sets of outputs of the
mechanism. The standard notion of \emph{differential privacy},
requires that we guarantee privacy even against an adversary who can
see all $n$ outputs of the mechanism -- i.e. the type of good that is
allocated to each of the $n$ agents. It is intuitively clear that
nothing non-trivial can be done while guaranteeing differential
privacy in allocation problems -- ``good'' allocations must give
individuals what they want, and this is exactly what we must keep
private. An adversary who is able to see the allocation given to
agent $i$ by any mechanism which guarantees individual rationality
would immediately learn the relative ranking of the good that agent
$i$ was allocated compared to the good that he was endowed
with. However, this does not rule out the possibility of protecting
the privacy of agent $i$ against an adversary who can only see the
allocation of \emph{some} agents-- notably, not the allocation given
to agent $i$.

\subsection{Our Results}

The question of privately computing allocations where agents'
preferences are sensitive data was first studied by~\citet{Matching},
who showed that in two sided allocation problems (with distinguished
\emph{buyers} and \emph{sellers}) with monetary transfers, no
non-trivial allocation can be computed under the constraint of
differential privacy, when we must protect the privacy of the
buyers. However,~\citet{Matching} showed that near-optimal results can
be achieved under \emph{joint differential privacy}, which informally
requires that for every agent $i$ simultaneously, the joint
distribution on allocations given to agents $j \neq i$ be
differentially private in the data of agent $i$. This corresponds to
privacy against an adversary who can see the allocation of
\emph{all other agents} $j \neq i$, but who cannot observe agent
$i$'s own allocation when trying to violate $i$'s privacy.

The allocation problem we study in this paper is distinct from the two
sided problem studied in~\cite{Matching} in that there are no
distinguished buyers and sellers -- in our barter exchange problem,
every agent both provides a good and receives one, and so we must
protect the privacy of every agent in the market. We insist on
algorithms that always guarantee \emph{individually rational}
allocations, and ask how well they can approximate \emph{Pareto
  optimality}. (Informally, an allocation $\pi$ is
$\alpha$-approximately Pareto optimal if for every other allocation
$\pi'$ that is strictly preferred by an $\alpha$-fraction of agents,
there must be some other agent who strictly prefers $\pi$ to $\pi'$.) We
start by showing that even under the relaxed notion of \emph{joint}
differential privacy, no individually rational mechanism can achieve
any nontrivial approximation to Pareto optimality (and, since joint
differential privacy is a relaxation of differential privacy, neither
can any differentially private mechanism).
\begin{theorem}[](Informal)
{\it No $\epsilon$-jointly differentially private algorithm for the
exchange problem that guarantees individually rational allocations
can guarantee with high probability that the resulting allocation will
be $\alpha$-approximately Pareto optimal for:}
\[\alpha \leq 1 - \frac{e^\epsilon}{e^\epsilon +1}\]
\end{theorem}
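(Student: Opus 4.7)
The plan is to build two neighboring profiles $D$ and $D'$ on $n$ agents, differing only in agent $1$'s preference list, that force incompatible individually rational (IR) allocations, and then to contradict joint differential privacy (JDP). In $D$, every agent $i \in [n]$ is endowed with the distinct good $g_i$ and holds the tight cyclic preference $g_{i-1} \succ_i g_i \succ_i (\text{everything else})$, with indices taken modulo $n$. In $D'$, only agent $1$'s list changes, to $g_1 \succ_1 (\text{everything else})$, so she now finds only her own endowment acceptable; every other agent's data is unchanged, so $D$ and $D'$ are neighbors.

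\textbf{Structural consequences.} First I would show that on $D$ the tight cyclic preferences force every IR permutation $\pi$ to satisfy $\pi(i) \in \{g_{i-1}, g_i\}$ for every $i$. A short consistency check then shows that only two IR allocations exist: the identity $\pi_{\mathrm{id}}$ and the full cyclic shift $\pi^\star$ in which agent $i$ receives her top choice $g_{i-1}$. Because $\pi^\star$ gives every agent her top-ranked good, it is Pareto optimal, while $\pi_{\mathrm{id}}$ is strictly dominated by $\pi^\star$ for all $n$ agents and is therefore not $\alpha$-approximately Pareto optimal for any $\alpha<1$. On $D'$, agent $1$'s preference pins her to $g_1$; this removes $g_1$ from agent $2$'s (size-two) IR set, pinning her to $g_2$; iterating the cascade pins every agent to her own endowment, so the unique IR allocation on $D'$ is the identity.

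\textbf{Privacy contradiction and main obstacle.} Let $E$ be the event ``the allocation restricted to $\{2,\ldots,n\}$ equals the cyclic-shift pattern of $\pi^\star$.'' By the structural step, success (i.e., outputting an $\alpha$-approximately Pareto optimal IR allocation on $D$) requires $E$, while $\Pr_{D'}[E]=0$ because the only IR allocation on $D'$ restricts to the identity on $\{2,\ldots,n\}$. Joint $\epsilon$-differential privacy applied to the neighbors $D, D'$ then gives $\Pr_D[E] \leq e^\epsilon \Pr_{D'}[E] = 0$, so no IR JDP mechanism can succeed on $D$ with positive probability; in particular, the algorithm fails $\alpha$-approximate Pareto optimality on $D$ for every $\alpha<1$, a fortiori for $\alpha \leq 1 - e^\epsilon/(e^\epsilon+1)$. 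To squeeze out the exact numerical constant in the statement one can combine this one-sided bound with its reverse via the standard two-point DP calculation: writing $p$ and $q$ for the two success probabilities, the symmetric pair $p \leq e^\epsilon(1-q)$ and $q \leq e^\epsilon(1-p)$ forces $p+q \leq 2e^\epsilon/(e^\epsilon+1)$, so the failure probability on at least one instance is at least $1/(e^\epsilon+1)$. The main obstacle is engineering the cascade: one needs each agent's IR set to be minimally small (of size exactly two) so that a single preference change rigidly determines the allocation of everyone else, and the cyclic preferences with a single ``breaker'' give precisely this maximally-restrictive structure.
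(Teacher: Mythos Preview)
Your argument is correct and actually proves something \emph{stronger} than the stated theorem for pure $\epsilon$-JDP: on your instance $D$, success requires the event $E$, while $\Pr_{D'}[E]=0$ forces $\Pr_D[E]\le e^\epsilon\cdot 0=0$, so no $\alpha<1$ is achievable. Your closing paragraph invoking the two-point inequality $p\le e^\epsilon(1-q)$, $q\le e^\epsilon(1-p)$ is therefore superfluous (and slightly muddled): you have already driven one side to zero, so there is nothing left to ``squeeze.''

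Your route, however, is genuinely different from the paper's. The paper builds a market with only \emph{two} good types and $2n$ agents, plants a single distinguished agent chosen uniformly at random among $n$ identical ones, and reduces to the standard ``no DP mechanism can guess a private bit'' claim; the $e^\epsilon/(e^\epsilon+1)$ constant then arises from that bit-reconstruction bound combined with the averaging over the random position of the distinguished agent. Your construction instead uses $k=n$ distinct goods in a rigid cycle---in fact essentially the same gadget the paper later deploys for its \emph{marginal} DP lower bound---and exploits that under JDP the coalition sees the entire rest of the cycle, which makes the argument collapse to a one-line contradiction. What the paper's approach buys is (i) a lower bound that holds already at $k=2$, separating the hardness from any dependence on the number of good types, and (ii) a version that degrades gracefully to $(\epsilon,\delta)$-JDP, where your argument would only give $\Pr_D[E]\le\delta$. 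What your approach buys is a shorter, sharper proof (no averaging, no symmetry argument) and, for pure JDP, a strictly stronger conclusion.
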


Given this impossibility result, we consider a further relaxation of
differential privacy, which we call \emph{marginal differential
  privacy}. In contrast to joint differential privacy, marginal
differential privacy requires that, simultaneously for every pair of
agents $i \neq j$, the \emph{marginal} distribution on agent $j$'s
allocation be differentially private in agent $i$'s data. This
corresponds to privacy from an adversary who has the ability only to
look at a single other agent's allocation (equivalently -- privacy
from the other agents, assuming they do not collude). Our main result
is a marginally-differentially private algorithm that simultaneously
guarantees individually rational and approximately Pareto optimal
allocations, showing a separation between marginal and joint
differential privacy for the exchange problem:
\begin{theorem}[](Informal)
 There exists an $\epsilon$-marginally differentially private
  algorithm that solves the exchange problem with $n$ agents and $k =
  |\mathcal{G}|$ types of goods by producing an allocation which is
  individually rational and, with high probability,
  $\alpha$-approximately Pareto optimal for
\[\alpha = O\left(\frac{\mathrm{poly}(k)}{\epsilon n}\right)\]
\end{theorem}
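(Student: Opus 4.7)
The plan is to design a privacy-aware variant of the classical Top Trading Cycles algorithm, which I will call $\pttc$. The key observation enabling privacy is that, with only $k = |\G|$ types of goods but $n \gg k$ agents, we can base all algorithmic decisions on \emph{aggregate counts} at the level of types, which are low-sensitivity statistics that tolerate Laplace noise of magnitude $\mathrm{poly}(k)/\epsilon$ while affecting only an $\mathrm{poly}(k)/(\epsilon n)$ fraction of agents. Concretely, $\pttc$ proceeds in at most $O(k)$ rounds. In each round, for every currently ``active'' type $g$ (one with positive remaining supply), we estimate via the exponential mechanism (or report-noisy-max) the plurality most-preferred active type among the agents still holding $g$, producing a pointer $g \to g^{\ast}$. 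This induces a functional digraph on active types in which every node has out-degree one, so a directed cycle must exist; we identify one, and along each edge $g \to g'$ of the cycle we reassign some agents endowed with $g$ to receive the good $g'$, choosing only agents whose individual top-choice-among-active-types is genuinely $g'$ so that individual rationality is preserved for them. Exhausted types are removed and the process iterates until no cycle remains; any leftover agents keep their endowments.

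For privacy, each round touches the data only through $O(k)$ noisy plurality counts, each of which has sensitivity one in any single agent's report. By Laplace/exponential-mechanism guarantees plus advanced composition across the $O(k)$ rounds, the entire transcript of public noisy counts is $\epsilon$-differentially private in any one agent $i$'s preferences and endowment. For any other agent $j$, the good that $j$ receives is a deterministic post-processing of (i) this public transcript and (ii) $j$'s own reported data, so the \emph{marginal} distribution of $j$'s allocation is $\epsilon$-DP in the data of $i \neq j$. Crucially, the joint distribution of all other agents' allocations would reveal correlated information that prevents joint DP (consistent with the impossibility result), but marginal DP is obtained essentially for free from the aggregate-count architecture.

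Individual rationality holds by construction: an agent is moved off her endowment only when the cycle edge she is assigned to delivers a type she strictly (or weakly) prefers. The technical heart of the proof is approximate Pareto optimality, which I would establish by contrapositive. Suppose some allocation $\pi'$ strictly improves over the $\pttc$ output for an $\alpha$-fraction of agents without hurting anyone. Then one can extract from $\pi'$ a Pareto-improving ``trading structure'' among types; standard TTC reasoning shows this structure must contain, in some round of $\pttc$, a cycle on active types such that at least $\alpha n / \mathrm{poly}(k)$ agents sit on a single edge of that cycle. A Laplace/exponential-mechanism concentration bound then guarantees that, whenever $\alpha = \Omega(\mathrm{poly}(k)/(\epsilon n))$, $\pttc$'s noisy plurality estimates would have identified and executed a matching cycle in that round with high probability, contradicting the existence of $\pi'$.

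The main obstacle is this last step: tracking how a global Pareto-improvement in $\pi'$ decomposes into per-round, per-type local improvements as the set of active types shrinks, so that the union bound over $O(k)$ rounds and over the choices of improving cycles can be absorbed into the $\mathrm{poly}(k)$ factor. This demands a careful inductive accounting that reconciles three tensions simultaneously: the per-round noise budget from composition, the fact that $\pttc$ executes only one cycle per round (so genuine improvements may be deferred and must be re-detected later), and the need to keep the IR-respecting subset of agents along each executed edge large enough that the cycle actually ``processes'' a substantive number of agents. Once this bookkeeping is in place, choosing the Laplace scale as $O(k)/\epsilon$ and the number of rounds as $O(k)$ yields the claimed approximation $\alpha = O(\mathrm{poly}(k)/(\epsilon n))$.
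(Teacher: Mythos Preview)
Your high-level architecture---run TTC on the type graph using noisy aggregate counts---matches the paper's, but your privacy argument contains a genuine gap. You assert that agent $j$'s allocation is a deterministic post-processing of the public noisy transcript together with $j$'s own report, and conclude marginal DP from that. This is false. Once a cycle is chosen and the transcript fixes how many agents to move along an edge $(g,g')$, you must still decide \emph{which} specific agents holding $g$ and preferring $g'$ get to trade. Whether $j$ is among those selected depends on who else sits on that edge---i.e., on the reports of agents other than $j$---and this is not recoverable from aggregate counts plus $j$'s own data. Concretely, if agent $i\neq j$ switches her report so that she now competes (or no longer competes) with $j$ on the same edge, the identity of the selected agents shifts even when every noisy count in the transcript is held fixed. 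So the selection step leaks information about $i$ to $j$ beyond what the transcript reveals, and your post-processing claim does not go through.

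The paper confronts exactly this issue and does not try to reduce it to post-processing. Instead it (i) publishes noisy weights on \emph{all} arcs of the type graph (not just a plurality pointer), (ii) clears every cycle whose noisy weight exceeds a margin $E=\Theta(\mathrm{poly}(k)/\epsilon)$, and (iii) selects the traders on each edge via a randomized procedure ($\rselect$) whose key property is that a single agent's change can perturb the edge population by at most one, so the probability that $j$ is selected moves by at most a factor $e^{O(1/E)}$ per cycle. This per-cycle leakage is then composed over at most $k^3$ cycles using advanced composition, and combined with the DP guarantee on the noisy weights via a ``DP transcript $+$ per-agent DP selection $\Rightarrow$ marginal DP'' composition lemma. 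Your proposal needs an analogous explicit selection rule and a quantitative bound on how one agent's change in report perturbs another agent's selection probability; without it the marginal-DP claim is unsupported. (As a secondary point, the paper's PO proof is a direct accounting of how many agents any dominating allocation can improve---tracking leftover copies of each deleted type across rounds---rather than the contrapositive ``a large improvement forces a detectable cycle'' argument you sketch; your route may be workable but the bookkeeping you flag as the ``main obstacle'' is indeed where all the content lies.)
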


Note that the approximation to Pareto optimality depends polynomially
on the number of \emph{types} of goods in the market, but decreases
linearly in the number of participants in the market $n$. Hence,
fixing $k$, and letting the market size $n$ grow, this mechanism is
asymptotically Pareto optimal.

It is natural to ask whether this bound can be improved so that the
dependence on $k$ is only $\alpha =
O\left(\frac{\mathrm{polylog}(k)}{\epsilon n}\right)$, which is the
dependence on the number of distinct types of goods achieved in the
approximation to optimality in \cite{Matching} (again, under
\emph{joint} differential privacy, for a two-sided market). We show
that this is not the case.
\begin{theorem}[](Informal)
 For every $\epsilon$-marginally differentially
  private algorithm that on every instance of the exchange problem
  with $n$ agents and $k = |\mathcal{G}|$ types of goods, produces an
  individually rational allocation that with high probability is
  $\alpha$-approximately Pareto optimal, we have:
  $$\alpha = \Omega\left(\frac{k}{n}\left(1- \frac{e^\epsilon}{e^\epsilon+1} \right)\right)$$
\end{theorem}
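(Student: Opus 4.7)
The plan is to construct a pair of neighboring databases on which any individually rational algorithm is forced to behave nearly identically on all but one coordinate (by marginal differential privacy), but on which Pareto optimality demands drastically different allocations. The construction uses a single cycle of $k$ agents whose individually rational options are severely restricted by a ``chain reaction'' argument.

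First I would set up the construction. Let $n \geq k$ and designate $k$ \emph{cycle agents} $1, \ldots, k$, where cycle agent $i$ is endowed with a good of type $i$. The remaining $n - k$ agents will each hold an arbitrary good they rank strictly above all others (so they never trade and we may ignore them). In database $D$, cycle agent $i$ has preference $i{+}1 \succ_i i \succ_i (\text{all others})$, with $i{+}1$ taken mod $k$; the unique IR allocation that gives any agent a strict improvement is the cyclic shift $\pi^\star(i)=i{+}1$. In the neighboring database $D'$, we change only agent $1$'s preference so that $1 \succ_1 (\text{all others})$.

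Second I would prove a ``rigidity'' lemma: in $D'$, the unique IR allocation to the cycle agents assigns everyone their own good. Individual rationality forces $\pi(1) = 1$ in $D'$; then cycle agent $k$ prefers only good $1$ (now unavailable) over good $k$, so must receive $k$; by induction around the cycle, agent $i$ must receive good $i$. This is the obstacle of the argument: the preferences of the cycle agents must be tuned so that ``breaking'' one link of the cycle by changing a single agent propagates through the entire cycle under the IR constraint.

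Third I would apply marginal differential privacy. Since $D$ and $D'$ differ only in agent $1$'s report, for every $i \in \{2,\ldots,k\}$,
\[
\Pr_D[\pi(i) = i] \;\geq\; e^{-\epsilon}\,\Pr_{D'}[\pi(i)=i] \;=\; e^{-\epsilon},
\]
so the random variable $N := |\{i \in \{2,\ldots,k\}: \pi(i) = i\}|$ on $D$ satisfies $\Exp_D[N] \geq (k-1)e^{-\epsilon}$. I would then use the cyclic shift $\pi^\star$ as the witness for Pareto suboptimality: any cycle agent with $\pi(i) \neq i{+}1$ strictly prefers $\pi^\star$, and no agent (cycle or non-cycle) strictly prefers $\pi$ to $\pi^\star$, since cycle agents' only IR options are $i$ or $i{+}1$ and $\pi^\star$ assigns them $i{+}1$. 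Hence the event $\{N > \alpha n\}$ implies that $\pi$ is not $\alpha$-approximately Pareto optimal.

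Finally I would combine these ingredients. If the algorithm outputs an $\alpha$-PO allocation with probability at least $1-\beta$, then $\Pr_D[N > \alpha n] \leq \beta$, and since $N \leq k-1$,
\[
(k-1)e^{-\epsilon} \;\leq\; \Exp_D[N] \;\leq\; \alpha n + (k-1)\beta,
\]
yielding $\alpha \geq (k-1)(e^{-\epsilon} - \beta)/n$. Taking $\beta \leq e^{-\epsilon}/2$, this gives $\alpha = \Omega\!\bigl(k\,e^{-\epsilon}/n\bigr)$, and since $e^{-\epsilon} \geq 1/(e^\epsilon+1) = 1 - e^\epsilon/(e^\epsilon+1)$, we recover the claimed bound. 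The main subtlety throughout is verifying the chain-reaction lemma in $D'$ — everything else is a straightforward combination of marginal DP with a Markov-style averaging argument.
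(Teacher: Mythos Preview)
Your proof is correct, and the hard instance---a single $k$-cycle whose clearing hinges on one agent's preference, padded out with $n-k$ agents who rank their own good first---is essentially the paper's construction. Where you diverge is in the analysis. The paper runs the chain reaction \emph{inside $D$} as well as in $D'$: it observes that, by IR and supply counts, if one designated cycle agent receives her own good in $D$ then every cycle agent must; a single application of marginal differential privacy (packaged as the bit-recovery Claim) then lower-bounds the probability of this all-or-nothing event, directly producing $k$ simultaneously unhappy agents with probability at least $1/(e^\epsilon+1)$. You instead run the chain reaction only in $D'$ (where it is unconditional), apply marginal DP separately to each cycle agent $i\in\{2,\dots,k\}$ to obtain $\Pr_D[\pi(i)=i]\ge e^{-\epsilon}$, and aggregate via linearity of expectation and a Markov-type tail bound. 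The paper's route yields the stronger pointwise conclusion that \emph{all} $k$ cycle agents fail simultaneously with the stated probability, whereas yours only controls the expected number of failures; for the asymptotic $\Omega(k/n)$ lower bound on $\alpha$ both are equally effective, and your argument has the minor advantage of not needing the conditional chain reaction in $D$.
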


We also consider
exchange markets in which every agent brings exactly one good to the market, but
there are also a small number of \emph{extra} goods available to distribute
that are not brought to the market by any of the agents.  In the kidney exchange
setting, these can represent altruistic donors or non-living donors who
provide kidneys for transplantation, but do not need to receive a kidney in
return. The existence of extra goods intuitively makes the problem easier,
and we show that this is indeed the case. In this setting we use techniques
from \cite{Matching, HHRW14} to get an individually rational, asymptotically
exactly Pareto optimal allocation subject to \emph{joint differential
privacy} (i.e. circumventing our impossibility result above), under the
condition that the number of extra copies of each type of good is a (slowly)
growing function of the number of agents.


\subsection{Related Work}
Differential privacy, introduced by~\citet{DMNS06} has become a
standard ``privacy solution concept'' over the last decade, and has
spawned a vast literature too large to summarize. We here discuss only
the most closely related work.

Although the majority of the differential privacy literature has
considered numeric valued and continuous optimization problems, a
small early line of work including~\citet{NRS07}
and~\citet{GLMRT10} study combinatorial
optimization problems. The dearth of work in this area in large part
stems from the fact that many optimization problems cannot be
nontrivially solved under the constraint of differential privacy,
which requires that the entire output be insensitive to any
input. This problem was first observed by~\citet{GLMRT10} in the
context of set cover and vertex cover, who also noted that if the
notion of a solution is slightly relaxed to include private
``instructions'' which can be given to the agents, allowing them
(together with their own private data) to reconstruct a solution, then
more is possible. Similar ideas are also present in~\citet{MM09}, in
the context of recommendation systems.

\emph{Joint} differential privacy, which can be viewed as a
generalization of the ``instructions'' based solution
of~\citet{GLMRT10}, was formalized by~\citet{Large}, who showed that,
although correlated equilibria in large games could not be computed to
any nontrivial accuracy under differential privacy, they can be
computed quite accurately under joint differential privacy. A similar
result was shown by~\citet{RR14} for \emph{Nash} equilibria in
congestion games.~\citet{Matching} subsequently studied a two-sided
allocation problem, in which buyers with private valuation functions
over bundles of goods must be allocated goods to maximize social
welfare (note that here buyers are allocated goods, but do not provide
them, unlike the problem we study in this work). They also show that,
although the allocation problem cannot be solved to nontrivial
accuracy under differential privacy, it can be solved accurately (when
buyers preferences satisfy the \emph{gross substitutes} condition) under
joint differential privacy.

In the present paper, we continue the study of private allocation
problems, and consider the exchange problem in which $n$ agents both
supply and receive the goods to be allocated. This problem was first
studied by~\citet{TTC}, who also proposed the Top Trading Cycles
algorithm for solving it (attributing this algorithm to David
Gale). We show that this problem is strictly harder from a privacy
perspective than the two sided allocation problem: it cannot be solved
non-trivially even under joint differential privacy, but can be solved
under a weaker notion which we introduce, that of marginal
differential privacy. Our solution involves a privacy-preserving
modification of the top trading cycles algorithm. To the best of our
knowledge, we are the first to give \emph{marginal} differential
privacy a name and to demonstrate a separation from joint differential
privacy. The solution concept has, however, appeared in other works --
for example in~\citet{HM14}, in the context of privacy preserving and
incentive compatible recommendation systems.

\section{Model}\label{sec:model}
We study the setting of trading within an exchange market where there are
$k$ types of goods. We denote this set of types of goods
as $\G$. The set of agents will be denoted as $N$ where $|N| = n$. Each $i\in N$
has one copy of some type of good $g_i\in \G$, and some strict linear
preference $\succ_i$ over all good types in $\G$.  Let $N_j =  \{i : g_i =
j \}  $ and $n_j = |N_j|$ denote the set and number of agents, respectively, who bring good $j\in \G$ to the
market. Since each agent brings exactly one good to the market, we have
$\sum_{j\in \G} n_j = n$. An instance of an exchange market is
given as $\mathbf{x} = (x_i)_{i \in N}$ where each $x_i =
(g_i,\succ_i)$.  Our goal will be to find beneficial trades amongst
the agents in the market.

\begin{definition}[Allocation]
  An \emph{allocation} is a mapping $\pi:N \to \G$ where we have $|\{i\in N: \pi(i) = j
  \}| = n_j$ for each $j \in \G$.
\end{definition}

We say an algorithm is \emph{individually rational} if each agent is allocated a type of good she weakly prefers to her initial endowment. Formally:

\begin{definition}[IR]
  An allocation $\pi$ is Individually Rational (IR) if $\pi(i)
  \succeq_i g_i$ $\forall$ $i \in N$.
\end{definition}

IR alone does not ensure that high-quality solutions are found: in particular, one IR allocation rule is to make no trades, i.e. $\pi(i) = g_i$ for all $i \in [n]$. It does, however, ensure that no agent is worse off for having participated in the mechanism. Pareto optimality (PO), on the other hand, gives a way to describe the inherent quality of an allocation. An allocation is PO if it
cannot be changed to improve some agent $i$'s utility without harming
the utility of some other agent $j$. Under privacy constraints, it
will be impossible to obtain exact PO, and so we
 instead ask for an approximate version.

\begin{definition}[$\alpha$-PO]
  An allocation $\pi$ is $\alpha$-\emph{approximately Pareto optimal} (or just $\alpha$-PO) if for
  any other allocation $\pi'$, if there exists a set $S \subset N$
  with $|S|>\alpha n$ such that $\pi'(i) \succ_i \pi(i), \forall i \in
  S$, then there must be some $j \in N\backslash S$ such that
  $\pi(j) \succ_j \pi'(j)$. In other words, an allocation is
  $\alpha$-PO if strictly improving the
  allocation for more than an $\alpha$-fraction of agents necessarily requires
  strictly harming the allocation of at least $1$ agent.

  We say that an algorithm guarantees $\alpha$-PO if, on every exchange problem instance, it outputs an
  $\alpha$-PO allocation. If $\alpha =
  \alpha(n)$ is a function of the number of agents $n$, we say that an
  algorithm is asymptotically Pareto optimal if it guarantees
  $\alpha(n)$-PO, and $\alpha(n) = o(1)$.
\end{definition}

We wish to compute such allocations while guaranteeing a formal notion
of privacy to each of the participating agents. The notions of privacy we
consider will all be relaxations of \emph{differential privacy}, which
has become a standard privacy ``solution concept''. We borrow standard
notation from game theory: given a vector $\bbx \in \cX^n$, we write
$\bbx_{-i} \in \cX^{n-1}$ to denote the vector $\bbx$ with the $i$th
coordinate removed, and given $x'_i \in \cX$ we write $(\bbx_{-i},x'_i)
\in \cX^n$ to denote the vector $\bbx$ with its $i$th coordinate
replaced by $x'_i$.
\begin{definition}[Differential Privacy,~\citep{DMNS06}]
  A mechanism $M:\cX^n \to R$ satisfies $(\epsilon,\delta)$-
  differential privacy if for every $i \in [n]$, for any two types $x_i,x_i' \in \cX$, any
  tuple of types $\bbx_{-i} \in \cX^{n-1}$, and any $B \subseteq R$, we
  have
$$
\Prob\left( M(\bbx_{-i},x_i)\in B \right) \leq e^\epsilon \Prob\left( M(\bbx_{-i},x'_i) \in B  \right) + \delta
$$
\end{definition}

Here $R$ denotes an arbitrary range of the mechanism. The
definition of differential privacy assumes that the \emph{input} to
the mechanism $\bbx$ is explicitly partitioned amongst $n$ agents. In
the problem we consider, the sensitive data is the exchange market $\mathbf{x}$, so we simultaneously want to preserve the privacy of each agent, in the form of his initial endowment as well as his preferences over other goods.  Note that the input $\mathbf{x}$ is partitioned and the
range of the mechanism is also naturally partitioned between $n$
agents (the output of the mechanism can be viewed as $n$ messages, one
to each agent $i$, telling them the type of good they are receiving,
$\pi(i)$).  In such cases, we can consider relaxations of differential
privacy informally parameterized by the maximum size of collusion that
we are concerned about. Joint differential privacy, defined
by~\citet{Large}, asks that the mechanism simultaneously protect the
privacy of every agent $i$ from arbitrary collusions of up to $n-1$
agents $j \neq i$ (who can share their own allocations, but cannot see
the allocation of agent $i$):

\begin{definition}[Joint Differential Privacy,~\citep{Large}]
  A mechanism $M:\cX^n \to O^n$ satisfies $(\epsilon,\delta)$-joint
  differential privacy if for any agent $i \in [n]$, any two types
  $x_i,x_i' \in \cX$, any tuple of types $\bbx_{-i} \in \cX^{n-1}$, and
  any $B_{-i} \subseteq O^{n-1}$, we have
$$
\Prob\left( M(\bbx_{-i},x_i)_{-i} \in B_{-i} \right) \leq e^\epsilon \Prob\left( M(\bbx_{-i},x'_i))_{-i} \in B_{-i}  \right) + \delta
$$
\end{definition}

As we will show, it is not possible to find IR and
asymptotically Pareto optimal allocations under joint differential
privacy, and so we introduce a further relaxation which we call
marginal differential privacy. Informally, marginal differential
privacy requires that the mechanism simultaneously protect the privacy
of every agent $i$ from every other agent $j \neq i$, assuming that
they do not collude (i.e. it requires the differential privacy
condition only on the \emph{marginal} distribution of allocations to
other agents, not on the joint distribution).

\begin{definition}[Marginal Differential Privacy]
\label{def:mdp}
A mechanism $M: \cX^n \to O^n$ satisfies $(\epsilon,\delta)$-marginal differential privacy if $\forall i \neq j$, $\forall \bbx_{-i} \in \cX^{n-1}$, $\forall x_i,x_i' \in \cX$, and $\forall B \subset O$, we have
$$
\Prob(M(\bbx_{-i},x_i)_j \in B) \leq e^\epsilon \Prob(M(\bbx_{-i}, x_i')_j \in B) + \delta
$$
\end{definition}

\section{Lower Bounds}\label{SEC:LB}
In this section, we show lower bounds on how well the exchange market
problem can be solved subject to privacy constraints. We first show
that under the constraint of joint-differential privacy, there does
not exist any IR and asymptotically Pareto optimal
mechanism. This motivates our relaxation of studying exchange problems
subject to marginal differential privacy. We then show that under
marginal differential privacy, any mechanism producing IR and
$\alpha$-PO allocations must have $\alpha = \Omega(k/n)$, where $k$ is
the number of distinct types of goods (that is, a linear dependence on
$k$ is necessary). We complement these impossibility results in
Section~\ref{SEC:PTTC}, where we show that under marginal differential
privacy, it is indeed possible to achieve both IR and $\alpha$-PO
simultaneously, if the number of good types $k$ satisfies $k = o(n^{2/9})$.

Our impossibility result for joint differential privacy is based on a
reduction to the following well-known claim, which we prove in
Appendix \ref{section:lb_proofs}.

\begin{claim}\label{claim:bit}
  There is no $(\epsilon,\delta)$- differentially private mechanism $M: \{
  0,1\} \to \{ 0,1\}$ such that $\Prob(M(b) = b)>
  \frac{e^\epsilon+\delta}{e^\epsilon + 1}$ for both $b = 0, 1$.
\end{claim}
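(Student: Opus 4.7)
The plan is a short contradiction argument that just uses the definition of $(\epsilon,\delta)$-differential privacy with $B = \{0\}$ (or $\{1\}$). The key observation is that because the input and output space are both $\{0,1\}$, the mechanism is fully described by the two numbers $p_0 := \Prob(M(0) = 0)$ and $p_1 := \Prob(M(1) = 1)$, and differential privacy reduces to a small set of linear inequalities linking them.

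First I would suppose for contradiction that $p_0 > \frac{e^\epsilon + \delta}{e^\epsilon + 1}$ and $p_1 > \frac{e^\epsilon + \delta}{e^\epsilon + 1}$. Note the convenient identity
\[
1 - \frac{e^\epsilon + \delta}{e^\epsilon + 1} \;=\; \frac{1 - \delta}{e^\epsilon + 1},
\]
so the assumption on $p_1$ gives $\Prob(M(1) = 0) = 1 - p_1 < \frac{1-\delta}{e^\epsilon+1}$.

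Next I would apply the definition of $(\epsilon,\delta)$-differential privacy to the two neighboring inputs $b = 0$ and $b = 1$, with output set $B = \{0\}$. This yields
\[
p_0 \;=\; \Prob(M(0) = 0) \;\leq\; e^\epsilon \Prob(M(1) = 0) + \delta \;=\; e^\epsilon (1 - p_1) + \delta.
\]
Substituting the strict upper bound on $1 - p_1$ from the previous step gives
\[
p_0 \;<\; e^\epsilon \cdot \frac{1 - \delta}{e^\epsilon + 1} + \delta \;=\; \frac{e^\epsilon(1 - \delta) + \delta(e^\epsilon + 1)}{e^\epsilon + 1} \;=\; \frac{e^\epsilon + \delta}{e^\epsilon + 1},
\]
which contradicts the hypothesis that $p_0 > \frac{e^\epsilon + \delta}{e^\epsilon + 1}$. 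Hence no such mechanism exists.

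There is no real obstacle here: the whole argument is an algebraic manipulation once one realizes that the tight instantiation of differential privacy is the singleton $B = \{0\}$, and that the stated threshold $\frac{e^\epsilon + \delta}{e^\epsilon + 1}$ is exactly the fixed point of the map $q \mapsto e^\epsilon(1-q) + \delta$. The only mild subtlety is keeping the strict/non-strict inequalities straight, which is why the strict hypothesis $p_1 > \frac{e^\epsilon + \delta}{e^\epsilon + 1}$ propagates to a strict contradiction on $p_0$; this is also what shows the bound in the claim is tight, since the randomized response mechanism with $p_0 = p_1 = \frac{e^\epsilon + \delta}{e^\epsilon + 1}$ achieves equality and is $(\epsilon,\delta)$-differentially private.
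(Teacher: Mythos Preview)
Your proof is correct and follows essentially the same approach as the paper: assume both probabilities exceed the threshold, apply the $(\epsilon,\delta)$-differential privacy inequality with output set $\{0\}$ to bound $p_0$ by $e^\epsilon(1-p_1)+\delta$, and substitute the assumed lower bound on $p_1$ to derive a contradiction. Your version just spells out the algebra a bit more explicitly and adds the (correct) tightness remark about randomized response, which the paper omits.
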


In the setting of exchange markets, we let
$\mathcal{X} = \G \times \T$ where $\G$ is the set of good types and
$\T$ is the set of linear orderings over $\G$ for a single
agent. Now, we show that it will not be possible to guarantee
privacy, IR, and $o(1)$-PO (with constant $\epsilon$).

\begin{theorem}\label{thm:lowerbound}
  For any $\epsilon,\delta,\beta>0$, if we have an $(\epsilon,\delta)$-joint differentially private
  mechanism $M_J: \mathcal{X}^n \to \G^n$ that guarantees an
  $\alpha$-PO allocation with probability at
  least $1-\beta$ and always gives an IR allocation then
\[\alpha \geq 1- \frac{e^\epsilon+\delta}{(1-\beta)(e^\epsilon+1)} \]
\end{theorem}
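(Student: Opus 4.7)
The plan is to reduce to Claim~\ref{claim:bit} by constructing, for any candidate jointly differentially private mechanism $M_J$, a pair of instances $\bbx_0,\bbx_1$ differing in exactly one agent's type together with a bit-recovery test whose success probability on each of the two instances is at least $(1-\alpha)(1-\beta)$. Applying Claim~\ref{claim:bit} then gives $(1-\alpha)(1-\beta) \le (e^\epsilon+\delta)/(e^\epsilon+1)$, which rearranges to the stated inequality.

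For the hard instance I take $\bbx_0$ to consist of $n/2$ disjoint trading pairs over $n$ distinct good types (so $|\G|\ge n$): pair $i$ contains agents $2i-1$ and $2i$ endowed with $g_{2i-1}$ and $g_{2i}$, each ranking its partner's good strictly above its own good and its own good strictly above every other good (with ties broken arbitrarily). The third clause makes cross-pair trades IR-infeasible, so each pair is independently either at identity or swapped, and an identity pair is Pareto-dominated by its intra-pair swap (strictly improving both partners, harming nobody). Hence any $\alpha$-PO allocation has at most $\alpha n/2$ identity pairs, placing at least $(1-\alpha)n$ agents into their partner's good. Combining this with $\Prob(M_J(\bbx_0)\text{ is }\alpha\text{-PO})\ge 1-\beta$ and linearity of expectation yields $\sum_i \Prob(M_J(\bbx_0)_i = g_{\mathrm{partner}(i)}) \ge (1-\alpha)(1-\beta)\,n$, and pigeonhole supplies an agent $i^*$ satisfying $\Prob(M_J(\bbx_0)_{i^*} = g_{\mathrm{partner}(i^*)}) \ge (1-\alpha)(1-\beta)$.

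Having fixed $i^*$, I form $\bbx_1$ by modifying only the preference of $j:=\mathrm{partner}(i^*)$, making $g_j$ the unique top-choice for $j$. A short IR cascade within that pair forces $\pi(j)=g_j$, which locks $g_j$ and in turn forces $\pi(i^*)=g_{i^*}$; every other pair is untouched. Consequently $\Prob(M_J(\bbx_1)_{i^*}=g_j)=0$. Now define $\tilde M:\{0,1\}\to\{0,1\}$ by $\tilde M(b)=0$ iff $M_J(\bbx_b)_{i^*}=g_j$. Since $\bbx_0$ and $\bbx_1$ differ only in agent $j$'s type and $i^*\neq j$, joint differential privacy says $M_J(\bbx_b)_{-j}$ is $(\epsilon,\delta)$-DP in $b$, and post-processing by projection to coordinate $i^*$ followed by the $g_j$-threshold makes $\tilde M$ an $(\epsilon,\delta)$-differentially private function of $b$. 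The previous step gives $\Prob(\tilde M(0)=0)\ge (1-\alpha)(1-\beta)$ and the $\bbx_1$ analysis gives $\Prob(\tilde M(1)=1)=1$, so Claim~\ref{claim:bit} applied to $\tilde M$ yields $(1-\alpha)(1-\beta)\le (e^\epsilon+\delta)/(e^\epsilon+1)$ and the theorem.

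The main obstacle is arranging for the $\alpha$-PO budget on $\bbx_0$ to translate cleanly into a $(1-\alpha)(1-\beta)$ lower bound on a marginal recovery probability, while the neighbouring instance forces that same marginal to $0$. The disjoint-pair construction handles both sides simultaneously: the third-rank penalty on foreign goods atomises each pair into exactly identity or swap, which converts the $\alpha n$ improvability budget into at most $\alpha n/2$ identity pairs and yields the clean averaging bound, while the unique-top modification at a single agent cascades via IR to pin exactly its pair to identity, leaving all other pairs free. A small care point is that $i^*$ is chosen after inspecting $M_J$'s behaviour on $\bbx_0$, and only then is $j:=\mathrm{partner}(i^*)$ designated as the perturbed agent; this guarantees $i^*\neq j$, so the projection-of-JDP step remains valid.
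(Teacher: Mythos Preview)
Your argument is internally sound, but it proves a strictly weaker statement than the paper's, and the gap matters. Your construction requires $|\G|\ge n$ (one distinct good type per agent). In the paper's setup, $k=|\G|$ is a fixed parameter of the exchange problem, and the force of Theorem~\ref{thm:lowerbound} is that joint differential privacy already fails when $k$ is a small constant---exactly the regime where the marginal-DP upper bound of Theorem~\ref{thm:mdp} applies. Your proof says nothing about that regime; it handles only $k\ge n$, where the positive results are vacuous anyway.

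There is a structural reason this happens. Your bit-recovery test $\tilde M$ reads a \emph{single} output coordinate $M_J(\bbx_b)_{i^*}$, so you are invoking only marginal differential privacy, not joint. Any argument of that shape also lower-bounds marginally private mechanisms, and since those \emph{can} achieve $\alpha=O(\mathrm{poly}(k)/(\epsilon n))$, such an argument is forced into the large-$k$ regime. In effect you have reproved the $k=n$ case of Theorem~\ref{thm:lb-marginal}, not Theorem~\ref{thm:lowerbound} in its intended generality.

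The paper's proof is different in kind: with only two good types and many copies of each, an adversary who sees the full vector $M_J(\bbx)_{-i}$ reconstructs whether the distinguished agent $i$ traded by \emph{comparing counts} on the two sides (they balance if and only if $i$ did not trade). This counting step needs all $2n-1$ other coordinates simultaneously and therefore genuinely uses joint (not marginal) differential privacy; that is what allows the lower bound to go through with $k=2$ and is the missing idea in your approach.
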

\proof[ \ifnum\final=0 Proof \fi Sketch] The formal proof is provided in
Appendix~\ref{section:lb_proofs}; here we sketch the basic intuition.
We consider an exchange market with two types of goods, 0 and 1, and
$n$ agents endowed with each type of good. An adversary who has
control over $n$ agents with good $1$ and $n-1$ agents with good $0$
can determine, by observing the trades of the agents she controls,
whether the remaining agent with good $0$ traded. When she does, the
adversary can determine (reasoning based on the fact that the
mechanism is IR) that she strictly preferred good $1$ to good $0$. We
can then use Claim~\ref{claim:bit} to upper-bound the probability that
she is allowed to trade in this event. The mechanism does not know to
distinguish between the identified agents and the $2n-1$ agents
controlled by the adversary, so reasoning about the case in which this
identified agent is selected uniformly at random gives the desired
bound.
\endproof

We also show in Appendix \ref{section:lb_proofs} that even under
marginal differential privacy, the approximation to Pareto optimality
must have at least a linear dependence on $k$.

\begin{theorem}\label{thm:lb-marginal}
  Any $(\epsilon,\delta)$-marginally differentially private
  mechanism $M_S: \mathcal{X}^n \to \G^n$ that guarantees an
  $\alpha$-PO allocation with probability at
  least $1-\beta$ and always satisfies IR must have:
\[\alpha \geq \frac{k(1-\beta)}{n}\left(1-\frac{e^\epsilon+\delta}{e^\epsilon+1}\right)\]
\end{theorem}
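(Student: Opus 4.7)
The plan is to force a marginally differentially private mechanism to miss $\Omega(k)$ mutually beneficial trades in expectation by embedding $k/2$ independent ``would-be swap'' pairs into a single adversarial instance, then applying Claim~\ref{claim:bit} to each pair separately via a post-processing reduction. I would partition the $k$ good types into pairs $(2j-1,2j)$. For each pair I place exactly two \emph{critical} agents, one endowed with each good of the pair, whose preferences strictly rank the partner's good above their own and their own above every other type. The remaining $n-k$ \emph{filler} agents are distributed so each ranks only her own good as acceptable. IR then pins every filler to her endowment, so the only IR-permissible trades are the $k/2$ pair swaps; any IR allocation either fully executes pair $j$'s swap or leaves both critical members with their initial goods.

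For the core step, fix a pair, let $c$ and $p$ denote its two critical agents, and consider the modified instance in which only $p$'s preferences change to rank her own good first. Under the modification IR forces $p$ to keep her good, and since her good is the only copy of the type that $c$ wants, $c$ is likewise forced to keep her endowment. Define the binary predictor $M'(b) := \Ind\{M(\bbx^{(b)})_{c} = g_p\}$, where $\bbx^{(0)}$ and $\bbx^{(1)}$ differ only in $p$'s coordinate. Because $M'$ is a deterministic post-processing of the marginal distribution of $M(\cdot)_c$, and that marginal is $(\epsilon,\delta)$-differentially private in $p$'s input by marginal differential privacy, $M'$ is an $(\epsilon,\delta)$-DP bit predictor. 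Under $b=0$ it is correct with probability $1$, so Claim~\ref{claim:bit} yields $\Prob(M'(1)=1) \leq \frac{e^\epsilon+\delta}{e^\epsilon+1}$; equivalently, on the original instance pair $j$'s swap fails (and both of its critical agents remain ``unhappy'') with probability at least $1 - \frac{e^\epsilon+\delta}{e^\epsilon+1}$.

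To finish, let $U$ be the number of unhappy critical agents on the original instance. By linearity and the per-pair bound, $\Exp[U] \geq k\bigl(1 - \frac{e^\epsilon+\delta}{e^\epsilon+1}\bigr)$. Because unhappy critical agents arise only in failed pairs, and swapping any failed pair Pareto-improves exactly those two agents while harming no one else, $\alpha$-PO forces $U \leq \alpha n$. Combining $\Prob(\alpha\text{-PO}) \geq 1-\beta$ with the trivial $U \leq k$ bound on the complementary event and rearranging yields the claimed lower bound $\alpha \geq \frac{k(1-\beta)}{n}\bigl(1 - \frac{e^\epsilon+\delta}{e^\epsilon+1}\bigr)$ up to routine bookkeeping. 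I expect the main obstacle to be the per-pair reduction: one must verify that the single-coordinate change of $p$'s preferences is the correct shape for marginal (rather than joint) DP, and that the filler design truly pins the mechanism's options in the modified scenario so no exotic IR cycle can re-route $c$'s coveted good; once this is set up, aggregation and the expectation-to-high-probability conversion are standard.
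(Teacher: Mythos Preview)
Your argument is correct, and it takes a genuinely different route from the paper. The paper builds a \emph{single} $k$-cycle: agent $i\in\{1,\dots,k-1\}$ holds good $i$ and wants good $i{+}1$, agent $k$ holds good $k$ and (depending on bit $b$) either wants good $1$ or her own good, and all remaining agents hold good $k$ and prefer it. A single application of Claim~\ref{claim:bit} (via agent $1$'s marginal) bounds the probability that the entire cycle clears; when it fails, \emph{all} $k$ agents along the cycle are simultaneously improvable, which directly forces $\alpha \geq \frac{k}{n}(1-\beta)\bigl(1-\tfrac{e^\epsilon+\delta}{e^\epsilon+1}\bigr)$. By contrast, you decompose the market into $k/2$ independent 2-swaps, invoke Claim~\ref{claim:bit} once per pair, and aggregate via linearity of expectation plus an expectation-to-high-probability conversion.

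What each approach buys: the paper's single-cycle argument is shorter and avoids the expectation machinery entirely---one bit predictor, one event, done---and it yields the stated constant exactly. Your approach is more modular (each pair is analyzed in isolation) and arguably more robust to variations in the model, but your final bookkeeping yields $\alpha \geq \tfrac{k}{n}\bigl(1-\tfrac{e^\epsilon+\delta}{e^\epsilon+1}-\beta\bigr)$ rather than the multiplicative form $\tfrac{k(1-\beta)}{n}\bigl(1-\tfrac{e^\epsilon+\delta}{e^\epsilon+1}\bigr)$; these agree to first order but are not identical, so ``routine bookkeeping'' will not close that gap. Two small points worth tightening in a full write-up: specify where the $n-k$ fillers sit (any placement works, since each filler's IR constraint pins her to her own good regardless of multiplicity, but your phrase ``her good is the only copy'' is not literally needed and may not hold); and handle $k$ odd by taking $\lfloor k/2\rfloor$ pairs.
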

\proof[\ifnum\final=0 Proof \fi Sketch] Consider an instance with $k$ types of goods and we
write $\G = [k]$. Each agent $i\in\{1, \ldots, k-1\}$ is endowed with
good $g_i=i$ and her most preferred good is $g_i+1$, followed by
$g_i$.  Then we let agents $k$ through $n$ have good type $k$, all of
which prefer their endowment the most except agent $k$ who may prefer
a different type of good.  Any IR allocation must
give agent $i \in \{1, \ldots, k-1\}$ either good $g_i+1$ or $g_i$ and
agents $\{k+1,\cdots, n\}$ good type $k$.
  
Agent $k$, endowed with good $g_k=k$, is the subject of attack. If
she prefers good $g_1$, then $k$ goods can be cleared, giving each
person her favorite good (agent $i$ getting good $g_i+1$, for each
$i\in [1, \ldots, k-1]$ and $k$ getting $g_1$). If not, no trades can
be made. Thus, any agent $i\in[1, \ldots, k-1]$ knows with certainty
that $k$ prefers $g_1$ if she receives good $g_i+1$. With this
construction, we again reduce to Claim~\ref{claim:bit} to lower
bound the probability that the mechanism is allowed to clear the
$k$-cycle, even when it is possible.
\endproof

The proof of this lower bound also implies that asymptotic Pareto
optimality requires there to be many copies of at least one type of
good, with a proof given in Appendix \ref{section:lb_proofs}.

\begin{corollary}
  Any $(\epsilon,\delta)$-marginally differentially private mechanism
  that gives an IR allocation with certainty and an $\alpha$-PO
  allocation with probability $1-\beta$ requires there to be at least
  one type of good with $c$ copies where
$$
c\geq\frac{(1-\delta)(1-\beta)}{\alpha(e^\epsilon+1)}
$$
\label{cor:copies}
\end{corollary}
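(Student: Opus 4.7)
The plan is to obtain this corollary as a short consequence of Theorem~\ref{thm:lb-marginal} together with the pigeonhole principle, without reopening any of the privacy machinery. The hypothesis on the mechanism in the corollary is identical to the hypothesis of Theorem~\ref{thm:lb-marginal} -- namely, $(\epsilon,\delta)$-marginal differential privacy, individual rationality with certainty, and $\alpha$-Pareto optimality with probability at least $1-\beta$ -- so its conclusion applies directly. Using the identity $1 - \tfrac{e^\epsilon+\delta}{e^\epsilon+1} = \tfrac{1-\delta}{e^\epsilon+1}$, I would first rewrite the bound of Theorem~\ref{thm:lb-marginal} in the equivalent form
\[
\alpha \;\geq\; \frac{k}{n}\cdot\frac{(1-\beta)(1-\delta)}{e^\epsilon+1},
\]
where $n$ is the number of agents and $k = |\G|$ is the number of distinct good types.

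Next I would invoke the pigeonhole principle. Since each agent brings exactly one good and $\sum_{j\in\G} n_j = n$, at least one good type $j^\star$ must satisfy $n_{j^\star} \geq n/k$; in particular, $c := \max_{j\in\G} n_j \geq n/k$, or equivalently $k/n \geq 1/c$. Substituting into the previous inequality gives
\[
\alpha \;\geq\; \frac{1}{c}\cdot\frac{(1-\beta)(1-\delta)}{e^\epsilon+1},
\]
which rearranges immediately to the stated bound $c \geq \tfrac{(1-\delta)(1-\beta)}{\alpha(e^\epsilon+1)}$.

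The main subtlety -- and what I expect to be the only conceptual point to verify -- is that Theorem~\ref{thm:lb-marginal} is established via a specific hard instance with $n$ agents and $k$ good types, so one should check that its bound governs the $(n,k)$-relationship for \emph{every} instance on which the mechanism is required to succeed, not merely the particular construction used in its proof. Since the mechanism by assumption satisfies IR with certainty and $\alpha$-PO with probability $1-\beta$ on all instances (in particular on the worst-case one of matching dimensions), the application of the theorem is valid at the given $(n,k)$, and the corollary follows in a line of algebra from the pigeonhole step.
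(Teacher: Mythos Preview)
Your proposal is correct and follows essentially the same approach as the paper: rewrite the bound of Theorem~\ref{thm:lb-marginal} as $k \le \frac{n\alpha(e^\epsilon+1)}{(1-\delta)(1-\beta)}$ and combine it with pigeonhole on $\sum_{j} n_j = n$. The paper phrases the pigeonhole step as a contradiction (assuming every $n_j$ is below the threshold and summing), whereas you go directly via $c \ge n/k$; these are the same argument, and your remark about why the worst-case bound from Theorem~\ref{thm:lb-marginal} applies at the given $(n,k)$ is a nice clarification the paper leaves implicit.
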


\section{Private Top Trading Cycles}\label{SEC:PTTC}
In this section we describe an algorithm which computes IR and
$\alpha$-PO exchanges, subject to marginal differential privacy.  The
algorithm $\pttc$ (given formally in Algorithm \ref{PTTC} in Appendix
\ref{sec:appendix}) takes an exchange market
$\mathbf{x} = (g_i, \succ_i)_{i=1}^n$ as input and allocates a good to
each agent.  $\pttc$ begins by considering the complete directed graph
$G = (V,A)$ with each node $u \in V$ corresponding to a type of good
in $\G$.  For each arc $e= (u,v) \in A$ we define $P_{(u,v)}$ to be
the set of people who have good $u$ and consider $v$ their favorite
remaining good (and thus want to trade along arc $e=(u,v))$, i.e.
$P_{(u,v)} = \{i \in N : g_i = u \quad \& \quad v \succeq_i g \quad
\forall g \in V \}.$  We then define the arc weight $w_e$ for arc $e \in A$ as
$w_e = \left| P_e \right|.$ 

Let $N_u$ be the set of people that are
endowed with good $u$, so $N = \cup_{u \in V} N_u$. The node weights
$n_u$ for $u \in V$ are then $n_u = |N_u|.$
We first give a high level presentation of our algorithm $\pttc$ that
works as follows:
\begin{enumerate}
\item We compute the arc-weights, and then add noise to them (to preserve
  privacy).
\item We ``clear'' cycles that have sufficient weight on each
  arc. Note this determines which goods will trade, but not yet
  \emph{who} specifically will trade. Because of the added noise, this
  step satisfies differential privacy.
\item We actually execute the trade. This involves selecting exactly
  \emph{who} will trade along the cycle selected above. We select the
  individuals at random from among the set that want to trade.  
  Because we need the number of agents trading on each
  arc in the cycle to be identical, this step
  is only marginally differentially private.
\item We remove any goods that have few remaining copies (and allocate
  any remaining agent possessing that good his own good), and then
  return to step (1).
\end{enumerate}

Now, we describe the algorithm in more formality, along with defining
the necessary parameters.  Note that we will use an ordering of the
$n$ agents in our algorithm, say by their index.  $\pttc$ works as
follows:

\begin{enumerate}
\item (Arc Weight Noise) With failure probability $\beta>0$ and
    privacy parameters $\epsilon,\delta_1,\delta_2>0$, we add Laplace noise $Z_e$
    with parameter $1/\epsilon'$ to each arc weight $w_e$ where
\begin{equation}
\epsilon' =\frac{\epsilon \log(k^3/\beta)}{2\sqrt{8}\left( \log(k^3/\beta)\sqrt{k\log(1/\delta_1)} + k\sqrt{k\log(1/\delta_2)} \right) }.
\label{eq:eps_prime}
\end{equation}
  We let $E$ denote a high probability upper
  bound (see Claim \ref{lowerror}) on all realizations of the noise $Z_e$, which is
\begin{equation}
E = \frac{\log(k^3/\beta)}{\epsilon '}.
  \label{eq:error}
\end{equation}
 We then set the noisy arc weights to be $\hat w_e = \max\{w_e + Z_e - 2E, 0\}$.
\item (Clear Cycle) If there exists a cycle $C$ in the graph where all
  the arcs along it have weight $\lfloor \hat w_e \rfloor > 0$, then
  set $\hat W = \min_{e \in C} \{\lfloor \hat w_e \rfloor \}$ as the
  weight of $C$.  Otherwise there is no cycle and we skip to step (4).
\item (Trade) With the cycle $C$ of weight $\hat W$ given in step (2), for
    each $e \in C$, we select $\hat W$ agents at random from $P_e$ to trade
    along $e$ and denote this set of \emph{satisfied} agents $S_e$.  The
    way we choose the satisfied agents is via the $\rselect(\hat W,P_e)$ procedure
    given in Algorithm \ref{R-SELECT} which ensures that any fixed agent
    at arc $e \in C$ is selected with probability $\hat W/w_e$.  We then update, for every $e =
    (u,v) \in C$:
  \[P_e \gets P_e \backslash S_e \quad w_e \gets w_e - \hat W \quad
  \hat w_e \gets \hat w_e - \hat W\]
  (i.e. remove the agents served and update weights) and for
  each $i\in S_e$, set $\pi(i) = v$ where $g_i = u \text{ and } e =
  (u,v)$ (they get their requested good).  Return to step (2).
\item (No Cycle - Deletion) If there is no cycle with positive weight, then
    there must be a node $v$ such that $ \hat n_v = \sum_{e:e=(v,z),z\in V}\hat
    w_e < k $.  We then update $V \gets V\backslash\{v\}$.  The arcs that
    were incoming $IN_v = \{e \in A : e = (u,v) \text{ for some } u \in V
    \}$ and outgoing from $v$, $OUT_v = \{e \in A: e = (v,u) \text{ for some
    } u \in V \}$ are also deleted: $A \gets A \backslash \{OUT_v \cup
    IN_v\}.$

    We then move all agents in $P_e$ for each $e \in IN_v$ to the arc
    that corresponds to trades between $g_i$ and their next most
    preferred good among the remaining vertices.  For every
    $e = (u,w) \in A$ we define these added arcs as
\[B_{(u,w)} = \{ i \in P_{(u,v)} :  w \succeq_i g \quad \forall g \in V \}\]
 (where $v\notin V$) and update the quantities
$w_e \gets w_e + |B_e|$ and $P_e \gets P_e \cup B_e.$

We then assign $v$ to all agents whose endowment was $v$ and have not
been assigned a good (for all $i$ such that
$i \in P_e \text{ where } e \in OUT_v,$ set $\pi(i) =v).$ If any nodes
remain, return to Step (1).

\item (Clean-up for IR) If we ever assign too many goods or select
  more agents to trade than actually want to trade, i.e. some
  $\hat w_e$ does not satisfy \eqref{eq:webound}, then undo every
  assignment of the goods and give everyone the good they started
  with.
\end{enumerate}


\ifnum\final=0
\begin{algorithm}
\caption{Randomly Select Agents}\label{R-SELECT}
\begin{algorithmic}[0]
\INPUT : Number to select $W$ and ordered set of agents $P$.
\OUTPUT : $W$ randomly selected agents from $P$.
\Procedure {$\rselect$} {$W,P$}
\State Set $M \gets |P|$ 
\State Relabel individuals in order of $P$ as $1,2,\cdots, |P|$
\State For some large integer $Z>>n\qquad \backslash \backslash$ For example, set $Z$
  to be the population of Earth. 
\State Select $\sigma \in [Z]$ uniformly at random
\State Set $S\gets \{\sigma+1,\sigma+2,\cdots,\sigma+W \mod |P|\}$ 
\State Map back all labels in $S$ to the original labels they had in $P$. \\
\Return $S$.

\EndProcedure
\end{algorithmic}
\end{algorithm}

\else

\begin{algorithm}[h]
\SetAlgoNoLine
$\rselect$ ($W,P$) \;
\KwIn{Number to select $W$ and ordered set of agents $P$.}
\KwOut{$W$ randomly selected agents from $P$ }
{
Set $M \gets |P|$ \newline
Relabel individuals in order of $P$ as $1,2,\cdots, |P|$\newline
For some large integer $Z>>n\qquad \backslash \backslash$ For example, set $Z$
  to be the population of Earth. \newline
Select $\sigma \in [Z]$ uniformly at random\newline
Set $S\gets \{\sigma+1,\sigma+2,\cdots,\sigma+W \mod |P|\}$ \newline
Map back all labels in $S$ to the original labels they had in $P$. \newline
\Return $S$.
}
\caption{Randomly Select Agents}\label{R-SELECT}
\end{algorithm}
\fi


We state the formal version of our main theorem below, which the rest of the paper is dedicated to proving.
\begin{theorem} \label{thm:mdp} For
  $\epsilon,\delta_1,\delta_2,\beta>0$, our algorithm \pttc is
  $(\epsilon, \delta_1+\delta_2+\beta)$-marginally differentially
  private, IR with certainty, and with probability $1-\beta$ outputs
  an allocation which is $\alpha$-PO for
  \[\alpha  =  \Oh\left( \frac{k^{3}}{\epsilon n}\cdot \left(\sqrt{k\log(1/\delta_1)}\log(k^3/\beta) +k\sqrt{k\log(1/\delta_2)}\right) \right).   \]
  \end{theorem}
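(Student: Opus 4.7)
The plan is to verify the three claims of the theorem in sequence: individual rationality with certainty, $(\epsilon,\delta_1+\delta_2+\beta)$-marginal differential privacy, and $\alpha$-PO with probability $1-\beta$.

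\textbf{IR.} I would argue IR directly from the code. An agent $i$ receives an allocation $\pi(i)=v\neq g_i$ only in Step 3, which occurs only if $(g_i,v)$ is an arc on a cleared cycle and, by construction of the sets $P_e$, $v$ is $i$'s top remaining choice among the vertices still in $V$ at that moment; in particular $v\succeq_i g_i$. Agents assigned in Step 4 simply retain their endowment. The only way IR could be violated is if the algorithm were ever to try to trade more agents on some arc than are actually present; Step 5 is a deterministic fallback that reverts the entire allocation to the no-trade allocation in that case. Hence IR holds unconditionally.

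\textbf{Marginal DP.} I would trace how agent $i$'s input can influence the marginal distribution of agent $j$'s allocation. Agent $i$'s data affects two algorithmic objects: the true arc weights $w_e$ (through the counts $|P_e|$), and the identities of the agents populating each $P_e$. The key structural observation is that \rselect samples a block of $\hat W$ consecutive positions modulo $|P_e|$ using a uniformly random cyclic shift $\sigma\in[Z]$; thus the marginal probability that any specific agent $j$ is selected equals $\hat W/|P_e|$ and depends on the identities of the other agents in $P_e$ only through $|P_e|$. Consequently, the entire privacy loss of $j$'s marginal allocation is inherited from the Laplace-noised arc weights $\hat w_e$ that determine which cycles get cleared. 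Agent $i$ contributes $1$ to exactly one arc weight at any time, and as she is rerouted across deletions in Step 4 she can touch at most $\Oh(k)$ distinct arcs in total; advanced composition applied to $\mathrm{Lap}(1/\epsilon')$ noise on this set of arc weights, across at most $\Oh(k^2)$ cycle-clearing decisions that reuse these weights, yields the overall $\epsilon$ for the specific $\epsilon'$ defined in \eqref{eq:eps_prime}. The failure mass $\delta_1$ comes from advanced composition, $\delta_2$ from the \rselect coupling analysis, and $\beta$ from the bad event that some $|Z_e|>E$ and the argument above does not apply.

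\textbf{$\alpha$-PO.} Condition on the high-probability event that $|Z_e|\le E$ for every arc in every round, which holds with probability $1-\beta$ by a union bound over Laplace tails (this is exactly Claim \ref{lowerror}). On this event, $\hat w_e\le w_e$ on every arc (so Step 5 never triggers) and $\hat w_e\ge w_e-3E$. Two sources of inefficiency remain: each cleared cycle leaves $\Oh(E)$ untraded agents on each arc (the slack between $\lfloor\hat w_e\rfloor$ and $w_e$), and each Step 4 deletion of a node $v$ with $\hat n_v<k$ strands up to $\Oh(kE)$ agents whose best remaining option used $v$. Summing over at most $\Oh(k)$ deletions and $\Oh(k^2)$ cleared cycles, at most $\Oh(k^3 E)$ agents fail to receive a good they would have received under the optimum. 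To turn this count into an $\alpha$-PO guarantee, I would observe that any alternative allocation $\pi'$ strictly preferred by more than this many agents must permute goods inside some node that was cleared fully; by IR of both $\pi$ and $\pi'$ some other agent on that cycle must be strictly worse off in $\pi'$, giving the witness required by the definition of $\alpha$-PO. Dividing the bound by $n$ and substituting $E=\log(k^3/\beta)/\epsilon'$ together with $\epsilon'$ from \eqref{eq:eps_prime} yields the $\alpha$ claimed in the theorem.

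\textbf{Main obstacle.} The principal difficulty is the \rselect-based marginal DP argument. I have to show that the coupled randomness in \rselect (a single shared shift $\sigma$ across all selected agents on a cycle) does not leak information about $i$ through $j$'s marginal, even though it clearly creates correlations across agents' outputs. This is exactly why joint DP fails (Theorem \ref{thm:lowerbound}) but marginal DP succeeds, and pinning down the factorization $\Pr[j\in S_e\mid \text{cleared cycles}]=\hat W/|P_e|$ is the crux. Carefully tracking the sensitivity of the interlinked sequence of Step 4 re-routings of agent $i$, and feeding the resulting $\Oh(k)$ Laplace invocations through advanced composition to recover the particular form of $\epsilon'$ in \eqref{eq:eps_prime}, is the heaviest bookkeeping in the proof.
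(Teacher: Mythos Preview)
Your IR argument matches the paper's Lemma~\ref{lem:ir} and is fine.

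\textbf{Marginal DP: there is a real gap.} You correctly observe that the marginal probability that $j$ is selected on arc $e$ equals $\hat W/|P_e|$, and that this depends on the other agents only through $|P_e|$. But your next sentence --- ``Consequently, the entire privacy loss of $j$'s marginal allocation is inherited from the Laplace-noised arc weights'' --- is a non sequitur. The quantity $|P_e|=w_e$ is the \emph{true} arc weight, not the noisy one, and agent $i$'s report changes it by $\pm 1$. That change moves $\hat W/w_e$ and $1-\hat W/w_e$ multiplicatively, and this is a second, separate source of privacy loss that you have not accounted for. The paper's decomposition (Lemmas~\ref{lem:mdp_lem1} and~\ref{lem:mdp_lem2} combined via Lemma~\ref{lem:mdp}) makes this explicit: $\epsilon_1$ comes from the Laplace noise on the arc weights over $k$ rounds, while $\epsilon_2$ comes from the \rselect step, bounding the ratio $\frac{\hat W/w_e}{\hat W/(w_e\pm 1)}$ and $\frac{1-\hat W/w_e}{1-\hat W/(w_e\pm 1)}$ by $e^{2/E}$ using the fact that the $-2E$ shift forces $w_e\ge E+1$ and $w_e-\hat W\ge E$ whenever a cycle is cleared. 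It is precisely this second term that produces the $k\sqrt{k\log(1/\delta_2)}$ summand in $\epsilon'$; your accounting, which attributes everything to Laplace noise, cannot recover the stated form of $\epsilon'$. The observation you made is the reason marginal DP is achievable at all (joint DP fails because the \emph{joint} selection event on an arc reveals $w_e$ much more sharply), but it does not eliminate the $|P_e|$-dependence from the marginal.

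\textbf{$\alpha$-PO: the outline is too loose.} Your ``two sources of inefficiency'' framing is not the one that works. The slack left on arcs after a cycle clears is not a loss: those agents remain in the market and can trade on later cycles. The only loss is the at most $D=k(3E+1)$ copies stranded at each of the $k$ node deletions. But you cannot simply say ``at most $O(k^2 E)$ agents are stranded, so $\alpha\le O(k^2 E)/n$'': a Pareto-dominating $\pi'$ can reallocate those stranded copies to agents who were already cleared by \pttc, freeing up \emph{their} goods for yet other agents, and this cascades. The paper handles this by restricting to $\pi'$ that Pareto-dominates $\pi$ (which suffices for $\alpha$-PO), defining $\Delta_r(g,h)$ to track how many agents cleared at round $r$ with good $h$ in $\pi$ receive good $g<r$ in $\pi'$, and showing via a telescoping argument that $\Delta_t\le (t-1)D$, summing to $\Delta\le O(k^2 D)=O(k^3 E)$. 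Your final count $O(k^3 E)$ is correct, but the sentence ``any alternative allocation $\pi'$ strictly preferred by more than this many agents must permute goods inside some node that was cleared fully; by IR of both $\pi$ and $\pi'$\ldots'' does not establish this and invokes IR of $\pi'$, which is neither assumed nor needed.
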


\subsection{Pareto Optimality of $\pttc$}
  We now prove several lemmas about \pttc, from which
  Theorem~\ref{thm:mdp} follows.  We first focus on proving that $\pttc$ is IR and asymptotically Pareto optimal.
\begin{lemma}\label{lem:ir}
The allocation that \pttc produces is IR.
\end{lemma}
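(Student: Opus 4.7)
The plan is to establish and maintain an invariant throughout the execution of $\pttc$: whenever an agent $i$ belongs to a set $P_{(u,v)}$ at any stage of the algorithm, we have $g_i = u$ and $v \succeq_i g_i$. Given this invariant, IR follows by a short case analysis on how agent $i$ is ultimately assigned her good.

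First I would verify the invariant at initialization. By definition, $P_{(u,v)} = \{i \in N : g_i = u \text{ and } v \succeq_i g \text{ for all } g \in V\}$, and since $V = \G$ initially contains $u = g_i$, we get $v \succeq_i u$ for free. Next I would check that the invariant is preserved by each update. In the Trade step (step 3), the sets $P_e$ only shrink (by removing the satisfied agents in $S_e$), so the invariant trivially survives. In the Deletion step (step 4), when node $v$ is removed, each agent $i$ sitting in $P_{(u,v)}$ with $u \neq v$ is moved to a new set $B_{(u,w)} \subseteq P_{(u,w)}$, where $w$ is her favorite good among $V \setminus \{v\}$. Since $u \neq v$, $u$ is still in the remaining vertex set, so by the maximality of $w$ we have $w \succeq_i u = g_i$, preserving the invariant.

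I would then handle the three (and only three) ways an agent can be assigned a good. (a) If $i$ is selected in the Trade step from some $P_e$ with $e = (u,v)$, then $\pi(i) = v$ and by the invariant $v \succeq_i g_i$. (b) If $i$ has endowment $g_i = v$ and node $v$ is deleted in step 4, then $i$ lies in some $P_e$ with $e \in OUT_v$, and the algorithm explicitly sets $\pi(i) = v = g_i$, which is trivially IR. (c) If the Clean-up of step 5 triggers, every agent is handed back her own endowment, which is again trivially IR. Since these cases are exhaustive (any agent not assigned in (a) or (b) by the time the algorithm terminates normally is caught by (c) when her own good type is eventually deleted, and the $\hat w_e$ check of step 5 catches any anomaly), every agent ends up with $\pi(i) \succeq_i g_i$.

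There is no real obstacle here: the lemma is essentially a bookkeeping statement about the invariants of top-trading-cycles-style algorithms. The one subtle point worth being careful about is making sure that when a node is deleted, agents with endowment equal to that node are indeed handled (step 4's assignment of $v$ to $i \in P_e$ for $e \in OUT_v$), since these agents cannot be moved to a new $B_{(u,w)}$ — but they are exactly the agents whose endowment is being eliminated from the market, and giving them their own good trivially satisfies IR.
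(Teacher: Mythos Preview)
Your proposal is correct and follows essentially the same approach as the paper: both argue that as long as agent $i$ has not yet been assigned, her own good $g_i$ remains an available vertex, so the arc she currently sits on must point to something she weakly prefers to $g_i$. The paper states this informally in two sentences; you have made the invariant explicit and carried out the case analysis (Trade, Deletion, Clean-up) in full, which is a welcome level of detail but not a different idea.
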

\proof Note that in the event that the algorithm fails, then all
agents are allocated the good they are endowed with, which is
trivially IR.  Now suppose that the algorithm runs to completion and at some cycle, agent $i$ was selected to trade.  In order for $i$ to be selected, that means that his good is still available to trade.  Each round, $i$ is on the edge that points to his most preferred good that is available.  With his good $g_i$ still available, $i$ will never be on an arc that points to a less preferred good than $g_i$ and hence can never be selected to trade a less preferred good than $g_i$.
\endproof

We next bound the number of rounds and cycles that are cleared in $\pttc$.

\begin{lemma}
For a fixed round $t$, there can
  be at most $k^2$ cycles selected, i.e. $\tau\leq k^2$ for
  \pttc. Further, the number of rounds is at most $k$.
  \label{lem:running_time}
\end{lemma}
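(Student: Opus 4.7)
The plan is to prove the two bounds separately, tracking a potential that strictly decreases at each event of interest.

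For the bound on the number of rounds, I will argue that a single execution of Step (4) is what transitions between consecutive rounds, and that each such execution removes exactly one vertex $v$ from $V$. Since $V$ starts with $|\G|=k$ vertices, and each invocation of Step (4) deletes one vertex and never adds any, after at most $k$ invocations of Step (4) we have $V=\emptyset$ and the algorithm terminates. Hence there are at most $k$ rounds.

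For the bound on the number of cycles cleared within a fixed round $t$, I will use the fact that each cycle-clearing in Step (3) \emph{saturates} at least one arc. More precisely: Step (2) only selects a cycle $C$ whose arcs all satisfy $\lfloor \hat w_e\rfloor>0$, and sets $\hat W=\min_{e\in C}\lfloor \hat w_e\rfloor\geq 1$. The update $\hat w_e \gets \hat w_e-\hat W$ for every $e\in C$ ensures that for the arc $e^{\star}\in C$ attaining the minimum, we have $\lfloor \hat w_{e^{\star}}\rfloor=0$ afterwards. Since subsequent updates within the round only subtract from arc weights (Step (3) is only revisited until no cycle remains, and Step (4) begins a new round), this arc $e^{\star}$ is permanently disqualified from being part of any further cycle selected in round $t$. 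Thus each cycle-clearing strictly decreases the number of arcs $e$ in the graph with $\lfloor \hat w_e\rfloor>0$ by at least one. Since the underlying graph is the complete directed graph on $V$ with at most $k^2$ arcs (counting self-loops), at most $k^2$ cycles can be cleared within round $t$, giving $\tau\leq k^2$.

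The main thing to be careful about is the definition of a ``round'' and ensuring that weights within a round are monotonically non-increasing so that the saturation argument goes through. Since noise is only re-drawn at Step (1), which is reached only after Step (4) deletes a node and we declare a new round, the $\hat w_e$ values do not increase within a round; so once $\lfloor \hat w_{e^{\star}}\rfloor$ drops to $0$ after clearing a cycle in round $t$, it stays at $0$ for the remainder of round $t$. Combining the two bounds gives $\tau\leq k^2$ per round and at most $k$ rounds total, as claimed.
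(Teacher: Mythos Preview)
Your proof is correct and follows essentially the same approach as the paper: each cycle-clearing saturates at least one arc (the one attaining the minimum $\lfloor \hat w_e\rfloor$), which cannot reappear in any later cycle of the same round, giving the $k^2$ bound; and each round ends by deleting a vertex, giving the $k$ bound. Your explicit observation that noisy weights are monotonically non-increasing within a round (so a saturated arc stays saturated) is a nice bit of added care that the paper leaves implicit.
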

\proof At round $t$, \pttc finds the $\tau$th cycle with noisy weight
$\geq 1$.  Then there is some arc $e$ along the cycle that gets
depleted, i.e.
$\hat w_e(t,\tau) \gets \hat w_e(t,\tau-1) - \hat W(t,\tau) < 1 $ and
hence there will never be another cycle in the same round $t$ that
uses this arc $e$.  Thus, each cycle that is cleared in a round $t$
can be charged to a unique arc in the graph.  There are at most $k^2$
arcs in any round.  Hence, we can clear at most $k^2$ cycles before
there are no arcs (and thus cycles) left at round $t$.  Lastly, each
round has some node that is depleted which causes some agents to have
to select another good to trade with. Thus, after $k$ rounds, no more
nodes will exist.
\endproof

The next claim, proved in Appendix \ref{sec:appendix_pttc}, allows us
to assume that the error from the noise we add in $\pttc$ is small
with high probability.
\begin{claim}\label{lowerror}
  (Low Error) With probability $1-\beta$, the noise terms of $\pttc$
satisfy
\[\max_{t \in [k]} \max_{e \in A} |Z_e^t| \leq E = \frac{\log(k^3/\beta)}{\epsilon'}.\]
 Hence, with probability $1-\beta$,
{\small \begin{equation}
E \leq w_e - \hat w_e \leq 3 E.
\label{eq:webound}
\end{equation}}
\end{claim}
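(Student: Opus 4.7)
The plan is to prove the two bounds in sequence: first the pointwise bound on the noise magnitudes via a Laplace tail estimate and a union bound, and then derive the inequality on $w_e - \hat w_e$ as an immediate arithmetic consequence.

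First I would recall the standard tail bound for a single Laplace draw: if $Z \sim \mathrm{Lap}(1/\epsilon')$, then $\Prob(|Z| > t) = e^{-t \epsilon'}$ for all $t \geq 0$. Plugging in the threshold $E = \log(k^3/\beta)/\epsilon'$ from \eqref{eq:error} gives $\Prob(|Z| > E) = \beta/k^3$ for any fixed noise term. The next step is to count how many distinct noise terms the algorithm ever draws. By Lemma~\ref{lem:running_time}, $\pttc$ executes at most $k$ rounds of Step (1). Since the graph has $k$ nodes, there are at most $k(k-1) \leq k^2$ arcs, and therefore at most $k^2$ fresh Laplace draws per round. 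The total count is thus bounded by $k^3$, and a union bound yields
\[
\Prob\!\left( \max_{t \in [k]} \max_{e \in A} |Z_e^t| > E \right) \;\leq\; k^3 \cdot \frac{\beta}{k^3} \;=\; \beta,
\]
which is exactly the first conclusion of the claim.

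To obtain the second conclusion, I would condition on the good event $\{|Z_e^t| \leq E \text{ for all } t, e\}$ that has just been shown to hold with probability at least $1 - \beta$. Recall the definition $\hat w_e = \max\{w_e + Z_e - 2E,\, 0\}$ from Step (1). On every arc where $\hat w_e$ is positive (which is all that is ever relevant for cycle clearing in Step (2)), the max does not bind, so $\hat w_e = w_e + Z_e - 2E$ and hence $w_e - \hat w_e = 2E - Z_e$. Combining with $-E \leq Z_e \leq E$ gives
\[
E \;\leq\; 2E - Z_e \;\leq\; 3E,
\]
which is the desired two-sided bound \eqref{eq:webound}.

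I do not expect a real obstacle here; the only thing that requires care is the accounting of noise terms, namely that the $-2E$ downward shift together with the $\pm E$ fluctuation yields an effective underestimate of the true weight by between $E$ and $3E$. This underestimate is precisely what is needed later so that any cycle actually cleared by the algorithm has enough \emph{true} capacity to execute, while the ``excess'' discarded weight (at most $3E$ per arc) controls the Pareto inefficiency introduced by privacy noise.
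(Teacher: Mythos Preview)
Your proposal is correct and follows essentially the same route as the paper: a Laplace tail bound, a union bound over at most $k^3$ noise draws (using Lemma~\ref{lem:running_time} for the round count), and then the arithmetic $w_e - \hat w_e = 2E - Z_e \in [E,3E]$. If anything you are slightly more careful than the paper in noting that the $\max\{\cdot,0\}$ only fails to bind when $\hat w_e > 0$, which is the only regime in which the lower bound of \eqref{eq:webound} is ever invoked.
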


We now want to bound the number of agents that may be left at a node
when it has no outgoing arcs with sufficient weight .

\begin{lemma}
  Assuming the condition in \eqref{eq:webound}, if \pttc cannot find a cycle with
  noisy weight at least $1$, then there exists a node $v$ such that $n_v < D$ for $D =
  (3E+1)k$.
\label{lem:delete}
\end{lemma}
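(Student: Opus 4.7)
The plan is to argue via the standard DAG sink argument: if no clearable cycle exists, then restricting to arcs of noisy weight at least $1$ yields an acyclic subgraph, which must have a sink, and the sink has small out-weight. The bound $D = (3E+1)k$ then falls out by converting from the noisy node weight to the true node weight using \eqref{eq:webound}.

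More concretely, first I would define $A^\star = \{e \in A : \hat w_e \geq 1\}$ and consider the subgraph $G^\star = (V, A^\star)$. The hypothesis that \pttc cannot find a cycle with $\lfloor \hat w_e \rfloor > 0$ along every arc is precisely the statement that $G^\star$ contains no directed cycle. Thus $G^\star$ is a DAG on $|V| \le k$ vertices and therefore admits a sink vertex $v$ — one with no outgoing arc in $A^\star$. For this $v$, every outgoing arc $(v,z) \in A$ satisfies $\hat w_{(v,z)} < 1$, so summing over the (at most $k$) outgoing arcs gives $\hat n_v = \sum_{z \in V} \hat w_{(v,z)} < k$, which is exactly the intermediate estimate asserted inside Step (4) of \pttc.

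Next, I would translate this bound on $\hat n_v$ into the claimed bound on the true node weight $n_v$. By Claim~\ref{lowerror}, the condition \eqref{eq:webound} holds, so $w_e - \hat w_e \leq 3E$ for every arc $e$ in the current graph. Since the current node weight decomposes as $n_v = \sum_{z \in V} w_{(v,z)}$ (each agent still possessing good $v$ sits on exactly one outgoing arc, namely the one pointing to her current favorite remaining good), adding the per-arc slack across the at most $k$ outgoing arcs yields
\[ n_v \;\leq\; \hat n_v + 3E \cdot k \;<\; k + 3Ek \;=\; (3E+1)k \;=\; D, \]
which is the desired inequality.

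The only thing to be careful about is the identification of $n_v$ with $\sum_z w_{(v,z)}$, since this is a \emph{current}-round node weight rather than the initial endowment count (some agents with good $v$ may already have been served in earlier cycles, and the remaining ones have migrated across outgoing arcs as preferred targets were deleted in Step (4)); this is really a definitional bookkeeping check rather than a mathematical obstacle. The genuine content of the argument is the DAG-sink observation in the first step — but once that is in place, the rest is just the additive noise-to-truth conversion via \eqref{eq:webound}.
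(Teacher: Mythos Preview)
Your proposal is correct and essentially identical to the paper's proof: the paper asserts (without naming it) the same DAG-sink observation to find a node $v$ with all outgoing noisy weights below $1$, then applies the per-arc bound $w_e < 3E+1$ from \eqref{eq:webound} and sums over at most $k$ outgoing arcs to get $n_v < (3E+1)k$. The only cosmetic difference is that you first sum the noisy weights to get $\hat n_v < k$ and then add the $3Ek$ slack, whereas the paper bounds each $w_e$ individually and then sums; these are trivially equivalent.
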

\proof
  If there is no cycle of noisy weight at least $1$, then there must
  be some node $v$ such that every outgoing arc has $\hat w_e < 1$.
  From our Low Error Claim, we then know that the exact arc
  weight $w_e < 3E+1$ from \eqref{eq:webound} for every outgoing arc
  of $v$.  Hence we can count how may agents are on the node, $n_v =
  \sum_{e: e=(v,z), z \in V } w_e < k(3E+1) = D$.
    \endproof

It remains to show that the resulting allocation is asymptotically Pareto optimal.

\begin{theorem}
  For $\epsilon,\delta_1,\delta_2,\beta>0$ we have that $\pttc$
  outputs an $\alpha$-PO allocation for
\[  \alpha =  \tilde\Oh \left( \frac{k^{9/2}}{\epsilon n}  \right)\]
   with probability $1-\beta$. Hence, \pttc is asymptotically Pareto optimal.
\label{thm:pareto}
\end{theorem}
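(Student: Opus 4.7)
The plan is to condition on the Low Error event from Claim~\ref{lowerror} (which holds with probability at least $1 - \beta$) and to prove deterministically under that event a bound of the form $|S| = O(k^{3} E)$ on the number of strictly improved agents under any Pareto improvement $\pi'$ of the PTTC allocation $\pi$, where $S = \{i : \pi'(i) \succ_i \pi(i)\}$ and $E$ is the high-probability noise bound defined in \eqref{eq:error}. Since expanding \eqref{eq:eps_prime} gives $E = \tilde{O}(k^{3/2}/\epsilon)$, this yields $\alpha n \leq |S| = \tilde{O}(k^{9/2}/\epsilon)$, matching the claim.

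First I would bound $|R|$, where $R$ is the set of agents who receive their own endowment in $\pi$. By Lemma~\ref{lem:running_time} the algorithm runs for at most $k$ rounds, and Step~(4) is entered at most $k$ times because each such entry deletes a distinct node. By Lemma~\ref{lem:delete}, each deletion removes a node $v$ with fewer than $D = (3E+1)k$ remaining agents holding endowment of type $v$, and these are exactly the agents who end up stuck in $\pi$. Therefore $|R| < kD = k^{2}(3E+1)$.

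The central step is the following cycle-structural claim. Fix any decomposition of the reallocation $\pi \to \pi'$ into trading cycles (obtained by matching copies of each good type in $\pi$ and $\pi'$). I claim that every cycle $C$ containing a strictly improved agent must also contain at least one agent from $R$. Suppose for contradiction that all agents of $C$ lie in the traded set $T$, and let $S_1,\ldots,S_p$ be the strictly improved agents of $C$ in cyclic order, separated by chains of cycle-indifferent agents. A cycle-indifferent agent in $T$ receives in $\pi'$ the same good type it had in $\pi$, so the ``carried good type'' is constant along each indifferent chain, and $\pi'(S_{a+1})$ has the same good type as $\pi(S_a)$. Since $S_{a+1}\in T$, its allocation $\pi(S_{a+1})$ was its top-ranked good among the remaining nodes of $V$ at its trade round $t_{S_{a+1}}$; because $\pi(S_a) \succ_{S_{a+1}} \pi(S_{a+1})$, the node of $\pi(S_a)$ must have been deleted strictly before round $t_{S_{a+1}}$. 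But that node was still present in $V$ at round $t_{S_a}$, when $S_a$ traded for it, so $t_{S_a} < t_{S_{a+1}}$. Iterating around the cycle produces $t_{S_1} < t_{S_2} < \cdots < t_{S_p} < t_{S_1}$, a contradiction.

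Applied between consecutive $R$-agents of a general cycle $C$, the same comparison shows that along each of the $r = |R\cap C|$ sub-arcs delimited by $R$-agents, the traded strictly improved agents on that sub-arc have strictly increasing trade rounds; since there are at most $k$ rounds by Lemma~\ref{lem:running_time}, each sub-arc contributes at most $k$ such agents to $S$. Accounting additionally for the at most $r$ $S$-agents that themselves lie in $R$, we get $|S \cap C| \leq (k+1)r$, and summing over cycles yields $|S| \leq (k+1)|R| = O(k \cdot kD) = O(k^{3} E)$. Substituting the form of $E$ from \eqref{eq:error} and \eqref{eq:eps_prime} and dividing by $n$ gives the advertised bound. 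The main obstacle is the cycle-structural claim itself: a cycle may freely interleave indifferent $T$-agents (which preserve the carried good type but impose no trade-round constraint) with $R$-agents (which interrupt the cyclic chain of trade-round inequalities), so one has to verify carefully that indifferent chains preserve the good type \emph{exactly}, which is what lets the single-step ``$t_{S_a} < t_{S_{a+1}}$'' comparison telescope between non-adjacent $S$-agents and close the cycle.
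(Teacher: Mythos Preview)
Your approach is correct and reaches the same $O(k^{2}D)=O(k^{3}E)$ count, but it is genuinely different from the paper's argument. The paper proceeds by a direct counting: it labels good types $1,\dots,k$ by their order of elimination, partitions agents into sets $S(1),\dots,S(k),S(n)$ by the round in which they traded (with $S(n)$ the never-selected agents), and tracks $n_t(g)$, the number of copies of good $g$ that $\pi'$ has not yet distributed to agents cleared in rounds $1,\dots,t$. A telescoping calculation over the recursions $n_t(r)=n_{t-1}(r)-\Delta_t(r)$ then yields $\Delta_t\le (t-1)D$ for each round $t$ (and $\Delta_n\le kD$), summing to $\Delta=O(k^{2}D)$. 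Your route instead exploits the cycle structure of the reallocation $\pi\to\pi'$ directly: indifferent agents carry a fixed good type along the cycle, so consecutive strictly improved traded agents $S_a,S_{a+1}$ satisfy $\pi'(S_{a+1})=\pi(S_a)$, which forces $t_{S_a}<t_{S_{a+1}}$ and hence at most $k$ of them between consecutive $R$-agents. This is more structural and arguably more transparent than the paper's bookkeeping; the paper's argument, on the other hand, avoids having to fix (and reason about) a particular cycle decomposition of the multiset reallocation.

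One small imprecision worth flagging: you define $R$ as the agents who ``receive their own endowment in $\pi$,'' but that set would also include any agent who traded on a self-loop $(g_i,g_i)$, and the bound $|R|<kD$ would not follow for that larger set. Your argument actually needs (and your phrase ``end up stuck'' confirms you intend) $R$ to be the set of agents never selected in any cycle, i.e., those assigned $\pi(i)=g_i$ via the node-deletion step. With that definition both the size bound $|R|<kD$ and the cycle argument (self-loop traders sit in $T$ and carry a valid trade round, so the monotonicity step still applies to them) go through as written.
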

\proof
  We refer the reader to the formal description of our algorithm
  (Algorithm \ref{PTTC}) in the appendix.  We write good type $g=1$ to be the first
  good eliminated in $\pttc$, good $g=2$ the second, and so on, where
  the good types $g \in \G=[k]$.  We will compare the allocation $\pi$
  from \pttc with any other allocation $\pi'$ that Pareto dominates
  $\pi$, i.e. $\pi'(i) \succeq_i \pi(i)$ and $\exists j$ such that
  $\pi'(j) \succ \pi(j)$. We will count the number of agents who could
  possibly prefer $\pi'$.   It suffices to bound the quantity: $\Delta = |\{i \in N:
  \pi'(i) \succ_i \pi(i) \}|.$

  At each round, there are a small number of goods that $\pttc$ ignores and just returns them to their owner at the end.  Removing these goods from the exchange prevents feasible exchanges between other agents and the removed goods.  We seek to bound the total number of agents that may get a better trade in $\pi'$ but the goods they were allocated were not part of the exchange when $\pttc$ allocated the good they got in $\pi$.   
  
  Our algorithm deletes a node only at the end
  of a round, when the noisy supply of at least one good falls
  below $1$. Thus, by Claim~\ref{lowerror} and Lemma~\ref{lem:delete},
  there are at most $D= k(3E+1)$ copies of goods, with probability at least $1-\beta$. We condition on this
  bounded error for the remainder of the proof.

  Recall that $S_e(t,\tau)$ is the set of people that traded (or were satisfied) that
  were along arc $e$ at round $t$ when the $\tau$th cycle $C(t,\tau)$
  was selected in \pttc.  We define the set $S(t)$ to be all the
  people that traded at round $t$: $S(t) = \cup_{\tau \in [k^2]}
  \cup_{e \in C(t,\tau)} S_e(t,\tau).$

  Some agents are not cleared at any round of \pttc, and these agents
  receive the good they were endowed with.  We refer to those people
  that were never selected as $S(n) = N \backslash \{\cup_{t=1}^k S(t)
  \}$.  We then have a partition of $N = \cup_{t=1}^kS(t) \cup S(n)$.

  We now denote the quantity $\Delta_r = |\{i \in S(r): \pi'(i)
  \succ_i \pi(i) \}|$ for $r = 1, \cdots, k, n$: note that $\sum_{r
    \in \{1,\ldots,k, n\}} \Delta_r = \Delta$. Further, we would like
  to refer to the number of goods of type $g$ that were allocated to agents in
  $S(r)$ in $\pi'$ but not in $\pi$, which we define as $\Delta_r(g)$.
  Note that for $g\geq r$ we know that $\Delta_r(g) = 0$ because
  agents cleared by \pttc receive their favorite good among the one's
  remaining at the round in which they are cleared, and all goods
  $g\geq r$ are available at round $r$.  More formally we have:
  \begingroup
\everymath{\scriptstyle}
\[\Delta_r(g) = |\{ i \in S(r) : \pi'(i) = g \succ_i \pi(i) \}| \text{ for } r = 1, \cdots, k, \quad g <r.\]
\endgroup
Thus we can write: $\Delta_r = \sum_{g = 1}^{r-1} \Delta_r(g)$. Note
that at the first round $r=1$ that $\Delta_1 = 0$ because everyone
that was selected gets their favorite type of good.  Let us also
define $\Delta_r(g,h)$ to be the number of goods of type $g$ allocated
to agents in $S(r)$ in $\pi'$ who received good $h$ in $\pi$:
\begingroup
\everymath{\scriptstyle}
\[\Delta_r(g,h) = |\{ i \in S(r) : \pi'(i) = g \succ_i h = \pi(i) \}| \quad \text{ for } r = 1, \cdots, k, \quad h \geq r \quad g <r.\]
\[\Delta_r(g)=\sum_{h = r}^k \Delta_r(g,h) \qquad g<r\]
\endgroup
We denote the number of initial goods of type $g$ as $n(g)$ for $g \in
[k]$: i.e. $n(g) = |{i : g_i = g}|$.   We define $n_t(g)$ as the number
of goods of type $g$ that are not allocated to members of $\cup_{r =
  1}^t S(r)$ in $\pi'$ (our notation uses the round as a subscript
and the good as an argument in parentheses), i.e.
\begingroup
\everymath{\scriptstyle}
\[n_t(g) = n(g) - \sum_{r = 1}^t |\{ i \in S(r): \pi'(i) = g \}|.\]
\endgroup
We will now bound the quantities $n_t(g)$.  Note that for the first round, we have
\begingroup
\everymath{\scriptstyle}
\[n_1(1) = n(1) - |\{ i \in S(1): \pi'(i) = 1 |\leq D \]
\endgroup
because each person in $S(1)$ got his favorite good in \pttc, and
since $\pi'$ Pareto dominates $\pi$, $\pi'$ must have made the same
allocation as $\pi$ for $S(1)$, except for the at most $D$ agents
\pttc never selected that had a good of type 1. The agents that get
selected later can Pareto improve because they could have selected
good type 1, but \pttc has deleted that good for future rounds. These
lost copies of good 1 can potentially be used in allocation $\pi'$
to improve the outcome of agents selected at future rounds by
\pttc. We will account for these improvements by keeping track of the
 $n_r(1)$ copies of good 1 remaining at each round, where
 \begingroup
\everymath{\scriptstyle}
\[n_r(1) = n_{r-1}(1) - \Delta_r(1) \qquad \text{ for } r = 2, \cdots, k.\]
\endgroup
We then continue in this fashion with good type 2 in order to bound
$n_2(2)$.  We know that $\pi'$ allocates $\Delta_2(1)$ goods of type
$1$ to agents in $S(2)$.  Because $\pi'$ Pareto dominates $\pi$, it
must be that $\pi'$ makes the same allocations as $\pi$ among agents
in $S(2)$, except for the agents that $\pi'$ matches to good type 1
that got good type 2 in $\pi$ (these are the only agents who are
possibly not getting their favorite good among those ``remaining'' in
$\pi'$).  We then bound the number of goods of type 2 that have not been distributed to people in $S(1)$ or $S(2)$ and then keep track of the number of these goods that remain to give to the people that are selected in future rounds but get good type 2 in $\pi'$,
\begingroup
\everymath{\scriptstyle}
\[n_2(2) \leq D + \Delta_2(1,2)\]
\[n_r(2) = n_{r-1}(2) - \Delta_r(2) \qquad r > 2.\]
\endgroup
We now consider $n_3(3)$, the number of goods of type $3$ that $\pi'$
has not allocated to members in $S(1), S(2)$, and $S(3)$.  This is the
same as the number of goods of type 3 that $\pi$ will never give to
selected people (at most $D$) in addition to the goods of type 3 that $\pi$ gave to people in $S(2)$ and $S(3)$ that $\pi'$ gave a
different good to, i.e. the $ \Delta_2(1,3)$ people that got good 3 at
round 2 in $\pi$, but $\pi'$ gave them good 1, along with the
$\Delta_3(1,3)$ and $\Delta_3(2,3)$ people that got good 3 at round 3
in $\pi$, but $\pi'$ gave them good 1 and 2 respectively.  This  implies both
\begingroup
\everymath{\scriptstyle}
\[n_3(3) \leq D + \Delta_2(1,3) + \Delta_3(1,3) + \Delta_3(2,3)\]
\[n_r(2) = n_{r-1}(3) - \Delta_r(3) \qquad r > 3.\]
\endgroup
We then generalize the relation for $r \geq 3$:
\begingroup
\everymath{\scriptstyle}
\begin{equation}
n_r(r) \leq D + \sum_{\ell = 2}^{r} \sum_{g = 1}^{\ell-1} \Delta_\ell(g,r)
\label{eq:nrr}
\end{equation}
\begin{equation}
n_t (r) = n_{t-1}(r) - \Delta_t(r) \qquad t>r.
\label{eq:ntr}
\end{equation}
\endgroup
Because the number of goods remaining at each round must be
nonnegative, we have
\begingroup
\everymath{\scriptstyle}
\begin{equation}
\Delta_t(r) \leq n_{t-1}(r).
\label{eq:nonneg}
\end{equation}
\endgroup
Recall that $\Delta_1 = 0$.  We also have:
$
\Delta_2 = \Delta_2(1) \leq n_1(1) \leq D.
$
For round $3\leq t\leq k$, we use \eqref{eq:nrr}, \eqref{eq:ntr}, and
\eqref{eq:nonneg} to get:
\begingroup
\everymath{\scriptstyle}
\begin{align*}
\Delta_t & = \sum_{g = 1}^{t-1} \Delta_t(g) \underbrace{\leq}_{\eqref{eq:nonneg}} \sum_{g = 1}^{t-1} n_{t-1}(g)  \underbrace{=}_{\eqref{eq:ntr}} \sum_{r= 1}^{t-1} n_r(r) - \sum_{g=1}^{t-2} \sum_{r=g+1}^{t-1} \Delta_{r}(g)   \\
& \underbrace{\leq}_{\eqref{eq:nrr}} (t-1)D + \sum_{r=2}^{t-1} \sum_{\ell = 2}^{r} \sum_{g = 1}^{\ell-1} \Delta_\ell(g,r) - \sum_{g=1}^{t-2} \sum_{r=g+1}^{t-1} \Delta_{r}(g) \\
& = (t-1)D + \sum_{\ell = 2}^{t-1} \sum_{g=1}^{\ell-1} \underbrace{ \sum_{r = \ell}^{t-1} \Delta_\ell(g,r)}_{\leq \Delta_\ell(g) } - \sum_{r = 2}^{t-1} \sum_{g = 1}^{r-1} \Delta_r(g) \leq (t-1)D.
\end{align*}
\endgroup
We next bound $\Delta_n$.  All agents who were never cleared might be
able to get a better good in $\pi'$, which can not be more than the total number of 
goods that $\pi'$ did not allocate to any of the selected people.
Thus we have
\begingroup
\everymath{\scriptstyle}
\begin{align*}
\Delta_n & \leq \sum_{g = 1}^k n_k(g) = \sum_{r = 1}^k n_r(r) - \sum_{g = 1}^{k-1} \sum_{r = g+1}^{k} \Delta_r(g) \\
& \leq kD  + \sum_{\ell = 2}^{k} \sum_{g=1}^{\ell-1} \underbrace{\sum_{r =\ell}^{k} \Delta_\ell(g,r)}_{ = \Delta_\ell(g)} - \sum_{r = 2}^{k} \sum_{g = 1}^{r-1} \Delta_r(g) = kD.
\end{align*}
\endgroup
We then sum over $\Delta_t$ for every round $t$ to get $\Delta$:
\begingroup
\everymath{\scriptstyle}
\begin{align*}
  \Delta_1& +  \Delta_2 + \Delta_3 + \cdots + \Delta_k + \Delta_n  \leq \sum_{j = 1}^k j D = \Oh\left( k^2 D \right)  \\
          & = \Oh\left( \frac{k^{3} \left(k\sqrt{k\log(1/\delta_2)}+\log(k^3/\beta)\sqrt{k\log(1/\delta_1)}\right)}{\epsilon} \right)
\end{align*}
\endgroup
where the last equality followed from $D \leq k(3E+1)$, from
$\epsilon'$ in \eqref{eq:eps_prime}, and $E$ in \eqref{eq:error}.
\endproof

\subsection{Privacy Analysis of $\pttc$}
    In order to prove that $\pttc$ is marginally differentially
    private, we first present some known results on differential
    privacy and prove some lemmas that will be useful.  We first state the definition of
    sensitivity of a function, which will help when we define the
    Laplace Mechanism \citep{DMNS06} which is a subroutine of \pttc.
\begin{definition}[Sensitivity \citep{DMNS06}]
  A function $\phi: \cX^n \to \R^m$ has \emph{sensitivity} $\nu_\phi$ defined as
\[\nu_{\phi} =\max_{i \in [n] } \max_{\bbx_{-i}\in \cX^{n-1}}\max_{ x_i \neq x_i'  } || \phi(\bbx_{-i}, x_i ) - \phi(\bbx_{-i}, x_i') ||_{1}.\]
\end{definition}
The Laplace Mechanism, given in Algorithm \ref{Lap_Mech}, answers a
numeric query $\phi: T^n \to \R^m$ by adding noise to each component
of $\phi$'s output in a way which is differentially private.

\ifnum\final=0
\begin{algorithm}
\caption{Laplace Mechanism}\label{Lap_Mech}
\begin{algorithmic}[0]
\INPUT : Database $\bbx$.
\OUTPUT : An approximate value for $\phi$
\Procedure {$M_L$} {$\epsilon, g$}$(\bbx)$
\State $\hat \phi = \phi(\bbx) + (Z_1, \cdots, Z_m) \qquad Z_i \stackrel{i.i.d.}{\sim} $ Lap$(\delta(\phi)/\epsilon)$
\State {\bf return} $\hat \phi$.
\EndProcedure
\end{algorithmic}
\end{algorithm}

\else
\begin{algorithm}[h]
\SetAlgoNoLine
\KwIn{Database $\bbx$}
\KwOut{An approximate value for $\phi$}
{ $\qquad \hat \phi= \phi(\bbx) + (Z_1, \cdots, Z_m) \qquad Z_i \stackrel{i.i.d.}{\sim}  \text{ Lap}(\nu_{\phi}/\epsilon)$\\
 \Return $\hat\phi$
}
\caption{Laplace Mechanism $M_L (\epsilon,\phi) (\bbx)$}
\label{Lap_Mech}
\end{algorithm}
\fi


We now state the privacy guarantee for the Laplace Mechanism $M_L$.
\begin{theorem}[\citep{DMNS06}]
  $M_L(\epsilon, \phi)$ is $\epsilon$-differentially private for any
  $\phi: \cX^n \to \R^n$ with bounded sensitivity.
  \label{thm:lap_priv}
\end{theorem}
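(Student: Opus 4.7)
The plan is to prove $\epsilon$-differential privacy of $M_L$ by the standard pointwise density-ratio argument. Fix any two neighboring databases $\bbx, \bbx' \in \cX^n$ that differ in exactly one coordinate, and let $b = \nu_\phi / \epsilon$ be the Laplace scale parameter used by $M_L$. Since the noise vector $(Z_1, \ldots, Z_m)$ consists of independent $\mathrm{Lap}(b)$ coordinates, the output $\hat\phi$ under input $\bbx$ has a product density on $\R^m$ given by $f_{\bbx}(y) = \prod_{i=1}^m \frac{1}{2b}\exp(-|y_i - \phi(\bbx)_i|/b)$, and analogously for $f_{\bbx'}$.

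The key step is to bound the ratio of densities at an arbitrary point $y \in \R^m$. Taking logs,
\[
\log \frac{f_{\bbx}(y)}{f_{\bbx'}(y)} \;=\; \frac{1}{b}\sum_{i=1}^m\Bigl( |y_i - \phi(\bbx')_i| - |y_i - \phi(\bbx)_i| \Bigr).
\]
By the reverse triangle inequality applied coordinatewise, each summand is at most $|\phi(\bbx)_i - \phi(\bbx')_i|$, so the total is bounded by $\|\phi(\bbx) - \phi(\bbx')\|_1 / b$. Since $\bbx$ and $\bbx'$ are neighbors, the definition of sensitivity gives $\|\phi(\bbx) - \phi(\bbx')\|_1 \le \nu_\phi$, and our choice $b = \nu_\phi/\epsilon$ then yields $f_{\bbx}(y) \le e^\epsilon f_{\bbx'}(y)$ pointwise.

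To conclude, I would integrate this pointwise inequality against the Lebesgue measure on any measurable set $B \subseteq \R^m$:
\[
\Prob(M_L(\bbx) \in B) = \int_B f_{\bbx}(y)\,dy \le e^\epsilon \int_B f_{\bbx'}(y)\,dy = e^\epsilon \Prob(M_L(\bbx') \in B),
\]
which is exactly $\epsilon$-differential privacy (with $\delta = 0$). There is essentially no obstacle here — the only thing to be careful about is invoking the reverse triangle inequality in the correct direction (so that the bound matches the $\ell_1$ sensitivity, not the $\ell_\infty$ norm), and making sure the argument is symmetric in $\bbx$ and $\bbx'$ so that the bound holds in both directions. Boundedness of the sensitivity $\nu_\phi$ is used only to ensure $b$ is finite and the densities are well-defined.
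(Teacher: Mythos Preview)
Your proof is correct and is essentially the standard density-ratio argument from \citet{DMNS06}. Note, however, that the paper does not actually prove this theorem: it is stated as a known result with a citation to the original source, so there is no paper proof to compare against. Your argument matches the classical proof in the cited reference.
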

Our algorithm \pttc uses the Laplace Mechanism to
modify arc weights at each round.

One of the most useful properties given by differential privacy is its
ability to compose: running a collection of private mechanisms is also
private, with a loss in the privacy parameters which depends upon how
the composition is done.  We will need to use two composition
theorems.  The first shows that the privacy parameters add when we
compose two differentially private mechanisms, and the second from
\cite{DRV10} gives a better composition guarantee even with many
adaptively chosen mechanisms.
\begin{theorem}
  If mechanism $M_1: \mathcal{X}^n \to O$  is
  $(\epsilon_1, \delta_1)$-differentially private, and another
  mechanism $M_2:\mathcal{X}^n \times O \to R$ is
  $(\epsilon_2,\delta_2)$-differentially private in its first
  component, then $M: \mathcal{X}^n \to R$ is
  $(\epsilon_1+\epsilon_2,\delta_1+\delta_2)$ differentially private
  where
$M(\mathbf{x}) = M_2(\mathbf{x},M_1(\mathbf{x})).$
\label{thm:comp}
\end{theorem}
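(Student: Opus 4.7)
The plan is to verify the standard two-fold composition inequality by conditioning on the output of $M_1$ and then invoking the two differential privacy bounds in sequence. Fix neighboring databases $\mathbf{x},\mathbf{x}' \in \cX^n$ (differing in a single coordinate) and an arbitrary measurable event $B \subseteq R$. The goal is to establish
$$\Prob[M(\mathbf{x}) \in B] \leq e^{\epsilon_1+\epsilon_2}\,\Prob[M(\mathbf{x}') \in B] + \delta_1 + \delta_2.$$
By the definition $M(\mathbf{x})=M_2(\mathbf{x},M_1(\mathbf{x}))$ and conditioning on the value of $M_1$,
$$\Prob[M(\mathbf{x}) \in B] = \int_O \Prob[M_2(\mathbf{x},o)\in B]\, d\mu_\mathbf{x}(o),$$
where $\mu_\mathbf{x}$ is the law of $M_1(\mathbf{x})$.

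First I would use the $(\epsilon_2,\delta_2)$-differential privacy of $M_2$ in its first argument, applied pointwise for each fixed $o\in O$: writing $h(o) := \Prob[M_2(\mathbf{x}',o)\in B]$, we get $\Prob[M_2(\mathbf{x},o)\in B] \leq e^{\epsilon_2} h(o) + \delta_2$. Integrating against $\mu_\mathbf{x}$ yields
$$\Prob[M(\mathbf{x}) \in B] \;\leq\; e^{\epsilon_2}\,\Exp_{o\sim \mu_\mathbf{x}}[h(o)] + \delta_2.$$
Next I would compare the expectation of $h$ under $\mu_\mathbf{x}$ and under $\mu_{\mathbf{x}'}$. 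Since $h$ takes values in $[0,1]$, the layer-cake identity gives $\Exp_{\mu_\mathbf{x}}[h] = \int_0^1 \mu_\mathbf{x}(\{o : h(o)>t\})\,dt$. Applying the $(\epsilon_1,\delta_1)$-DP bound on $M_1$ to each level set $\{h>t\}$ and integrating produces $\Exp_{\mu_\mathbf{x}}[h] \leq e^{\epsilon_1}\Exp_{\mu_{\mathbf{x}'}}[h] + \delta_1$, and $\Exp_{\mu_{\mathbf{x}'}}[h] = \Prob[M(\mathbf{x}')\in B]$ by the same conditioning step with $\mathbf{x}$ replaced by $\mathbf{x}'$.

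Chaining the two estimates I obtain
$$\Prob[M(\mathbf{x}) \in B] \;\leq\; e^{\epsilon_1+\epsilon_2}\,\Prob[M(\mathbf{x}') \in B] + e^{\epsilon_2}\delta_1 + \delta_2.$$
The main obstacle is that this naive bound produces $e^{\epsilon_2}\delta_1$ rather than the advertised $\delta_1$. To sharpen it one uses the Kasiviswanathan--Smith coupling characterization of $(\epsilon_1,\delta_1)$-DP: there is a joint coupling of $M_1(\mathbf{x})$ and $M_1(\mathbf{x}')$ and a ``bad'' event of probability at most $\delta_1$ such that, off the bad event, the conditional laws satisfy $e^{\epsilon_1}$-pointwise indistinguishability. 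Splitting the integral into the bad-event contribution (bounded directly by $\delta_1$, without picking up the $e^{\epsilon_2}$ multiplier, because the $M_2$-layer is itself a probability $\leq 1$) and the good-event contribution (where only $\epsilon_1$ shows up multiplicatively) yields the clean additive slack $\delta_1 + \delta_2$ and completes the proof. Symmetrizing the inequality by swapping the roles of $\mathbf{x}$ and $\mathbf{x}'$ gives both directions of the differential privacy definition.
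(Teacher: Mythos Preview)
The paper does not actually prove Theorem~\ref{thm:comp}; it is stated as a standard composition result and left unproved. So there is nothing to compare your argument against directly. Your proof is correct: the layer-cake step is the right way to push an expectation of a $[0,1]$-valued function through an $(\epsilon_1,\delta_1)$-DP bound on $M_1$ without illicitly treating DP as a pointwise density inequality, and you are also right that the naive chaining only yields $e^{\epsilon_2}\delta_1+\delta_2$. Invoking the Kasiviswanathan--Smith coupling characterization to isolate a $\delta_1$-probability ``bad'' event and obtain the clean $\delta_1+\delta_2$ is a valid way to close the gap.

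For context, the paper does prove the closely related Lemma~\ref{lem:mdp} by the same condition-on-$M^1$-and-integrate strategy, but in the opposite order (it changes the input to $M^1$ first, then to $M^2_j$), and it handles the $\delta$ terms informally: it writes the $(\epsilon_1,\delta_1)$ bound as if it were a pointwise density inequality and then appeals to a ``cap the integrand at $1$'' observation to keep the extra $e^{\epsilon_1}$ from multiplying $\delta_2$. Your route is more careful at the cost of citing an external characterization; the paper's route is shorter but leans on a common abuse of notation. Either way, the substance---condition on the first mechanism's output, apply the two DP bounds in sequence, and manage the additive slack---is the same.
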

If we were to only consider the previous composition theorem, then the
composition of $m$ mechanisms that are
$(\epsilon,\delta)$-differentially private mechanisms would lead to an
$(m\epsilon,m\delta)$-differentially private mechanism.  However, we
can improve on the $m\epsilon$ parameter at the cost of making the
additive $m\delta$ term larger.  The following theorem gives this
modified composition guarantee that holds even when the sequence of
differentially private mechanisms is chosen adaptively by an
adversary, as a function of the output of previous mechanisms.
 
\begin{theorem}[$m$-Fold Adaptive Composition \citep{DRV10}]
  Fix $\delta>0$.  The class of $(\epsilon',\delta')$ differentially
  private mechanisms satisfies $(\epsilon,m\delta' + \delta)$
  differential privacy under $m$-fold adaptive composition for
\[\epsilon ' = \frac{\epsilon}{\sqrt{8 m \log(1/\delta)}}.\]
\label{thm:advanced_comp}
\end{theorem}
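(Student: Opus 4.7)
The plan is to analyze the \emph{privacy loss random variable} of the composed mechanism. Fix a pair of neighboring databases $\mathbf{x}, \mathbf{x}'$ and an adaptive sequence of $(\epsilon',\delta')$-differentially private mechanisms $M_1,\ldots,M_m$ with outputs $O_1,\ldots,O_m$. For each step $i$, define
\[
Y_i \;=\; \ln \frac{\Prob[O_i = o_i \mid O_1,\ldots,O_{i-1},\, \mathbf{x}]}{\Prob[O_i = o_i \mid O_1,\ldots,O_{i-1},\, \mathbf{x}']},
\]
so that the total log-likelihood ratio between the full transcript under $\mathbf{x}$ and under $\mathbf{x}'$ equals $\sum_i Y_i$. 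First I would reduce to the pure DP case $\delta'=0$, deferring the $\delta'>0$ slack to the very end. In the pure case, $(\epsilon',0)$-DP directly yields $|Y_i|\leq \epsilon'$ almost surely. The crucial refinement is the quadratic bound on the conditional expectation
\[
\Exp\bigl[Y_i \bigm| O_1,\ldots,O_{i-1}\bigr] \;\leq\; \epsilon' \cdot \frac{e^{\epsilon'}-1}{e^{\epsilon'}+1} \;\leq\; \epsilon'^{\,2},
\]
which follows because this conditional expectation is exactly the KL divergence between the two conditional output distributions, and pure DP in both directions ($e^{-\epsilon'} \leq $ ratio $\leq e^{\epsilon'}$) forces the KL to be $O(\epsilon'^2)$ rather than $O(\epsilon')$.

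Next I would form the centered martingale differences $Z_i = Y_i - \Exp[Y_i\mid O_1,\ldots,O_{i-1}]$, which satisfy $|Z_i|\leq 2\epsilon'$ with respect to the natural filtration. Azuma-Hoeffding then gives
\[
\Prob\Bigl[\sum_{i=1}^m Z_i > t\Bigr] \;\leq\; \exp\!\bigl(-t^2/(8 m \epsilon'^{\,2})\bigr).
\]
Setting the right side equal to $\delta$ yields $t = \epsilon'\sqrt{8m\log(1/\delta)}$, so with probability $\geq 1-\delta$ the total privacy loss satisfies $\sum_i Y_i \leq m\epsilon'^{\,2} + \epsilon'\sqrt{8m\log(1/\delta)}$. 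Plugging in the choice $\epsilon' = \epsilon/\sqrt{8m\log(1/\delta)}$ renders the quadratic term of order $\epsilon^2/\log(1/\delta)$ and hence negligible, so the whole expression is at most $\epsilon$ up to absolute constants that can be folded into $\epsilon'$.

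I would then convert this high-probability bound on the privacy loss into an $(\epsilon,\delta)$-DP guarantee by the standard two-part decomposition: for any measurable event $B$, split $\Prob_{\mathbf{x}}[M(\mathbf{x})\in B]$ according to whether the realized privacy loss is $\leq \epsilon$ or not; the ``good'' part is bounded by $e^\epsilon \Prob_{\mathbf{x}'}[M(\mathbf{x}')\in B]$, and the ``bad'' part has total mass $\leq \delta$. Finally, to lift the argument from $(\epsilon',0)$ to general $(\epsilon',\delta')$ mechanisms, I would use the standard characterization of approximate DP via coupling: each $M_i$ can be coupled with an $(\epsilon',0)$-DP mechanism except on an event of probability at most $\delta'$, and a union bound across the $m$ steps contributes the additional $m\delta'$ slack.

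The main obstacle, and the reason this theorem is nontrivial, is establishing the tight $O(\epsilon'^{\,2})$ bound on $\Exp[Y_i\mid O_{<i}]$. The worst-case bound $|Y_i|\leq \epsilon'$ only yields $m\epsilon'$ via naive summation, which is the trivial linear composition; obtaining the $\sqrt{m}$ improvement relies on exploiting \emph{both} the $e^{\epsilon'}$ and $e^{-\epsilon'}$ directions of the pure DP inequality simultaneously, which is what forces the cancellation in expectation and drives the square-root savings over naive composition.
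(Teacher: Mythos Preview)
The paper does not supply its own proof of this theorem; it is quoted from \citet{DRV10} and used as a black box (see the privacy analysis of \pttc, where Theorem~\ref{thm:advanced_comp} is invoked in Lemmas~\ref{lem:mdp_lem1} and~\ref{lem:mdp_lem2}). Your sketch is the standard privacy-loss martingale argument from that reference and is essentially correct; the only soft spot is the handling of the $m\epsilon'^{\,2}$ term, which you absorb ``up to absolute constants'' --- strictly, the stated identity $\epsilon' = \epsilon/\sqrt{8m\log(1/\delta)}$ is the common simplification of the DRV bound $\epsilon_{\mathrm{global}} = \epsilon'\sqrt{2m\ln(1/\delta)} + m\epsilon'(e^{\epsilon'}-1)$ valid in the small-$\epsilon'$ regime, and the paper uses it in exactly that spirit.
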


We next give a lemma that will be useful in proving marginal
differential privacy, given an intermediate step that is
differentially private.

\begin{lemma}
  Let $M^1: \cX^n \to O$ be $(\epsilon_1,\delta_1)$-differentially
  private and $M_j^2: \cX^{n-1}\times \cX \times O \to R$ for
  $j = 1, \cdots, n$ be $(\epsilon_2,\delta_2)$- differentially
  private in its first argument.  Then $M: \cX^n \to R^n$ is
  $(\epsilon_1+\epsilon_2,\delta_1+\delta_2)$-marginally
  differentially private where
\[M(\bbx) = (M_j^2(\bbx_{-j},x_j,M^1(\bbx)))_{j=1}^n.\]
\label{lem:mdp}
\end{lemma}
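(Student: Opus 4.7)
The plan is a direct application of the standard composition theorem (Theorem \ref{thm:comp}), applied coordinate by coordinate. Marginal differential privacy only constrains the $j$-th output of $M$ under changes to $x_i$ for $i \neq j$, so it suffices to fix an arbitrary $j \in [n]$ and show that the single-coordinate mechanism
$\tilde{M}_j(\bbx) := M_j^2(\bbx_{-j}, x_j, M^1(\bbx))$
is $(\epsilon_1+\epsilon_2, \delta_1+\delta_2)$-differentially private with respect to the restricted neighboring relation where $\bbx$ and $\bbx'$ differ only in some coordinate $i \neq j$.

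To put $\tilde{M}_j$ into the form that Theorem \ref{thm:comp} requires, I would introduce the auxiliary mechanism $\phi: \cX^n \times O \to R$ defined by $\phi(\bbx, y) := M_j^2(\bbx_{-j}, x_j, y)$, so that $\tilde{M}_j(\bbx) = \phi(\bbx, M^1(\bbx))$. The main verification step is to check that $\phi$ is $(\epsilon_2, \delta_2)$-differentially private in its first argument for neighboring $\bbx, \bbx'$ that differ only in a coordinate $i \neq j$: such a perturbation leaves $x_j$ unchanged and transforms $\bbx_{-j}$ into a neighbor in $\cX^{n-1}$ differing in exactly one entry (the $i$-th), which is precisely the neighboring relation under which the hypothesis on $M_j^2$ delivers an $(\epsilon_2, \delta_2)$-DP guarantee in its first argument. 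Note that no claim is needed for $i = j$, since Definition \ref{def:mdp} imposes no constraint in that case.

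With $\phi$ verified to be differentially private in its first argument for this sub-class of neighbors, and with $M^1$ already $(\epsilon_1, \delta_1)$-DP by hypothesis, Theorem \ref{thm:comp} applied pointwise to each such neighboring pair $(\bbx, \bbx')$ yields that $\tilde{M}_j = \phi(\bbx, M^1(\bbx))$ is $(\epsilon_1+\epsilon_2, \delta_1+\delta_2)$-differentially private under the same restricted neighboring relation. Since $j$ was arbitrary, this is exactly the marginal differential privacy guarantee of Definition \ref{def:mdp} for the mechanism $M$.

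The only subtlety is the observation that the conclusion of Theorem \ref{thm:comp} can be read pointwise over each pair of neighboring inputs, so restricting the permitted perturbations to those supported outside coordinate $j$ preserves the composition bound; beyond this, the proof is a matter of unwinding definitions, and I do not anticipate any genuine obstacle.
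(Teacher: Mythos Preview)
Your proposal is correct and follows essentially the same approach as the paper: both arguments reduce the marginal-DP claim for the $j$-th coordinate to a standard two-mechanism composition, using that a change in $x_i$ with $i\neq j$ perturbs only the first argument of $M_j^2$. The paper carries out the composition integral explicitly rather than citing Theorem~\ref{thm:comp} as a black box, but the underlying idea and the key observation (that the neighboring relation restricted to $i\neq j$ maps to a one-coordinate change in $\bbx_{-j}$) are identical.
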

\begin{proof}
  To prove marginal differential privacy we need to prove that the
  component $M(\bbx_{-j},x_j)_j$ is differentially private in its
  first argument, for every $j$.  Fix any index $i$ and $j$ such that
  $i\neq j$.  We then consider a $\bbx_{-i} \in \cX^{n-1}$,
  $x_i' \neq x_i$, and $S\subset R$, then we use composition of differentially private mechanisms to get the
  following:
{ \begin{align*}
&\Prob\left(M_j^2[\bbx_{-j},x_j, M^1(\bbx) ] \in S\right)
 = \int_{ O} \Prob( M_j^2( (\bbx_{-(i,j)},x_i),x_j,o )\in S ) \Prob(M^1(\bbx_{-i},x_i) = o) d o \\
 &\leq  \int_{ O} \Prob( M_j^2( (\bbx_{-(i,j)},x_i ),x_j,o )\in S )\left(e^{\epsilon_1} \Prob(M^1(\bbx_{-i},x'_i) = o) +\delta_1\right)d o \\
& \leq \int_{ O} \left(e^{\epsilon_2}\Prob( M_j^2( (\bbx_{-(i,j)},x'_i ),x_j,o )\in S ) +\delta_2\right)e^{\epsilon_1} \Prob(M^1(\bbx_{-i},x'_i) = o) d o  +\delta_1\\
& \leq e^{\epsilon_1+\epsilon_2} \Prob(M_j^2[ (\bbx_{-(i,j)},x'_i ),x_j , M^1(\bbx_{-i},x'_i) ] \in S) +\delta_1+\delta_2
        \end{align*}}
      Where in the last inequality we use the fact that if $e^{\epsilon_1}\Prob(M^1(\bbx_{-i},x'_i) = o) >1$ then we just replace the bound by $1$.\end{proof}

We now present two lemmas specific to our setting: one dealing with
the privacy of the noisy arc weights computed in $\pttc$ and the
other dealing with how agents are selected to trade.  We leave both
proofs to Appendix \ref{sec:appendix_pttc}.
\begin{lemma}\label{lem:mdp_lem1}
  The mechanism $\cM: \cX^n \to \R^{k\times k^2}$ that outputs all the
  noisy arc weights $(\hat w(t,0))_{t \in [k]}$ that is used in
  $\pttc $ is $(\epsilon_1,\delta_1)$-differentially private for
  $\delta_1>0$ and
\begin{equation}
\epsilon_1 = 2\epsilon'\cdot\sqrt{8k\log(1/\delta_1)}.
\label{eq:eps_1}
\end{equation}
\end{lemma}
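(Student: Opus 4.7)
My plan is to view $\cM$ as a $k$-round adaptive composition of Laplace mechanisms — one release of the noisy arc-weight vector per round of \pttc — and then invoke the advanced composition theorem to go from per-round privacy to end-to-end privacy.

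\emph{Step 1 (per-round sensitivity).} Fix a round $t \in [k]$ and condition on an arbitrary realization of the previously released noisy arc weights $\hat{w}(1,0), \ldots, \hat{w}(t-1,0)$. I would argue that, conditioned on this history, the true arc-weight vector $w(t,0)$ has $\ell_1$-sensitivity at most $2$ in $\bbx$. The core observation is that at any point during \pttc each uncleared agent contributes weight exactly $1$ to a single arc, namely the arc from her endowment to her favorite currently-available good; so switching one agent's data from $x_i$ to $x_i'$ can only reroute her to a different outgoing arc, changing two coordinates of $w(t,0)$ by $1$ each.

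\emph{Step 2 (per-round privacy).} Since $\hat w(t,0)$ is obtained by adding independent $\mathrm{Lap}(1/\epsilon')$ noise to each coordinate of $w(t,0)$, Theorem~\ref{thm:lap_priv} gives that this release, conditioned on the preceding releases, is $(2\epsilon', 0)$-differentially private in $\bbx$.

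\emph{Step 3 (adaptive composition).} Applying Theorem~\ref{thm:advanced_comp} to these $k$ releases with per-round parameter $\epsilon_0' = 2\epsilon'$, per-round slack $0$, and overall slack $\delta_1$ yields
\[
\epsilon_1 \;=\; 2\epsilon' \,\sqrt{8\,k\,\log(1/\delta_1)},
\]
which matches equation~\eqref{eq:eps_1}.

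The hard part is justifying Step~1 rigorously, because between noisy-weight releases \pttc runs $\rselect$ (whose output depends on the data-dependent sets $P_e$) and node deletions, so in principle the executions on $\bbx$ and $(\bbx_{-i}, x_i')$ could cascade and diverge in their states. I would handle this by coupling the uniform random variable $\sigma$ drawn inside each $\rselect$ call across the two executions, and observing that the cycles cleared and the integer quantity $\hat W$ traded on each cycle are pure functions of the (conditioned-on) noisy history, not of the true data. Consequently the two coupled states entering round~$t$ agree on every agent other than the switched agent~$i$, so at round~$t$ the vectors $w(t,0)$ on the two inputs differ in $\ell_1$ by at most $2$, preserving the sensitivity bound.
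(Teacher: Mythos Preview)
Your overall plan — $k$ adaptive Laplace releases, per-round parameter $2\epsilon'$, then Theorem~\ref{thm:advanced_comp} — is exactly what the paper does, and Steps~2 and~3 are fine. The gap is in your justification of the sensitivity bound in Step~1 for rounds $t\ge 2$.

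You correctly flag that the hard part is controlling how the two executions diverge through the $\rselect$ calls, but the resolution you propose does not work. The claim that, after coupling each $\sigma$, ``the two coupled states entering round~$t$ agree on every agent other than the switched agent~$i$'' is false. $\rselect$ relabels $P_e$ as $1,\dots,|P_e|$ and returns the block $\{\sigma{+}1,\dots,\sigma{+}W\}$ reduced $\bmod\,|P_e|$. When $i$ sits on arc~$e$ under $x_i$ but not under $x_i'$, both $|P_e|$ and the relabeling change; with the very same $\sigma$ the two selected blocks can disagree on many non-$i$ agents, not just on~$i$. Once the identities of the cleared agents diverge, the node-deletion step at the end of the round — which redistributes the surviving agents in each $P_{(u,v)}$ to new arcs according to \emph{their individual preferences} — can send the two runs to arc-weight vectors $w(t{+}1,0)$ that differ on more than two coordinates. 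So ``only $i$ moved'' does not survive past the first $\rselect$ call that touches $i$'s arc.

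The paper handles this by a case analysis rather than by claiming the states coincide. Conditioning on all prior randomness $(\hat w,\sigma)$, it splits on whether~$i$ has already been cleared under each of $x_i$ and $x_i'$, and in the asymmetric cases tracks a single ``displaced'' agent: when $i$ is selected in one run she bumps one agent out of that selection, who may in turn bump one agent at a later cycle, and so on. The paper argues that this chain contributes at most one additional unit of $\ell_1$-discrepancy beyond $i$'s own contribution, keeping the per-round sensitivity of $w(t,0)$ at~$2$. You would need some version of this displacement-tracking argument in place of the ``states agree off~$i$'' claim to complete Step~1.
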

\proof[\ifnum\final=0 Proof \fi Sketch] At each round of $\pttc$, we add Laplace noise with parameter
$1/\epsilon'$ to each arc weight.  Consider agent $i$ changing her input from
$x_i$ to $x_i'$ and fix all the other agents' inputs.  We want to bound the
sensitivity of the (exact) arc weights at any particular round.  If we fix
the randomness used in $\rselect$ that determines which agents are cleared,
and the previous rounds' noisy arc weights, then we can consider several
cases in how agent $i$ changing reports can affect the arc counts at the
next round. We show that no matter whether agent $i$ is selected earlier or
later on a different input, she can affect at most 2 entries in the vector of
arc weights at a particular round by at most 1. Hence, the sensitivity of the arc weights at a particular round is at most 2,
conditioned on the past rounds' randomness.  We then apply advanced
composition in Theorem \ref{thm:advanced_comp} over $k$ rounds to get
$\epsilon_1$.
\endproof
For the next lemma, we will condition on knowing the noisy arc weights that
are computed throughout $\pttc$ and compute the difference in distributions
between agent $j$ being selected at any round when agent $i\neq j$ changes
reports from $x_i$ to $x_i'$.

\begin{lemma}\label{lem:mdp_lem2}
  Let $\hat w$ be the noisy arc weights that are computed for every
  round of our algorithm $\pttc$.  We also assume that each entry of
  $\hat w$ satisfies \eqref{eq:webound}.  The mechanism
  $\cM_j:\cX^{n-1} \times \cX \times \R^{k \times k^2} \to \{0,\G\}$
  is $(\epsilon_2,\delta_2)$-differentially private in its first
  argument given $\hat w$, where
\[\cM_j(\bbx_{-j},x_j,\hat w) = \left\{\begin{array}{ll}
				g &\qquad  \text{If } \exists t\in [k],\tau\in [k^2] \text{ s.t. } j \in S_e(t,\tau) \text{ for } e=(g_j,g) \in C(t,\tau) \\
				0 &\qquad  \text{else}
				\end{array}\right..\]
with $\delta_2>0$ and
\begin{equation}
 \epsilon_2 = \frac{2k\sqrt{8k\log(1/\delta_2)}}{E} \label{eq:eps_2}.
 \end{equation}
\end{lemma}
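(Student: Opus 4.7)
The plan is to exploit the decomposition from Lemma~\ref{lem:mdp}: since Lemma~\ref{lem:mdp_lem1} has handled the privacy of the noisy weights $\hat w$, it remains to bound $\cM_j$'s sensitivity to a single change of another agent $i \neq j$'s report, \emph{conditional} on $\hat w$. The crucial structural observation is that once $\hat w$ is fixed, the entire sequence of cleared cycles $C(t,\tau)$ and their integer weights $\hat W(t,\tau)$ becomes deterministic: steps (2) and (4) of $\pttc$ consult only $\hat w$. Consequently the only remaining randomness in $\cM_j$ is in the $\rselect$ draws, and the only way $x_i$ can affect $\cM_j$'s output is through the arc cardinalities $w_e(t,\tau) = |P_e|$ that appear in those draws.

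The main technical ingredient is a per-step sensitivity bound. Fix a step $(t,\tau)$ with arc $e = (g_j, g) \in C(t,\tau)$ and suppose $j \in P_e$. For $Z \gg n$, $\sigma \bmod |P_e|$ is essentially uniform, so $\rselect(\hat W, P_e)$ picks $j$ with marginal probability $\hat W(t,\tau)/w_e(t,\tau)$. Switching $x_i$ to $x_i'$ alters the cardinality $w_e$ by at most one, because $i$ contributes at most a single element to any $P_e$. Combined with the low-error condition \eqref{eq:webound}, which gives $w_e - \hat W \geq w_e - \hat w_e \geq E$, this ensures that both the selection probability $\hat W/w_e$ and the non-selection probability $(w_e - \hat W)/w_e$ change by a multiplicative factor at most $w_e/(w_e-1) \leq 1 + 1/(E-1) \leq e^{2/E}$. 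Thus the single bit ``$j$ is selected at step $(t,\tau)$'' is $(2/E, 0)$-differentially private in $x_i$.

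Next I would apply advanced composition. By Lemma~\ref{lem:running_time}, $\pttc$ runs for at most $k$ rounds with at most $k^2$ cycles per round, so $\cM_j$'s output is a deterministic function of at most $m = k^3$ adaptively composed per-step Bernoulli events, each $(2/E, 0)$-DP in $x_i$. Invoking Theorem~\ref{thm:advanced_comp} with $\epsilon' = 2/E$ and $m = k^3$ yields $(\epsilon_2, \delta_2)$-differential privacy with
\[
\epsilon_2 \;=\; \frac{2}{E}\sqrt{8 k^3 \log(1/\delta_2)} \;=\; \frac{2k\sqrt{8k\log(1/\delta_2)}}{E},
\]
matching \eqref{eq:eps_2}. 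The hard part will be the adaptive-composition bookkeeping: I must verify that the $(2/E)$-DP bound on each per-step indicator genuinely holds conditional on \emph{every} possible prior history of $\rselect$ outcomes, even when an earlier $\rselect$ draw has already removed $i$ from the market on one input but left her active on the other. This follows because, given any fixed history and fixed $\hat w$, all agents other than $i$ occupy identical arcs in the two inputs (their reports and the remaining graph state are the same), so only $i$'s single arc-membership can differ, confirming the at-most-one change in $w_e$ regardless of the history.
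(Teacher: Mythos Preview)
Your overall architecture matches the paper's: conditionally on $\hat w$, the cycle sequence $C(t,\tau)$ and the clearing sizes $\hat W(t,\tau)$ are deterministic, so the only remaining randomness is in the $\rselect$ draws; you then bound the per-step privacy loss by $2/E$ and apply Theorem~\ref{thm:advanced_comp} over $k^3$ steps to reach \eqref{eq:eps_2}. That skeleton is exactly what the paper does.

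The gap is in your justification of the per-step sensitivity. The claim that ``all agents other than $i$ occupy identical arcs in the two inputs'' once the $\rselect$ history is fixed is false. $\rselect(\hat W,P_e)$ picks a cyclic block of length $\hat W$ starting at position $\sigma \bmod |P_e|$; when $|P_e|$ changes by one (because $i$ sits on $e$ in one input but not the other), the \emph{same} offset $\sigma$ reduces to a different residue and can pick out a completely different block of agents. Those displaced agents are not $i$, yet their continued presence in the market now differs across the two inputs, and when a node is later deleted they scatter to new arcs according to their own (possibly distinct) lower-ranked preferences. So an identical $\sigma$-history does \emph{not} force identical positions for agents $j'\neq i$. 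The paper handles this by invoking the case analysis from the proof of Lemma~\ref{lem:mdp_lem1}: rather than arguing that identities match, it argues only the weaker count-level statement that each single entry $w_e(t,\tau)$ differs by at most one between the two inputs, via a displacement/cascade bookkeeping (agent $i$'s change displaces an agent from selection, who may later displace another, with the discrepancy on any fixed arc never exceeding one). That cascade argument is the missing ingredient; you need to reproduce it or cite it in place of your identity claim.

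A minor slip: the non-selection ratio is not bounded by $w_e/(w_e-1)$ but by $(w_e-\hat W)/(w_e-\hat W-1)$, which is what actually uses $w_e-\hat W\geq E$. Your final numerical bound $1+1/(E-1)\le e^{2/E}$ is nonetheless correct.
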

\proof[\ifnum\final=0 Proof \fi Sketch] We consider a round $t$ and cycle $C$ such that agent $j$ is
on an arc $e \in C$ and so may or may not be selected at that cycle.  We then
bound the ratio between the probability that agent $j$ is selected when $i
\neq j$ reports $x_i$ to the probability of the same event when $i$'s input
is $x_i'$. We are given the number of people that are being selected (this is
a function of the noisy arc weights).  As we argued in the previous lemma, agent
$i$'s influence on the weight of any arc at any round by at most 1.
  We do a similar analysis when the event is $j$ not being selected.  Knowing this, and the
fact that the noisy arc weights satisfy \eqref{eq:webound}, we get that the
mechanism that determines whether $j$ is selected or not at a given round, as
part of a given cycle is $2/E$-differentially private in reports $\bbx_{-j}$.

We then apply advanced composition (Theorem \ref{thm:advanced_comp}) over the
at most $k^2$ different cycles that might need to be cleared on a single
round, and over the $k$ different rounds, to get the value for $\epsilon_2$.
\endproof
We are now ready to prove that $\pttc$ is marginally differentially private.
\begin{theorem}
For parameters $\epsilon,\delta_1,\delta_2,\beta>0$, $\pttc:\cX^n \to \G^n$ is $(\epsilon,\delta_1+\delta_2+\beta)$-marginally differentially private.
\label{thm:pttcprivate}
\end{theorem}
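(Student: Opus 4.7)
The plan is to combine the two building-block lemmas about $\pttc$ (Lemma~\ref{lem:mdp_lem1} on the noisy arc weights and Lemma~\ref{lem:mdp_lem2} on the per-agent allocation given those noisy weights) through the generic transfer result in Lemma~\ref{lem:mdp}, while absorbing the high-noise failure event from Claim~\ref{lowerror} into the $\delta$ term. Concretely, let $M^1$ be the mechanism that outputs the full collection of noisy arc weights across all $k$ rounds; by Lemma~\ref{lem:mdp_lem1} this is $(\epsilon_1,\delta_1)$-differentially private. Let $M_j^2$ be the mechanism that, given an agent $j$'s report $x_j$, the reports $\bbx_{-j}$ of the other agents, and a realization of the noisy arc weights $\hat w$, outputs agent $j$'s final allocation by simulating the trade and deletion steps of $\pttc$. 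By Lemma~\ref{lem:mdp_lem2}, conditioned on $\hat w$ satisfying the bounded-error event \eqref{eq:webound}, $M_j^2$ is $(\epsilon_2,\delta_2)$-differentially private in $\bbx_{-j}$. Observing that the full output $\pttc(\bbx) = (M_j^2(\bbx_{-j},x_j,M^1(\bbx)))_{j=1}^n$ has exactly the product form required by Lemma~\ref{lem:mdp}, that lemma yields $(\epsilon_1+\epsilon_2,\delta_1+\delta_2)$-marginal differential privacy on the bounded-noise event.

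Next I would handle the event that the noise bound \eqref{eq:webound} fails. By Claim~\ref{lowerror} this occurs with probability at most $\beta$, and in this case the clean-up step (Step 5) of $\pttc$ returns the identity allocation $\pi(i)=g_i$, independently of $\bbx_{-i}$. So the failure event contributes nothing to differential-privacy loss in terms of $\epsilon$ but must be absorbed into the additive slack. Formally, for any fixed $i\neq j$, any neighbors $x_i,x_i'$, and any $B\subseteq \G$, I would split
\[
\Prob(\pttc(\bbx_{-i},x_i)_j\in B) \leq \Prob(\pttc(\bbx_{-i},x_i)_j\in B,\ \mathcal{E}) + \beta,
\]
where $\mathcal{E}$ is the bounded-noise event, then apply the combined $(\epsilon_1+\epsilon_2,\delta_1+\delta_2)$ bound to the first term (which holds uniformly over realizations of $\hat w$ satisfying \eqref{eq:webound}) to obtain
\[
\Prob(\pttc(\bbx_{-i},x_i)_j\in B)\ \leq\ e^{\epsilon_1+\epsilon_2}\Prob(\pttc(\bbx_{-i},x_i')_j\in B) + \delta_1+\delta_2+\beta.
\]

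Finally, I would verify that the choice of $\epsilon'$ in \eqref{eq:eps_prime} and of $E=\log(k^3/\beta)/\epsilon'$ in \eqref{eq:error} makes $\epsilon_1+\epsilon_2=\epsilon$. Substituting $E$ into \eqref{eq:eps_2} gives $\epsilon_2 = 2k\epsilon'\sqrt{8k\log(1/\delta_2)}/\log(k^3/\beta)$, so by \eqref{eq:eps_1},
\[
\epsilon_1+\epsilon_2 \ =\ \frac{2\epsilon'\sqrt{8}}{\log(k^3/\beta)}\Bigl(\log(k^3/\beta)\sqrt{k\log(1/\delta_1)} + k\sqrt{k\log(1/\delta_2)}\Bigr),
\]
which equals $\epsilon$ after plugging in $\epsilon'$ from \eqref{eq:eps_prime}. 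Combining the three pieces proves the theorem.

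The only genuinely delicate step is justifying that Lemma~\ref{lem:mdp} still applies when Lemma~\ref{lem:mdp_lem2} is only guaranteed to hold conditioned on the bounded-noise event rather than pointwise over all $\hat w$. The cleanest way to handle this is to think of the clean-up step as a post-processing that replaces $M_j^2$ with the deterministic identity map whenever $\hat w\notin \mathcal{E}$; since identity is trivially $(0,0)$-differentially private in $\bbx_{-j}$, the conditional bound extends to all $\hat w$, and the $\beta$ mass on the failure event is absorbed into $\delta$ as above. This is where the careful interaction among Claim~\ref{lowerror}, Step 5 of $\pttc$, and the composition lemma is the real content; once it is in place, the remaining algebra is mechanical.
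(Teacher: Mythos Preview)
Your proposal is correct and follows essentially the same route as the paper: instantiate Lemma~\ref{lem:mdp} with $M^1=\cM$ from Lemma~\ref{lem:mdp_lem1} and $M_j^2=\cM_j$ from Lemma~\ref{lem:mdp_lem2}, absorb the failure of the bounded-noise event from Claim~\ref{lowerror} into the additive $\beta$, and then verify algebraically that the choice of $\epsilon'$ in \eqref{eq:eps_prime} forces $\epsilon_1+\epsilon_2=\epsilon$. Your explicit discussion of the ``delicate step'' (extending the guarantee of Lemma~\ref{lem:mdp_lem2} from the bounded-noise event to all $\hat w$ so that Lemma~\ref{lem:mdp} applies) is in fact more careful than the paper, which simply asserts the conclusion after noting that the bad event has probability at most $\beta$.
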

\begin{proof}
  We fix agents $i$ and $j$, where $i \neq j$.  Let us define $M^1$ as
  $\cM$ that outputs all the noisy arc weights for each round of
  $\pttc$ from Lemma \ref{lem:mdp_lem1} and $M^2_j$ as $\cM_j$ the
  good that $j$ ends up being matched with (or zero if never matched)
  from Lemma \ref{lem:mdp_lem2}.

  We first condition on the event that all the noisy arc weights
  computed by $M^1$ satisfy \eqref{eq:webound}.  We then apply Lemma
  \ref{lem:mdp} to get the composed mechanism $M:\cX^n \to \{0,\G\}^n$
  of our mechanisms $M^1$ and $(M^2_j)_{j=1}^n$ is
  $(\epsilon_1+\epsilon_2,\delta_1+\delta_2)$-marginally
  differentially private.  Note that $M$ and $\pttc$ have the same
  distribution of outcomes where if $M$ outputs $0$ to agent $j$,
  then we know $\pttc$ will give agent $j$ his own good type $g_j$.

  We have yet to consider the case when the noisy arc weights do not
  satisfy \eqref{eq:webound}.  However, this occurs with probability
  at most $\beta$.  We conclude then that $M$ (and thus $\pttc$) is
  $(\epsilon_1+\epsilon_2,\delta_1+\delta_2+\beta)$-marginally
  differentially private.  We then plug in the values for $\epsilon_1$
  in \eqref{eq:eps_1}, $\epsilon_2$ in \eqref{eq:eps_2}, $\epsilon'$
  in \eqref{eq:eps_prime}, and $E$ in \eqref{eq:error} to get
 \[ \epsilon_1 + \epsilon_2 = \frac{2\epsilon'\sqrt{8k}}{\log(k^3/\beta)}\cdot\left( \log(k^3/\beta)\sqrt{\log(1/\delta_1)} + k \sqrt{\log(1/\delta_2)} \right) = \epsilon\]
\end{proof}

We want $\delta_1,\delta_2$ and $\beta$ to be as small as possible
because this causes an additive difference in the probability
distributions between neighboring exchange markets.  We can then set
$\delta_1,\delta_2, \beta = \text{poly}(1/n)$ to still
get $\alpha = \tilde \Oh\left( \frac{k^{9/2}}{\epsilon n} \right).$

\section{Allowing a Small Supply of Goods to be Injected}
Without privacy
constraints, rather than running the top trading cycles algorithm, we could
solve the following linear program to obtain an IR and
PO exchange:
\begin{align}
\max_{\mathbf{z}} & \qquad \sum_{i \in [n]}\sum_{j \in \G} r_{ij} z_{ij} \label{eq:lp_match} \\
s.t. & \qquad \sum_{j \in \G} z_{ij} = n_j \qquad \forall i \in [n] \label{eq:LP_alloc} \\
& \qquad \forall j \in \G \text{ s.t. } g_i \succ_i j \qquad z_{i,j} = 0\qquad \forall i \in [n] \label{eq:LP_IR} \\
& \qquad \sum_{i \in [n]} z_{ij} = 1 \qquad \forall j \in \G \label{eq:LP_match} \\
& \qquad z_{ij} \in \{0,1\}. \label{eq:LP_int}
\end{align}
where we define $r_{ij} = k - r+1$ if $j \succeq_i g_i$ and $j$ is the $r$th
top choice of agent $i$.  If $g_i \succ_i j$ then $w_{ij} = 0$. Note that
since this is just a max-weight matching problem, the optimal solution
$\mathbf{x}^*$ will be integral, even if we relax the integer constraint. The
constraint in \eqref{eq:LP_IR} ensures that no agent is matched to a good
that is preferred less than her endowed good, which enforces IR. Finally, the optimal solution to this linear program is an
(exactly) Pareto optimal allocation: any other allocation that gave some
agents strictly preferred goods, without decreasing the quality of goods
given to other agents would have a strictly improved objective value,
contradicting optimality.  We thus have the following theorem:
\begin{theorem}
  A solution to the integer program (IP) in \eqref{eq:lp_match} - \eqref{eq:LP_int} is
  an IR and PO allocation.
\end{theorem}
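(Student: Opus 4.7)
The plan is to prove the two parts, IR and PO, separately, and in both cases to lean directly on the structure of the LP. Individual rationality is essentially immediate: constraint \eqref{eq:LP_IR} forces $z_{ij}=0$ whenever $g_i \succ_i j$, so the unique good assigned to agent $i$ (by constraint \eqref{eq:LP_alloc}, which must be read as a per-agent allocation constraint assigning exactly one good to each $i$) must satisfy $\pi(i) \succeq_i g_i$. So any feasible solution, let alone an optimal one, yields an IR allocation.

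For Pareto optimality, I would argue by contradiction. Let $\mathbf{z}^*$ be an optimal integral solution with corresponding allocation $\pi$, and suppose there is another allocation $\pi'$ such that $\pi'(i) \succeq_i \pi(i)$ for all $i$ with strict preference for at least one agent $j$. I first need to verify that the 0/1 vector $\mathbf{z}'$ induced by $\pi'$ is feasible for the IP. Feasibility of the supply/assignment constraints \eqref{eq:LP_alloc} and \eqref{eq:LP_match} is automatic since $\pi'$ is an allocation (a permutation preserving the multiset of goods). For the IR constraint \eqref{eq:LP_IR}, note that $\pi'(i) \succeq_i \pi(i) \succeq_i g_i$, where the second inequality uses that $\mathbf{z}^*$ itself is IR by the first part; hence $\pi'$ never assigns $i$ a good strictly worse than $g_i$, so $\mathbf{z}'$ satisfies \eqref{eq:LP_IR}.

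The remaining step is to compare objective values. Because $r_{ij} = k-r+1$ where $r$ is the rank of $j$ among $i$'s acceptable goods, the map $j \mapsto r_{ij}$ is \emph{strictly} monotone in $i$'s preference ordering on acceptable goods. Therefore $\pi'(i) \succeq_i \pi(i)$ implies $r_{i,\pi'(i)} \geq r_{i,\pi(i)}$, with strict inequality at $i=j$. Summing over agents gives
\[
\sum_i r_{i,\pi'(i)} \;>\; \sum_i r_{i,\pi(i)},
\]
so $\mathbf{z}'$ achieves a strictly larger objective than $\mathbf{z}^*$, contradicting optimality of $\mathbf{z}^*$.

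There is no real obstacle: the only mildly subtle point is the feasibility check for the Pareto-dominating allocation, which requires chaining the IR guarantee for $\mathbf{z}^*$ with the dominance assumption to conclude $\pi'$ also satisfies IR. Everything else is a direct reading of the constraints and the definition of $r_{ij}$. I would not even need the integrality of $\mathbf{z}^*$ for the PO argument (the comparison is between two integral allocations), and it is only invoked implicitly because the max-weight matching LP always has an integral optimum, which the theorem statement already relies on.
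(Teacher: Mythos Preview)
Your proof is correct and follows essentially the same approach as the paper: IR is read off directly from constraint \eqref{eq:LP_IR}, and PO is proved by contradiction via the strict monotonicity of the objective in each agent's preference order. If anything you are more careful than the paper, which glosses over the feasibility check for the Pareto-dominating allocation; your observation that $\pi'(i) \succeq_i \pi(i) \succeq_i g_i$ (chaining dominance with the already-established IR of $\mathbf{z}^*$) is exactly what is needed to ensure $\mathbf{z}'$ satisfies \eqref{eq:LP_IR}.
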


We leverage recent work \cite{Matching} and \cite{HHRW14} on computing
max-weight matchings subject to joint differential privacy.  We will
use the results of the latter paper to conclude that we can obtain an
asymptotically Pareto optimal, IR allocation that is also jointly
differentially private under a relaxation of our problem that allows
us to inject a small number of extra goods into the system.  We
present the following theorem in the context of the IP of
\eqref{eq:lp_match} - \eqref{eq:LP_int}.
\begin{theorem}[\cite{HHRW14}]
Let $OPT$ be the optimal objective value of \eqref{eq:lp_match}.  For
$\epsilon,\delta,\beta>0$ there exists an $(\epsilon,\delta)$-jointly
differentially private algorithm that produces fractional solution
$\bar{\mathbf{z}}$ such that $\sum_{j \in \G} \bar{z}_{ij} = 1$ and with
probability $1-\beta$
\begin{itemize}
\item[$\bullet$] We get a solution close to $OPT$, i.e. $\sum_{i \in [n]}
    \sum_{j \in \G} r_{ij} \bar{z}_{ij} \geq OPT - \eta$ where
\[\eta = \Oh\left( \frac{k^2 \log(nk/\beta)\log^{1/2}(n/\delta)}{\epsilon} \right)\]
\item[$\bullet$] The total amount all the constraints in
    \eqref{eq:LP_match} are violated by $ \bar{\mathbf{z}}$ is small, i.e.
    we have for each good type $j \in \G$,  $n_j - \lambda_j \leq \sum_{i
    \in [n]} \bar{z}_{ij} \leq n_j + \lambda_j$ where
\[\sum_{j \in \G} \lambda_j = \Oh\left( \frac{k  \log(nk/\beta)\log^{1/2}(n/\delta)}{\epsilon}\right)\]
\end{itemize}
\end{theorem}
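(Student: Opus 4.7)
The plan is to follow the dual-decomposition strategy developed in \cite{Matching, HHRW14} for private matching LPs. I would begin by taking the Lagrangian dual of the LP in \eqref{eq:lp_match}--\eqref{eq:LP_int}, relaxing only the supply-side constraints \eqref{eq:LP_match} by introducing a price $p_j$ for each good type $j \in \G$. Given prices $\mathbf{p}$, the primal subproblem decouples across agents: each $i$ independently picks $j$ to maximize $r_{ij} - p_j$ subject to her IR constraint \eqref{eq:LP_IR} and \eqref{eq:LP_alloc}. This separability is the structural property that lets us satisfy joint differential privacy via the \emph{billboard lemma}: if the entire price vector is computed in a (standard) differentially private way from the market $\bbx$, then each $i$'s allocation depends only on $x_i$ and on the common price signal, so the joint distribution of allocations to agents $j\neq i$ is $(\epsilon,\delta)$-DP in $x_i$.

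Next I would compute approximately optimal prices by an iterative no-regret / multiplicative weights procedure running for $T$ rounds. At each round $t$, given current prices $\mathbf{p}^t$, one queries the aggregate demand $d_j^t = \sum_i z_{ij}^t$ for each good $j$ (where $z_{i\cdot}^t$ is $i$'s best response to $\mathbf{p}^t$), adds fresh Laplace noise of scale $O(1/\epsilon')$ per coordinate, and updates the prices using the noisy excess demand $\hat d_j^t - n_j$. The crucial observation is that the sensitivity of the vector $(d_j^t)_{j\in\G}$ in any one agent's type is $O(1)$ (a single agent flipping her report changes at most two entries of the demand vector by $1$), so the per-round query is $\epsilon'$-DP via the Laplace mechanism (Theorem \ref{thm:lap_priv}). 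Advanced composition (Theorem \ref{thm:advanced_comp}) then gives an overall $(\epsilon,\delta)$-DP price trajectory by setting $\epsilon' \approx \epsilon / \sqrt{8T \log(1/\delta)}$. The billboard lemma then upgrades this to $(\epsilon,\delta)$-joint differential privacy for the allocation $\bar{\mathbf{z}}$ obtained by letting each agent best-respond to the final (or time-averaged) prices.

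For the utility guarantee I would combine two standard facts: (i) after $T$ no-regret steps, the average excess demand is bounded by $O(1/\sqrt{T})$ times the Lipschitz constant of the dual, and (ii) the Laplace noise per round contributes an additional $\tilde O(k/\epsilon')$ to the excess-demand error. Balancing $T$ against $\epsilon'$ yields the bound $\sum_j \lambda_j = \Oh\!\left(\frac{k\log(nk/\beta)\log^{1/2}(n/\delta)}{\epsilon}\right)$ on the total constraint violation, with the $\log(nk/\beta)$ factor coming from a union bound on the $T k$ Laplace tails. The suboptimality $\eta$ is then bounded by converting the dual infeasibility back into primal loss; because each unit of unmet demand can distort the objective by at most $O(k)$ (the range of $r_{ij}$), this inflates the violation bound by an extra factor of $k$, giving the claimed $k^2$.

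The main obstacle is executing the coupling between the privately-computed prices and the agents' best responses so that (a) the final $\bar{\mathbf{z}}$ satisfies $\sum_j \bar z_{ij}=1$ exactly (achieved by making agents deterministically pick a best response, possibly via an exponential-weights time-average), and (b) both the per-round sensitivity is truly $O(1)$ and the noise added for privacy cleanly translates into the stated $\eta$ and $\sum_j \lambda_j$ bounds. Everything else -- Laplace privacy, advanced composition, and the billboard reduction to joint DP -- is a direct invocation of the tools already assembled in Section~\ref{SEC:PTTC} and in \cite{Large, Matching, HHRW14}.
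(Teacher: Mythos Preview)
This theorem is not proved in the paper at all: it is imported wholesale from \cite{HHRW14} and merely restated in the context of the LP \eqref{eq:lp_match}--\eqref{eq:LP_int}. There is therefore no ``paper's own proof'' to compare your proposal against. Your sketch is a faithful high-level summary of the dual-decomposition-plus-billboard-lemma machinery that \cite{Matching,HHRW14} actually use, and the parameter accounting you outline (Laplace noise on per-round aggregate demand, advanced composition over $T$ rounds, conversion of dual infeasibility to primal suboptimality with an extra factor of $k$ from the range of $r_{ij}$) is the right shape for recovering the stated $\eta$ and $\sum_j\lambda_j$ bounds.

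If you were writing this up for real, the one place to be careful is your step (a): ensuring $\sum_j \bar z_{ij}=1$ exactly while keeping the solution a legitimate function of only $x_i$ and the public price trajectory. In \cite{HHRW14} this is handled by having each agent output the \emph{time average} of her per-round best responses, which is automatically a fractional assignment summing to one; the paper here then rounds via \citet{RT87}. Your ``exponential-weights time-average'' remark gestures at this but conflates it with the price-update rule; the agents' side is a straight average of integral best responses, not an exponential-weights distribution.
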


Note that the solution $\bar{\mathbf{z}}$ we obtain is fractional.  We
then use a randomized rounding technique from \citet{RT87} that has
each agent $i$ choose good $j$ with probability $\bar{z}_{ij}$.  By
applying a Chernoff bound, the resulting solution after applying
randomized rounding $\hat{z}_{ij}$ gives with probability at least
$1-\beta$
\begin{itemize}
\item[$\bullet$]
It is the case that  $\sum_{i \in [n]} \sum_{j \in \G} r_{ij} \hat{z}_{ij} \geq OPT -
  \hat\eta $ where
\[\hat\eta = \Oh\left( \frac{k^2\sqrt{n} \log^{3/2}(nk/\beta)  \log^{1/2}(n/\delta)}{\epsilon} \right)\]
\item[$\bullet$] The total amount all the constraints in \eqref{eq:LP_match} are violated by $\hat{\mathbf{z}}$ is small, i.e. $n_j - \hat\lambda_j \leq \sum_{i \in [n]} \hat{z}_{ij} \leq n_j + \hat\lambda_j$ where
\[
\sum_{j \in \G} \hat\lambda_j = \Oh\left( \frac{k \sqrt{n} \log^{3/2}(nk/\beta) \log^{1/2}(n/\delta)}{\epsilon}\right)
\]
\end{itemize}
Since the supply constraints are violated in the above solution, it is
infeasible, and cannot be implemented if the market is closed. Moreover, this
is inherent -- everything we have done here is subject to joint differential
privacy, for which we have proven a lower bound. However, if we have a supply
of \emph{extra} goods of each type (e.g. non-living kidney donors), then we
can use these extra goods to restore feasibility (Note that it is important
that the ``extra'' goods are not attached to agents who have IR constraints).
This allows us to circumvent our lower bound, and leads to the following
theorem:

\begin{theorem}
There exists an $(\epsilon, \delta)$ joint differentially private algorithm
that allocates goods in an exchange market that is $\alpha$ - PO with
probability $1-\beta$ and always IR which needs at most a total of $\Lambda$
extra goods to ensure everyone gets a good where
\[
\alpha = \Oh\left( \frac{k^2\log^{3/2}(nk/\beta) \log^{1/2}(n/\delta) }{\sqrt{n}\epsilon} \right) \quad\text{and}\quad \Lambda = \Oh\left( \frac{k \sqrt{n}  \log^{3/2}(nk/\beta) \log^{1/2}(n/\delta)}{\epsilon}\right)
\]
\end{theorem}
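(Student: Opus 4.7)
The plan is to solve the max-weight matching integer program \eqref{eq:lp_match}--\eqref{eq:LP_int} using the $(\epsilon,\delta)$-jointly differentially private fractional solver of \cite{HHRW14} recalled above, then to round the fractional assignment independently per agent using the scheme of \citet{RT87}, and finally to repair any supply-constraint violations in the rounded allocation by injecting extra goods from the external pool. Individual rationality falls out of the IR constraint \eqref{eq:LP_IR} baked into the LP, joint differential privacy is inherited through post-processing, and the $\alpha$-PO guarantee is read off from the additive objective loss of the rounded solution.

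Concretely, first invoke the HHRW14 algorithm to obtain $\bar z$ with $\sum_j \bar z_{ij} = 1$ for every agent $i$, objective value at least $OPT - \eta$, and total supply violation $\sum_j \lambda_j$. Then each agent $i$ independently samples $\pi(i) = j$ with probability $\bar z_{ij}$. The Chernoff bound arguments of \citet{RT87} are already packaged in the statement above, yielding that with probability $1-\beta$ the rounded solution $\hat z$ loses at most $\hat\eta$ in objective and violates supply constraints by a total of at most $\sum_j \hat\lambda_j$. Setting $\Lambda = \sum_j \hat\lambda_j$ and distributing the extra goods deterministically by type suffices to restore feasibility; since the extras carry no preferences or endowments, no private data is touched in this step.

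For individual rationality, \eqref{eq:LP_IR} forces $\bar z_{ij} = 0$ whenever $g_i \succ_i j$, so the local rounding step can never assign an agent a good she ranks below her endowment; hence IR holds with certainty. For joint differential privacy, note that $\bar z$ itself is $(\epsilon,\delta)$-JDP as a function of the market. The rounding executed for agent $j$ depends only on row $\bar z_{j,\cdot}$ together with independent randomness, so the joint distribution of $(\pi(j))_{j\neq i}$ is a post-processing of $\bar z_{-i,\cdot}$ (plus fresh coins) and therefore also $(\epsilon,\delta)$-JDP with respect to agent $i$'s data.

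The step I expect to be the main obstacle is translating the LP's additive objective approximation into a per-agent Pareto approximation at the claimed rate. The key observation is that the weight scheme $r_{ij} = k - r + 1$ makes each strict rank improvement contribute at least $1$ to the objective: if some allocation $\pi'$ strictly improves a set $S$ of agents without harming any other agent, then viewing $\pi'$ as an integral feasible solution on the augmented supply gives an objective that exceeds that of $\hat z$ by at least $|S|$. Combined with the bound $OPT - (\text{obj of }\hat z) \le \hat\eta$, this forces $|S| \le \hat\eta$, so the allocation is $\alpha$-PO for $\alpha = \hat\eta/n$, matching the stated bound. The remaining care is only to verify that $\pi'$ also respects the IR constraints (otherwise it would not be a legitimate Pareto-dominating allocation), which is immediate since Pareto domination of an IR allocation is itself IR.
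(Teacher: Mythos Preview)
Your proposal is correct and mirrors the paper's route exactly: invoke the \cite{HHRW14} fractional solver on the IR-constrained matching LP, round per agent via \citet{RT87}, patch supply violations with the external goods, and read off IR from constraint \eqref{eq:LP_IR}, JDP from post-processing of $\bar z_{-i}$, and $\alpha$-PO from the additive objective loss $\hat\eta$. One small point to tighten in your last paragraph: the Pareto-dominating $\pi'$ is by the paper's Definition an original-supply allocation, so $\text{obj}(\pi')\le OPT$ holds directly---your detour through the ``augmented supply'' is unnecessary and, taken literally, would not yield that upper bound; with $\text{obj}(\pi')\le OPT$ in hand, $|S|\le \text{obj}(\pi')-\text{obj}(\hat z)\le OPT-(OPT-\hat\eta)=\hat\eta$ and $\alpha=\hat\eta/n$ follows as you claim.
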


\section{Conclusion/Open Problems}

In this paper we have continued the study of the accuracy to which \emph{allocation problems} can be solved under parameterized relaxations of \emph{differential privacy}. Generically, these kinds of problems cannot be solved under the standard constraint of differential privacy. Unlike two sided allocation problems which can be solved under \emph{joint}-differential privacy, we show that Pareto optimal exchanges cannot be solved even under this relaxation, but can be solved asymptotically exactly under marginal differential privacy whenever the number of types of goods $k = o(n^{2/9})$. (We note that in many applications, such as kidney exchange, $k$ will be constant).

The two privacy solution concepts we have considered are only two extremes along a spectrum: informally, joint differential privacy protects the privacy of agent $i$ against an adversarial collusion of possibly all of the $n-1$ other agents in the market, acting against agent $i$. Similarly, marginal differential privacy protects agent $i$'s privacy only against a single agent $j \neq i$ in the computation, assuming she does not collude with anyone else. We propose a definition for future work which smoothly interpolates between joint and marginal differential privacy, which we call $m$-coalition differential privacy. The case of $m=1$ recovers marginal differential privacy and the case of $m=n-1$ recovers joint differential privacy: for $1 < m < n-1$, we get a sequence of privacy definitions smoothly interpolating between the two.
\begin{definition}[Coalition Differential Privacy]
We say that a mechanism $M: \mathcal{X}^n \to A^n$ is $m$-\emph{coalition} $(\epsilon,\delta)$-\emph{differentially private} if for any set $S \subseteq [n]$ with $|S|\leq m$ and for any $\mathbf{x} = (x_1,\cdots, x_n)$ and $x_i' \neq x_i$ where $i \notin S$ we have for any $B \subset O^{|S|}$
$$
\Prob\left( M(\mathbf{x}_{-i},x_i)_S \in B \right) \leq e^{\epsilon} \Prob\left( M(\mathbf{x}_{-i},x'_i)_S \in B\right) + \delta
$$
where $M(\mathbf{x})_S = (M(\mathbf{x})_j)_{j \in S}$
\end{definition}
We note that it is not true in general that $\epsilon$-marginal differential privacy implies to $m$-coalition $(m\epsilon)$-differential privacy, because the marginal distributions between players may be correlated, and so this study may require new tools and techniques.

It would also be interesting to give a privacy preserving algorithm that is not only individually rational and asymptotically Pareto optimal, but makes truthful reporting a dominant strategy. One difficulty is that in our setting (in which there are multiple copies of identical goods), agents do not have strict preferences over goods, and even the top trading cycles algorithm without privacy is not incentive compatible. However, there are other algorithms such as \citep{SS13,AM09,JM12}, that are incentive compatible in exchange markets that allow indifferences, so it may be possible. (It would also be interesting to find a connection between marginal differential privacy and incentive compatibility, like the known connections with differential privacy \cite{MT07} and joint differential privacy \cite{Large}).

\bibliographystyle{acmsmall}
\bibliography{refs_ttc}
\newpage

\appendix

\section{Formal Proofs from Lower Bounds Section \ref{SEC:LB}}\label{section:lb_proofs}
\proof[\ifnum\final=0 Proof \fi of Claim \ref{claim:bit}]
  Let $M$ be $(\epsilon,\delta)$-differentially private such that $\Prob(M(0) =
  0)> \frac{e^\epsilon+\delta}{e^\epsilon + 1}$ and $\Prob(M(1) = 1) >
  \frac{e^\epsilon+\delta}{e^\epsilon + 1}$.  By the definition of
  $\epsilon$-differential privacy, we have

\begin{align*}
\frac{e^\epsilon+\delta}{e^\epsilon + 1}& < \Prob(M(0) = 0) \leq e^\epsilon \Prob(M(1) = 0)+\delta\\
&  = e^\epsilon(1-\Prob(M(1) = 1)) +\delta< e^\epsilon \left(1-\frac{e^\epsilon+\delta}{e^\epsilon + 1}\right)+\delta,
\end{align*}
 giving a contradiction.
\endproof

\begin{proof}[\ifnum\final=0 Proof \fi of Theorem \ref{thm:lowerbound}]
  The proof proceeds by a reduction to Claim~\ref{claim:bit}. Suppose we
  had such an $(\epsilon,\delta)$-joint differentially private mechanism
  $M_J$, with $\alpha < 1-
  \frac{e^\epsilon+\delta}{(1-\beta)(e^\epsilon+1)}$. We show that we could
  use it to construct an $\epsilon$-differentially private mechanism
  $M: \{0,1 \} \to \{ 0,1\}$ that contradicts Claim~\ref{claim:bit}.
  We design an exchange market parameterized by the input bit $b$
  received by mechanism $M$ (see Figure~\ref{fig:thresh}). The market
  has two types of goods, $g_0$ and $g_1$ and $2n$ agents partitioned
  into two sets, $N_0$ and $N_1$ of size $n$ each. The agents $j \in
  N_1$ are endowed with good $g_1$ and strictly prefer good $g_0$
  (i.e. all such agents $j$ have preference $g_0 \succ_j g_1$). The
  agents in $N_0$ are endowed with good $g_0$. We assume $n-1$ of them strictly
  prefer good $g_1$ (i.e. all such agents $j$ have preference $g_1
  \succ_j g_0$). A distinguished agent, $i \in N_0$, selected among
  the $n$ agents in $N_0$ uniformly at random, has preference
  determined by bit $b$: $g_b \succ_i g_{1-b}$. (i.e. the $i$'th agent
  wishes to trade if $b = 1$, but prefers keeping her own good if $b =
  0$.)  We denote the vector of linear preferences that depends on $i$'s bit $b$ as $\succ(b)$.  We will refer to this exchange market as $\mathbf{x}(b) =
  (\mathbf{g},\succ(b) )\in \mathcal{X}^{2n}$ where $\mathbf{g} \in
  \{0,1 \}^{2n}$. We remark that when $b = 1$, the agents
  in $N_0$ are identical. \\
\begin{figure}[h]
\centerline{\includegraphics[width=4.5in]{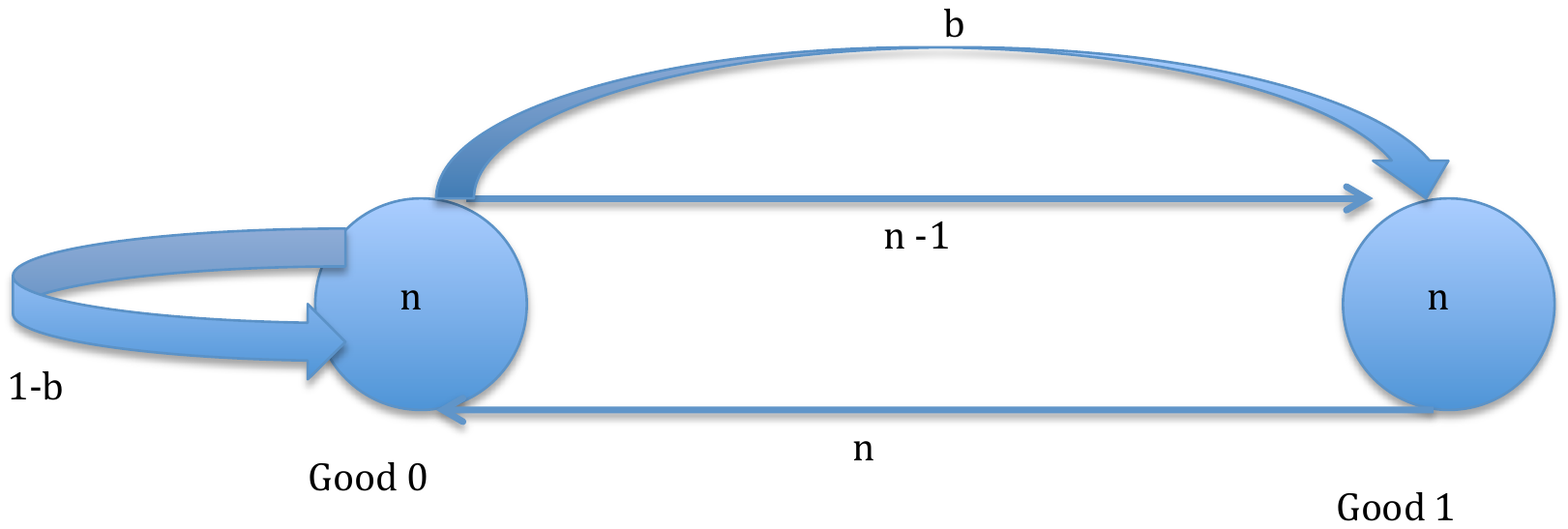}}
\caption{Depicting the exchange market considered in the proof of Theorem \ref{thm:lowerbound}.}
\label{fig:thresh}
\end{figure}

Let $M_J: \mathcal{X}^{2n} \to \G^{2n}$ be the $(\epsilon,\delta)$-joint
differentially private mechanism given in the statement of the theorem.
Note first that by the definition of joint differential privacy, the
mechanism $M': \{0,1\} \to \G^{n-1}$ defined as $M'(b) =
M_J(\mathbf{x}(b))_{-i}$ (which takes as input $b$ and outputs the
allocation of all $n-1$ agents $j \neq i$) is
$(\epsilon,\delta)$-differentially private.

In this construction, when $b = 0$, the IR constraint
requires that $M_J(\mathbf{x}(0))_i =0$ with probability $1$. Note
also that from the output of $M'(b)$, we can determine whether agent
$i$ engaged in trade: if not, we must have:
$$|\{j \in N_0\backslash i : M'(b)_j = g_1\}| = |{j \in N_1 : M'(b)_j = g_0}|$$
but if so, we must have:
$$|\{j \in N_0\backslash i : M'(b)_j = g_1\}| = |{j \in N_1 : M'(b)_j = g_0}| - 1$$

Define $f:\G^{2n-1}\rightarrow \{0,1\}$ to be the indicator function
of the event $|\{j \in N_0\backslash i : M'(b)_j = g_1\}| \neq |{j \in N_1 :
  M'(b)_j = g_0}|$. Define $M:\{0,1\}\rightarrow \{0,1\}$ to be $M(b)
= f(M'(b))$. Note that $M$ is $(\epsilon,\delta)$-differentially private by the
post-processing guarantee of differential privacy, and is an indicator
variable determining whether agent $i$ has traded in our exchange
economy.

If $M(b) = 1$, therefore, it must be that $b = 1$ by the individual
rationality guarantee of $M_J$.  (i.e. $\Prob(M(0) = 1) = 0 \implies
\Prob(M(0) = 0) = 1$).  Hence, by Claim~\ref{claim:bit} we must have
$\Prob(M(1) = 1) \leq \frac{e^\epsilon+\delta}{e^\epsilon + 1}$.  We
know by hypothesis that $M_J$ finds an $\alpha$-PO allocation with probability $1-\beta$.  If $b=1$ then every
agent wishes to trade in $\mathbf{x}(b)$ and hence with probability
$1-\beta$, $M_J$ must produce an allocation in which at least
$2n(1-\alpha)$ people trade. Since all agents in $N_1$ are identical,
and $i$ was selected uniformly at random, it must therefore be that
agent $i$ engages in trade with probability at least
$(1-\beta)(1-\alpha)$.  Thus, we have

\[\frac{e^\epsilon+\delta}{e^\epsilon + 1} \geq \Prob(M(1) = 1) \geq (1-\beta)(1-\alpha)\]
which gives us the conclusion of the theorem.
\end{proof}

\begin{proof}[\ifnum\final=0 Proof \fi of Theorem \ref{thm:lb-marginal}]
  Suppose we have an $(\epsilon,\delta)$-marginally differentially
  private mechanism $M_S$ which is IR and $\alpha$-PO. We will use it
  to construct some $(\epsilon,\delta)$-differentially private mechanism $M$, which will
  give a lower bound on $\alpha$ by Claim~\ref{claim:bit}.

  Suppose there are $k$ types of goods in the market, $1, \ldots,
  k$. Let $g_i$ represent the type of good with which agent $i$ is
  initially endowed. Figure \ref{fig:preferences} shows the favorite
  and second favorite goods of all $n$ agents \footnote{Note that,
    since we are only considering IR mechanisms, preferences need only
    be specified to the level where $i$ ranks $g_i$, and each agent
    $i$ in our example has good $g_i$ as her first or second choice.}.
  We will refer to this exchange market as $\mathbf{x}(b) =
  (\mathbf{g},\succ(b) )\in \mathcal{X}^{n}$ where $\mathbf{g} \in
  \{1,\cdots, k \}^{n}$. A single agent $k$ has preferences which
  are determined by bit $b$: if $b=0$, her favorite good is $g_k=k$
  her own, and if $b=1$, her favorite good is good $1$. In the case
  that $b=1$, agents $1, \ldots, k$ form a cycle with their favorite
  preferences: agent $i = 1, \cdots, k-1$ gets good $i+1$ and agent $k$ gets good 1, which would
  give each agent her favorite good. If, on the other hand, $b=0$,
  the uniquely IR trade is the $\pi(i) = g_i$, or no trade, since
  that instance contains no cycles, other than self loops.

  \begin{figure}[h]
    \begin{center}
    \begin{tabular}{|l|l|l|l|}
      \hline
      Agent & Endowment & Favorite Good & Second Favorite Good\\\hline
      1 & $1$ & $2$ & $1$ \\\hline
      2 & $2$ & $3$ & $2$ \\\hline
      \ldots & \ldots & \ldots & \ldots \\\hline
      $k-1$ & $k-1$ & $k$ & $k-1$ \\\hline
      $k$ (if $b = 0$) & $k$ & $k$ & Does not matter \\ \hline
      $k$ (if $b = 1$) & $k$ & $1$ & $k$ \\ \hline
      $k+1\ldots n$ & $k$ & $k$ & Does not matter\\ \hline
\end{tabular}
\end{center}
\caption{The endowments and preferences for Theorem~\ref{thm:lb-marginal}}
  \label{fig:preferences}
\end{figure}

Consider the mechanism $M_i(b) = M_S(\mathbf{x}(b))_i$ for $i \neq k$
(which is $(\epsilon, \delta)$-differentially private, since $M_S$ is
marginally differentially private). Let $f_1: \mathcal{G} \to \{0,1\}$ be the indicator function for
the event that $1$ receives good $2$; e.g. $f_1(1) = 0$ and
$f_1(2)=1$. Then, define $M'(b) = f_1(M_1(b))$, which is
also $(\epsilon, \delta)$-differentially private, due to $f$ being a post processing function. By individual
rationality, $M'(0) = 0$ with probability $1$.  Thus, by
Claim~\ref{claim:bit},

\[\Prob[M'(1) = 1] \leq\frac{e^\epsilon+\delta}{e^\epsilon + 1} \]

Thus,

\begin{align}
\Prob[M'(1) = 0] \geq 1 - \frac{e^\epsilon+\delta}{e^\epsilon + 1} \label{eq:lb}
\end{align}

When $M'(1) = 0$, $M_i(1) = g_i$ (each agent was
allocated her initial endowment). Consider the allocation $\pi(i) =
i+1 \mod k$ for $i\in [k]$ and $\pi(i) = g_i = k$ for $i= k+1, \cdots, n$. Agents $1, \ldots, k$ prefer $\pi$ to $M_S(\mathbf{x}(1))$
when $M'(1) = 0$, and all other agents are indifferent.  Since $M_S$
is $\alpha$-Pareto optimal, but there exists some $\pi$ which $k$ agents prefer
and no agent likes less,

\[\alpha\geq(1-\beta)\frac{k}{n}\Prob[M'(1) = 0] \geq
(1-\beta)\frac{k}{n}\left(1-\frac{e^\epsilon+\delta}{e^\epsilon +
    1}\right)\]

by Equation~\ref{eq:lb}, which completes the proof.
\end{proof}

\proof[\ifnum\final=0 Proof \fi of Corollary \ref{cor:copies}]
With the same hypotheses as Theorem \ref{thm:lb-marginal} we know that the number $k$ of types of goods must satisfy
 \begin{equation}
 k \leq \frac{n\alpha(e^\epsilon+1)}{(1-\delta)(1-\beta)}.
 \label{eq:copies}
 \end{equation}

 Hence, we must have some good with at least $\frac{(1-\delta)(1-\beta)}{\alpha(e^\epsilon+1)}$ copies, otherwise if we let $n_j$ be the number of goods of type $j \in [k]$ we have
 $$
 n = \sum_{j=1}^k n_j < k\left(\frac{(1-\delta)(1-\beta)}{\alpha(e^\epsilon+1)}\right) \underbrace{\leq}_{\text{\eqref{eq:copies}}} n
 $$ \endproof

\section{Appendix - Formal Algorithm PTTC}\label{sec:appendix}
\ifnum\final=0
\begin{algorithm}[!]
\caption{Private Top Trading Cycles}\label{PTTC}
\begin{algorithmic}[0]
\INPUT : Exchange Market $\mathbf{x} = (\mathbf{g},\succ)$.
\Procedure {\pttc} {$\bbx$}
\State Parameters: $\beta, \delta_1, \delta_2,\epsilon >0$
\State $\epsilon' =\frac{\epsilon \log(k^3/\beta)}{2\sqrt{8}\left( \log(k^3/\beta)\sqrt{k\log(1/\delta_1)} + k\sqrt{k\log(1/\delta_2)} \right) }$ and $E =  \frac{\log(k^3/\beta)}{\epsilon'}$.
\State Initialize $t\gets 1$, $\tau\gets 0$, and $P_e(1,0) \gets P_e$ is an ordered set of all $n$ agents by their index.
\begin{enumerate}
\item (Arc Weight Noise) For each $e \in A$ we set 
$$ \hat w_e(t,\tau) = w_e(t,\tau) + Z_e^t - 2E \qquad \text{ where } Z_e^t \sim Lap(1/\epsilon').$$
\item (Clear Cycle) while there is a cycle with positive weight, set  $\tau \gets \tau+1$

Denote the cycle by $C(t,\tau)$ and let $\hat W(t,\tau) \gets \min_{e \in C(t,\tau)} \{ \lfloor \hat w_e(t,\tau) \rfloor \}$. 
\item(Trade) For $e = (u,w) \in C^{\tau}_t$ 
\end{enumerate}
 \State $\qquad $ Set  $S_e(t,\tau)\gets \rselect(\hat W(t,\tau),P_e(t,\tau))$ and update:
  			$$
				P_e(t,\tau) \gets P_e(t,\tau-1) \backslash S_e(t,\tau)  \qquad
			$$
			$$
				w_e(t,\tau) \gets  w_e(t,\tau-1) - \hat W(t,\tau) \quad \& \quad \hat w_e(t,\tau) \gets \hat w_e(t,\tau-1) - \hat W(t,\tau)
			$$
			$$
				\pi(i) = v \quad \forall i \in S_e(t,\tau) \text{ where } g_i = u.
			$$
\begin{enumerate}
\setcounter{enumi}{3}
		\item(No Cycle - Deletion) If there is no cycle with positive rounded down noisy weight, then there exists a node $v$ s.t. $\hat n_v(t,\tau) = \sum_{e:e=(v,w)} \hat w_e (t,\tau) <k$ (Lemma \ref{lem:delete}).  Let
			$$
			IN_v = \{e \in A : e = (u,v) \text{ some } u \in V^{t} \} \quad \& \quad OUT_v = \{e \in A: e = (v,u) \text{ some } u \in V^{t} \}
			$$ 
\end{enumerate}
			$\qquad \qquad$ We then update:
			$$
				V^{t+1} \gets V^t\backslash\{v\} \qquad \& \qquad A \gets A \backslash \{OUT_v \cup IN_v\}.
			$$
			$\qquad \qquad$ For all $e = (u,w) \in A$  \\
				$\qquad \qquad \qquad$ Define $B_e(t) = \{ i \in P_{(u,v)}(t,\tau) : w \succeq_i g \quad \forall g \in V^{t+1}\}$, update:
			$$
				w_e(t+1,0) \gets w_e(t,\tau) + |B_e(t)| \quad \& \quad P_e(t+1,0) \gets P_e(t,\tau) \cup B_e(t)
			$$
				$\qquad \qquad\qquad $ and assign goods
			$$
				\pi(i) = g_i \qquad \forall i \in P_e(t,\tau) \text{ where } e \in OUT_v.  
			$$
			$\qquad \qquad$Set $t \gets t+1$ and $\tau \gets 0$.  Return to Step 1.  
\begin{enumerate}
\setcounter{enumi}{4}
		\item(Clean-up for IR)
\end{enumerate}
		$\qquad \qquad$ If $w_e(t,\tau)<0$ then set $\pi(i) = g_i$ $\forall i$ and HALT.\\
		$\qquad \qquad$ Otherwise, return to step 1.\\
\Return $\pi$
\EndProcedure
\end{algorithmic}
\end{algorithm}

\else

\begin{algorithm}[h]
\SetAlgoNoLine
\pttc ($\mathbf{g},\succ $) \;
\KwIn{Exchange Market $\mathbf{x} = (\mathbf{g},\succ)$.}
\KwOut{An allocation to all agents}
{
	Parameters: $\beta, \delta_1,\delta_2,\epsilon >0$ \\
	Set $\epsilon' =\frac{\epsilon \log(k^3/\beta)}{2\sqrt{8}\left( \log(k^3/\beta)\sqrt{k\log(1/\delta_1)} + k\sqrt{k\log(1/\delta_2)} \right) }$ and $E =  \frac{\log(k^3/\beta)}{\epsilon'}$\\
	Initialize $t\gets 1$, $\tau\gets 0$, and $P_e(1,0) \gets P_e$ is an ordered set of all $n$ agents by their index.
	\begin{enumerate}
		\item (Arc Weight Noise) For each $e \in A$ we set 
		$$ \hat w_e(t,\tau) = w_e(t,\tau) + Z_e^t - 2E \qquad \text{ where } Z_e^t \sim Lap(1/\epsilon').$$ 
		\item (Clear Cycle) {\bf while }  there is a cycle with positive weight {\bf do}  $\tau \gets \tau+1$
		\\
		Denote the cycle by $C(t,\tau)$ and let $\hat W(t,\tau) \gets \min_{e \in C(t,\tau)} \{ \lfloor \hat w_e(t,\tau) \rfloor \}$.
		\item (Trade) { \bf for } $e = (u,w) \in C(t,\tau)$ {\bf do} \newline
 			$\qquad $ Set  $S_e(t,\tau)\gets \rselect(\hat W(t,\tau),P_e(t,\tau))$.   
 			Update:
 			$$
				P_e(t,\tau) \gets P_e(t,\tau-1) \backslash S_e(t,\tau)  \qquad
			$$
			$$
				w_e(t,\tau) \gets  w_e(t,\tau-1) - \hat W(t,\tau) \quad \& \quad \hat w_e(t,\tau) \gets \hat w_e(t,\tau-1) - \hat W(t,\tau)
			$$
			$$
				\pi(i) = v \quad \forall i \in S_e(t,\tau) \text{ where } g_i = u.
			$$
		\item(No Cycle - Deletion) If there is no cycle with positive rounded down noisy weight, then there exists a node $v$ s.t. $\hat n_v(t,\tau) = \sum_{e:e=(v,w)} \hat w_e (t,\tau) <k$ (Lemma \ref{lem:delete}).  Let
			$$
			IN_v = \{e \in A : e = (u,v) \text{ some } u \in V^{t} \} \quad \& \quad OUT_v = \{e \in A: e = (v,u) \text{ some } u \in V^{t} \}
			$$ 
			We then update:
			$$
				V^{t+1} \gets V^t\backslash\{v\} \qquad \& \qquad A \gets A \backslash \{OUT_v \cup IN_v\}.
			$$
			{\bf for } all $e = (u,w) \in A$  {\bf do } \\
				$\qquad$ Define $B_e(t) = \{ i \in P_{(u,v)}(t,\tau) : w \succeq_i g \quad \forall g \in V^{t+1}\}$ and update:
			$$
				w_e(t+1,0) \gets w_e(t,\tau) + |B_e(t)| \quad \& \quad P_e(t+1,0) \gets P_e(t,\tau) \cup B_e(t)
			$$
				$\qquad $ and assign goods
			$$
				\pi(i) = g_i \qquad \forall i \in P_e(t,\tau) \text{ where } e \in OUT_v.  
			$$
			Set $t \gets t+1$ and $\tau \gets 0$.  Return to Step 1.  
		\item(Clean-up for IR) \\
		$\qquad \qquad$ {\bf if } $(w_e(t,\tau)<0)$ {\bf then } set $\pi(i) = g_i$ $\forall i$ and HALT.\\
		$\qquad \qquad$ {\bf else } Return to step (1).
		
	\end{enumerate}
\Return $\pi$
}
\caption{Private Top Trading Cycles}\label{PTTC}
\end{algorithm}
\fi

\section{Formal Proofs from Private Top Trading Cycles Section \ref{SEC:PTTC} }\label{sec:appendix_pttc}
\proof[\ifnum\final=0 Proof \fi of Claim \ref{lowerror}]
Recall that for a random variable $Z \sim $ Lap$(b)$ we have

\[\Prob(|Z| \geq \mu\cdot b) = e^{-\mu}\]

We then have, for $\mu = \log(k^3/\beta)$ and $b = \frac{1}{\epsilon'}$

\[\Prob\left( |Z| \geq \frac{ \log(k^3/\beta)}{\epsilon'}  \right) = \frac{\beta}{k^3}.\]

Now our algorithm in a fixed round $t$ will sample a new Laplace random variable $Z_e^t$ at
most $k^2$ times -  for each of the $O(k^2)$ arcs, one Laplace random
variable is sampled and there are as many as $k$ rounds according to Lemma \ref{lem:running_time}.
Hence, we can obtain the following bound,

\[\Prob\left( |Z_e^t| \leq \frac{ \log(k^3/\beta)}{\epsilon'} \quad
  \forall e \in A, \forall t \in [k] \right) \geq 1-\beta.\]

Hence we can then lower bound and upper bound the difference
between the error in the noisy arc weights and the actual arc weights
to get the relation in \eqref{eq:webound}:
\[E \leq w_e - \hat w_e= 2E - Z_e \leq 3E .\]
\endproof

\begin{proof}[\ifnum\final=0 Proof \fi of Lemma \ref{lem:mdp_lem1}]
We first fix any agent $i$ and agent types $\bbx=(\bbx_{-i},x_i)$ and $\bbx'=(\bbx_{-i},x'_i)$.  Let us first consider the noisy arc weights $\hat w(1,0)$ for round $t=1$.  We define $M^1:\cX^n \to \R^{k^2} $ as
$$
M^1(\bbx) = \hat w(1,0).
$$
We know that this computation just uses the Laplace Mechanism from Algorithm \ref{Lap_Mech} where we add Laplace noise with parameter $1/\epsilon'$ to each component of the exact edge weights $w(1,0)$.  When agent $i$ changes her data from $x_i$ to $x_i'$, she can change at most 2 entries in $w(1,0)$ by at most 1.  Thus, we have shown that $M^1$ is $2\epsilon'$-differentially private.  

Throughout round 1, agents will trade goods and be removed from consideration once they have a good.  To figure out how much the arc weights will be impacted in future rounds, we use the algorithm $\rselect$ that selects a given number of agents to trade (that is determined by $\hat w(1,0)$) from those that want to trade in a random way.  We denote here the random vector $\sigma_1^\tau$ where each of its entries correspond to the random value selected internally in $\rselect$.  This vector of values $\sigma_1^\tau$ determines who trades from each arc in cycle $C(1,\tau)$.    We write the randomness from round one as $r_1 = (\hat w(1,0),\sigma_1)$ where $\sigma_1 = (\sigma_1^\tau)_{\tau \in [k^2]}$.  

We will now assume that we have the randomness from all previous rounds, i.e. $\vec r_{t} = (r_1,\cdots, r_{t})$, and consider round $t+1$.  Our algorithm $\pttc$ can compute the exact arc weights at round $t+1$ as a function of these past random vectors and the data of the agents.  We now ask, what is the sensitivity of the exact arc weights at round $t+1$ when agent $i$ changes reports and we condition on all the prior randomization $\vec r_{t}$?  We consider the following cases when agent $i$ reports $x_i$ and compare to when $i$ reports $x_i'$ instead:
\begin{itemize}
\item[$\bullet$] Agent $i$ has yet to be selected after reporting $x_i$ at the start of round $t+1$.  
\begin{itemize}
\item[--] Agent $i$ has yet to be selected after reporting $x_i'$ at the start of round $t+1$.  In this case when we fix all the prior information, then the sensitivity in the actual arc count vector $w(t+1,0)$ is at most 2 because agent $i$ may move from one arc to another.
\item[--] Agent $i$ was selected at an earlier round $t'\leq t$ with report $x_i'$.  Fixing all the prior information, when agent $i$ was selected on cycle $C(t',\tau)$ for some $\tau$ when she reported $x_i'$, she must have caused some agent $i_{t',\tau}$ to not get chosen that was chosen when $i$ reported $x_i$.  This can further cascade to when agent $i_{t',\tau}$ gets chosen at a later cycle; someone that was selected when $i$ reported $x_i$ will not be selected when $i$ reports $x_i'$ instead.  However, once agent $i$ was selected when reporting $x_i'$, she has no further impact on the exact arc weights other than the person she displaced from her trade.  Further, when the people that were displaced by $i$ get matched, they are eliminated from further consideration.  Note that we are assuming that $i$ is still contributing weight to an arc when $i$ reported $x_i$, because $i$ has yet to be chosen when reporting $x_i$.  Thus in this case, conditioning on all the prior information, the sensitivity of the exact arc weight vector at round $t+1$ is at most $2$.  
\end{itemize}
\item[$\bullet$] Agent $i$ was selected at an earlier round, say $t' \leq t$ with report $x_i$.  
\begin{itemize}
\item[--] Agent $i$ was selected at an earlier round with report $x_i'$.  This follows a similar analysis as above when agent $i$ reports $x_i'$ and gets selected at round $t'\leq t$ when compared with $i$ getting chosen after round $t+1$ with report $x_i$.  
\item[--] Agent $i$ has yet to be selected at round $t+1$ with report $x_i'$.  We have already considered this case above, with the roles of $x_i$ and $x_i'$ switched.  
\end{itemize}
\end{itemize}
Thus, we have shown that conditioning on all the prior randomization, the exact arc counts $w(t+1,0)$ are $2$-sensitive.  

We now want to compute the noisy arc weights $\hat w(t+1,0)$ for round $t+1$ given $\vec r_{t}$.  We will denote $\cR^*$ to be the space where there can be an arbitrary number of random vectors, like $\vec r_{t}$.  We define $M^{t+1}:\cX^n \times \cR^* \to \R^{k^2}$ to be 
$$
M^{t+1}(\bbx, \vec r_{t}) = \hat w(t+1,0).  
$$

Hence for any given randomization vectors $\vec r_{t}$ the mechanism $M^{t+1}(\bbx,\vec r_t)$ is $2 \epsilon'$-differentially private in the database $\bbx$.  

The full mechanism $ \cM: \cX^n \to \R^{k\times k^2}$ that computes all the noisy arc weights in every round of $\pttc$ can be defined as 
$$
 \cM(\bbx) = (\hat w(t,0))_{t \in [k]}.
$$

We then apply advanced composition from Theorem \ref{thm:advanced_comp} and use the fact that the random integers $\sigma_t$ computed in $\rselect$ are chosen independent of the data to get that $ \cM$ is $(\epsilon_1,\delta_1)$-differentially private where $\delta_1>0$ and 
$$
\epsilon_1 = 2\epsilon' \cdot\sqrt{8k\log(1/\delta_1)}
$$
\end{proof}

\begin{proof}[\ifnum\final=0 Proof \fi of Lemma \ref{lem:mdp_lem2}]
We assume that we have the arc weights $\hat w$ for every round.  Consider round $t$ and cycle $\tau$ that has player $j \in P_e(t,\tau)$ on an arc $e \in C(t,\tau)$.  The probability that agent $j$ is selected depends on the exact number of people on edge $e$, that is $w_e(t,\tau)$ and the number of people $\hat W(t,\tau)$ that are being cleared along that cycle $C(t,\tau)$.  Recall in the proof of Lemma \ref{lem:mdp_lem1} we showed that the weights $w_e(t,0)$ at the beginning of round $t$ were a function of the reported types $\bbx$ and the prior randomization, that we called $\vec r_{t-1}$.  The randomization terms include the noisy arc weights for rounds before $t$, which are part of $\hat w$, but it also includes the random integers $\sigma$ calculated in $\rselect$ which is computed independent of the data.  

In order to find the exact number of people at an arc along cycle $C(t,\tau)$, we need to know the prior randomization terms and the random values $\sigma_t^1,\cdots, \sigma_t^{\tau-1}$. Further, we showed in the previous lemma that each entry $w_e(t,\tau)$ for $e \in C(t,\tau)$ may change by at most 1 when agent $i$ changes her reported type.  We let $w'_e(t,\tau) \in \{ w_e(t,\tau)-1, w_e(t,\tau), w_e(t,\tau)+1\}$ denote the number of people on arc $e \in C(t,\tau)$ when agent $i$ changes her type to $x_i'$, but the types of the other agents remains the same.  We will write the randomization terms as $\vec r_{t,\tau} = (\vec r_{t-1}, \hat w(t,0),\sigma_t^1,\cdots,\sigma_t^{\tau-1}) $.   We define the mechanism $M_j^{t,\tau}: \mathcal{X}^{n-1}\times\mathcal{X} \times R^* \to \{0\cup \G\}$ as
$$
M_j^{t,\tau}(\mathbf{x}_{-j},x_j, \vec r_{t,\tau}) = \left\{ \begin{array}{lr}
g & \text{ If } j \in S_e(t,\tau) \text{ for  $e= (g_j,g) \in C(t,\tau)$ } \\
0   & \text{ otherwise} 
							\end{array}
	\right.
$$
We will now make use of our assumption that the difference between the noisy arc weights and the actual arc weights satisfies \eqref{eq:webound}.  This gives us
\begin{align}
 \frac{\Prob\left( M_j^{t,\tau}(\mathbf{x}_{-j},x_j, \vec r_{t,\tau}) = g \right) }{\Prob\left(M_j^{t,\tau}((\mathbf{x}_{-i,j},x_i'),x_j, \vec r_{t,\tau})= g\right) } &  = \frac{\frac{\hat W(t,\tau)}{w_e(t,\tau)}}{\frac{\hat W(t,\tau)}{w_e'(t,\tau)}} \leq \frac{\frac{\hat W(t,\tau)}{w_e(t,\tau)}}{\frac{\hat W(t,\tau)}{w_e(t,\tau) + 1}} \nonumber \\
 & = \frac{w_e(t,\tau)+1}{w_e(t,\tau)} \leq 1+ \frac{1}{E} \leq e^{1/E}
 \label{eq:selected}
\end{align}
The first inequality follows from $w'_e(t,\tau) \leq w_e(t,\tau)+1$.  The second to last inequality comes from the
fact that the cycle always has an arc with $\hat w_e = w_e + Z_e - 2E \geq 1
\implies w_e \geq E+1$.  

We now consider the case when $j$ is not selected at round $t$, cycle $\tau$ when his arc is on the cycle.
\begin{align}
 \qquad\qquad &\frac{\Prob\left(M_j^{t,\tau}(\mathbf{x}_{-j},x_j, \vec r_{t,\tau}) =0 \right) }{\Prob\left( M_j^{t,\tau}((\mathbf{x}_{-i,j},x_i'),x_j, \vec r_{t,\tau}) =0\right) }   = \frac{1-\frac{\hat W(t,\tau)}{w_e(t,\tau)} }{1-\frac{\hat W(t,\tau)}{w_e'(t,\tau)}} \leq \frac{1-\frac{\hat W(t,\tau)}{w_e(t,\tau)}}{1-\frac{\hat W(t,\tau)}{w_e(t,\tau)-1}}  \nonumber \\
 & = \frac{(w_e(t,\tau) - \hat W(t,\tau)) (w_e(t,\tau) - 1)}{w_e(t,\tau) \cdot (w_e(t,\tau) - \hat W(t,\tau)-1)} \leq \frac{w_e(t,\tau) - \hat W(t,\tau)}{w_e(t,\tau) - \hat W(t,\tau) - 1}.
 \label{eq:intermediate}
\end{align}
The first inequality holds because $w'_e(t,\tau) \geq w_e(t,\tau) - 1$.  Recall that we have $w_e(t,\tau) -
\hat w_e(t, \tau) \geq E$ from \eqref{eq:webound} $\implies w_e(t,\tau) - \hat W(t,\tau) \geq E$.
Hence, we can further bound our ratio in \eqref{eq:intermediate} by
\begin{equation}
 \frac{\Prob\left(M_j^{t,\tau}(\mathbf{x}_{-j},x_j, \vec r_{t,\tau}) =0 \right) }{\Prob\left( M_j^{t,\tau}((\mathbf{x}_{-i,j},x_i'),x_j, \vec r_{t,\tau}) =0\right) } \leq \frac{E}{E-1} \leq 1+ \frac{2}{E} \leq e^{2/E}.
 \label{eq:notselected}
\end{equation}
Hence, fixing all randomness prior to selecting people at cycle $C(t,\tau)$, each $M_j^{t,\tau}$ is $2/E$-differentially private with respect to the database $\mathbf{x}_{-j}$ for all $\tau \in [k^2]$ and $t \in [k]$.  The mechanism $\cM_j$ in the statement of the lemma is then just a composition of the mechanisms $M_j^{t,\tau}$ over all rounds $t \in [k]$ and cycles $\tau \in [k^2]$ \footnote{In the theorem statement we are conditioning on the weights $\hat w$ instead of all prior randomness $\vec r$.  The only additional terms in $\vec r$ that are not $\hat w$ are the values $\sigma$ that determine who trades at each cycle in all prior rounds.  However, these values $\sigma$ are chosen independently of our data and so do not change our privacy analysis if we condition on $\vec r$ or $\hat w$.}.  We have shown that $M_j^{t,\tau}$ is $2/E$-differentially private in data $\bbx_{-j}$.  Thus, applying advanced composition in Theorem \ref{thm:advanced_comp}, we get that $\cM_j$ is $(\epsilon_2,\delta_2)$ differentially private for $\delta_2>0$ and 
$$
\epsilon_2 = 2k \sqrt{8k \log(1/\delta_2)}/E
$$
\end{proof}

\end{document}